\documentclass[orivec]{llncs} 

\usepackage{proof}
\usepackage{latexsym}
\usepackage{amssymb}
\usepackage{amsmath}

\newcommand\nil{{\bf 0}}
\newcommand\pch[1]{{\,{}_{#1}\oplus \,}}

\def\llarrow{-\!\!\!\longrightarrow}
\def\Llarrow{=\!\!\!\Longrightarrow}
\def\pand{\bigwedge}
\def\psum{\bigoplus}

\newcommand\interp[1]{\lbrack\!\lbrack #1 \rbrack\!\rbrack}

\newcommand\one[3]{#1 \stackrel{#2}{\llarrow} #3 }
\newcommand\ar[1]{\stackrel{#1}{\llarrow}}
\newcommand{\nar}[1]{ \not \stackrel{#1}{\longrightarrow} }
\newcommand\dar[1]{\stackrel{#1}{\Llarrow}}
\newcommand\barb[1]{\downarrow_{#1}}

\def\pmay{\sqsubseteq_{pmay}}
\def\pmust{\sqsubseteq_{pmust}}

\def\probpi{\pi_p}

\newcommand\bigstep[3]{#1 \stackrel{#2}{\Llarrow} #3}

\newcommand\sstep[1]{\stackrel{#1}{\llarrow}}
\newcommand\bstep[1]{\stackrel{#1}{\Llarrow}}

\def\Dcal{{\cal D}}

\def\Fcal{{\cal F}}
\def\Lcal{{\cal L}}
\def\Pcal{{\cal P}}
\def\Rcal{{\cal R}}

\def\land{\wedge}
\newcommand\Ref[1]{{\bf ref}(#1)}
\def\val{{\mathbb V}}

\def\simrel{\sqsubseteq}
\def\simord{\triangleleft_S}
\def\simpreo{\sqsubseteq_S}
\def\failsimord{\triangleleft_{FS}}
\def\failsimpreo{\sqsubseteq_{FS}}
\def\simordl{{\overline{\triangleleft}}}
\def\lleq{\sqsubseteq_\Lcal}
\def\fleq{\sqsubseteq_\Fcal}

\newcommand\pdist[1]{\delta[#1]}
\newcommand\lift[1]{\;\overline{#1}\;}
\newcommand\supp[1]{\lceil {#1} \rceil}

\newcommand\ldia[1]{\langle #1 \rangle}

\title{Characterisations of Testing Preorders for a
Finite Probabilistic $\pi$-Calculus}
\author{Yuxin Deng\inst{1} \and Alwen Tiu\inst{2}}
\institute{
Department of Computer Science and Engineering,
Shanghai Jiao Tong University
\and
Research School of Computer Science,
The Australian National University\\
}

\pagestyle{plain}

\begin{document}

\maketitle

\begin{abstract}
We consider two characterisations of the may and must testing preorders for a
probabilistic extension of the finite $\pi$-calculus: one based on 
notions of probabilistic weak simulations, and the other on a
probabilistic extension of a fragment of Milner-Parrow-Walker modal
logic for the $\pi$-calculus.  We base our notions of simulations on the
similar concepts used in previous work for probabilistic CSP.
However, unlike the case with CSP (or other non-value-passing
calculi), there are several possible definitions of simulation for the
probabilistic $\pi$-calculus, which arise from different ways of scoping
the name quantification.  We show that in order to capture the
testing preorders, one needs to use the ``earliest'' simulation
relation (in analogy to the notion of early (bi)simulation in the
non-probabilistic case). The key ideas in both characterisations are
the notion of a ``characteristic formula'' of a probabilistic process,
and the notion of a ``characteristic test'' for a formula.  As in an
earlier work on testing equivalence for the $\pi$-calculus by Boreale and De
Nicola, we extend the language of the $\pi$-calculus with a mismatch
operator, without which the formulation of a characteristic test will
not be possible.

\vskip12pt
Keywords: 
Probabilistic $\pi$-calculus; Testing semantics; Bisimulation; Modal logic
\end{abstract}

\section{Introduction}

We consider an extension of a finite version (without replication or recursion)
of the $\pi$-calculus~\cite{Milner92IC2} with a probabilistic choice operator, alongside
the non-deterministic choice operator of the $\pi$-calculus. Such an extension has been
shown to be useful in modelling protocols and their 
properties, see, e.g., \cite{Norman09,Chatzikokolakis07}. 
The combination of both probabilistic and non-deterministic choice has long been a subject
of study in process theories, 
see, e.g., \cite{Hansson90,Yi92,Segala94CONCUR,Deng08LMCS}. 
In this paper, we consider a natural
notion of preorders for the probabilistic $\pi$-calculus, based on 
the notion of {\em testing}~\cite{Nicola84,Hennessy88}.
In this testing theory, one defines a notion of  test, what it means
to apply a test to a process, the outcome of a test, and how the outcomes
of tests can be compared. 
In general, the outcome of a test can be any non-empty set, endowed with
a (partial) order; in the case of the original theory, this is simply a two-element lattice,
with the top element representing success and the bottom element representing failure. 
In the probabilistic case, the set of outcomes is the unit interval [0,1], denoting
probabilities of success, with the standard mathematical ordering $\leq$. 
In the presence of non-determinism, it is natural to consider a set of such probabilities
as the result of applying a test to a process. Two standard approaches for comparing
results of a test are the so-called Hoare preorder, written $\simrel_{Ho}$, and
the Smyth preorder, $\simrel_{Sm}$~\cite{Hennessy82}:
\begin{itemize}
\item $O_1 \simrel_{Ho} O_2$ if for every $o_1 \in O_1$ there exists $o_2 \in O_2$ such that $o_1 \leq o_2.$
\item $O_1 \simrel_{Sm} O_2$ if for every $o_2 \in O_2$ there exists $o_1 \in O_1$ such that $o_1 \leq o_2.$
\end{itemize}
Correspondingly, these give rise to two semantic preorders for processes: 
\begin{itemize}
\item {\em may-testing}: 
$P \simrel_{pmay} Q$ iff for every test $T$, $Apply(T,P) \simrel_{Ho} Apply(T,Q)$  
\item {\em must-testing}:
$P \simrel_{pmust} Q$ iff for every test $T$, $Apply(T,P) \simrel_{Sm} Apply(T,Q)$,
\end{itemize}
where $Apply(T,P)$ refers to the result of applying the test $T$ to process $P$.

We derive two characterisations of both may-testing and must-testing: one based 
on a notion of probabilistic weak (failure) simulation~\cite{Segala94CONCUR}, 
and the other based on a modal logic obtained by extending 
Milner-Parrow-Walker (MPW) modal logic
for the (non-probabilistic) $\pi$-calculus~\cite{Milner93TCS}. 

The probabilistic $\pi$-calculus that we consider here is a variant of 
the probabilistic $\pi$-calculus considered in \cite{Chatzikokolakis07}, 
but extended with the mismatch operator. 
As has already been observed in the testing semantics for the non-probabilistic
$\pi$-calculus~\cite{Boreale95IC},
the omission of mismatch would result in a strictly less discriminating test. 
This is essentially due to the possibility of two kinds of output transitions 
in the $\pi$-calculus,
a bound-output action, which outputs a new name, e.g., $\bar x(w).0$,
and a free-output action, e.g., $\bar x y.0.$ Without the mismatch operator, the two processes
are related via may-testing, because the test cannot 
distinguish between output of a fresh
name and output of an arbitrary name (see \cite{Boreale95IC}).

The technical framework used to prove the main results in this paper
is based on previous works
on probabilistic CSP (pCSP)~\cite{Deng07ENTCS,Deng08LMCS}, an extension of Hoare's CSP~\cite{Hoare85}
with a probabilistic choice operator. 
This allows us to adapt some proofs and results from
\cite{Deng07ENTCS,Deng08LMCS} that are not calculus-specific.
The name-passing feature of the $\pi$-calculus, however, gives rise to several
difficulties not found in the non-name-passing calculi such as pCSP,
and it consequently requires new techniques to deal with.
For instance, there is not a canonical notion of (weak) simulation in the $\pi$-calculus,
unlike the case with pCSP.
Different variants arise from different ways of scoping the name quantification 
in the simulation clause dealing with input transitions, e.g., the ``early'' 
vs. the ``late'' variants of (bi)simulation~\cite{Milner92IC2}. In the case of weak simulation,
one also gets a ``delay'' variant of (bi)simulation~~\cite{Ferrari95,Sangiorgi96,vanGlabbeek96}. 
As we show in Section~\ref{sec:sim}, the right notion of simulation is the early variant, as all other
weak simulation relations are strictly more discriminating than the early one. 
Another difficulty is in proving congruence properties, a prerequisite
for the soundness of the (failure) simulation preorders. The possibility of performing a `close'
communication in the $\pi$-calculus requires a combination of closure
under parallel composition and name restriction (see Section \ref{sec:sound}).
We use the so-called ``up-to'' techniques~\cite{Sangiorgi98MSCS} 
for non-probabilistic calculi to prove these congruences.

We show that $\simrel_{pmay}$ coincides with
a simulation preorder $\simpreo$ and a preorder $\simrel_{\Lcal}$
induced by a modal logic $\Lcal$ extending the MPW logic. 
Dually, the must-testing preorder is shown to coincide with
a failure simulation preorder, $\failsimpreo$, and 
a preorder $\simrel_{\Fcal}$ induced by a modal logic $\Fcal$ extending $\Lcal.$
For technical reasons in proving the completeness result of (failure) simulation,
we make use of testing preorders involving vector-based testing ($\pmay^\Omega$ and 
$\pmust^\Omega$ below). 
The precise relations among these preorders are as follows:
$$
\simpreo ~ \subseteq ~ \pmay ~ = ~ \pmay^\Omega ~ \subseteq ~ \lleq ~ \subseteq ~ \simpreo
$$
$$
\failsimpreo ~ \subseteq ~ \pmust ~ = ~ \pmust^\Omega ~ \subseteq ~ \fleq ~ \subseteq ~ \failsimpreo.
$$
The proofs of these inclusions are subjects of Section~\ref{sec:sound}, Section~\ref{sec:modal}
and Section~\ref{sec:comp}. Let us highlight the characterisations of may-testing preorder.
As with the case with pCSP \cite{Deng08LMCS}, the key idea to the proof
of the inclusion $\lleq\; \subseteq\; \simpreo$  is to show that for each process $P$, there
exists a {\em characteristic formula} $\varphi_P$ such that
if $Q\models \varphi_P$ then $P\simpreo Q$.
The inclusion $\pmay^\Omega \; \subseteq\;  \lleq$ is proved by showing that
for each formula $\varphi$, there exists a {\em characteristic test}
$T_\varphi$ such that for all process $P$, $P \models \varphi$ iff
$P$ passes the test $T_\varphi$ with some threshold testing outcome.

\section{Processes and probabilistic distributions}
\label{sec:pi}

We consider an extension of the  (finite) $\pi$-calculus 
with a probabilistic choice operator, $\pch p$, where $p \in (0,1].$
We shall be using the late version of the operational
semantics, formulated in the reactive style (in the sense of \cite{vanGlabbeek95}) 
following previous work \cite{Deng07ENTCS,Deng08LMCS}. 
The use of the late semantics allows for a straightforward definition
of characteristic formulas (see Section~\ref{sec:modal}), which are used in the
completeness proof. 
So our testing equivalence is essentially a ``late'' testing
equivalence. However, as has been shown in \cite{Ingolfsdottir95,Boreale95IC},
late and early testing equivalences coincide for value-passing/name-passing 
calculi.

We assume a countably infinite set of {\em names}, ranged over by $a,b,x,y$ etc. Given a name $a$, its {\em co-name} is $\bar a.$
We use $\mu$ to denote a name or a co-name. 
Process expressions are generated by the following two-sorted grammar:
\[\begin{array}{rcl}
P & ::= & s \mid P {\pch p} P \\
s & ::= & \nil \mid a(x).s \mid \bar a x.s \mid [x=y]s \mid [x\not = y]s \mid s + s \mid  s | s \mid \nu x.s
\end{array}\]
We let $P,Q,...$ range over process terms defined by this grammar, and
$s,t$ range over the subset $S_p$ comprising only the state-based process
terms, i.e. the sub-sort $s$.

The input prefix $a(x)$ and restriction $\nu x$ are name-binding contructs; $x$ in this case
is a bound name. We denote with $fn(P)$ the set of free names in $P$
and $bn(P)$ the set of bound names. The set of names in $P$ (free or bound)
is denoted by $n(P).$
We shall assume that bound names are different from each other and
different from any free names.
Processes are considered equivalent modulo renaming of bound names.
Processes are ranged over by $P$,$Q$,$R$, etc. 
We shall refer to our probablistic extension of the $\pi$-calculus
as $\probpi.$

We shall sometimes use an $n$-ary version of the binary operators.
For example, we use $\bigoplus_{i \in I} p_iP_i$, where $\sum_{i\in I} p_i = 1$,
to denote a process obtained by several applications of the probabilistic choice
operator. Simiarly, $\sum_{i \in I} P_i$ denotes several applications of 
the non-deterministic choice operator $+.$
We shall use the $\tau$-prefix, as in $\tau.P$, as an abbreviation of
$\nu x(x(y).\nil \mid \bar x x. P),$ where $x,y \not \in fn(P).$

In this paper, we take the viewpoint that a probabilistic process
represents an unstable state that may probabilistically evolve into
some stable states. Formally, we describe 
 unstable states as distributions and stable states as state-based
 processes. 
Note that in a state-based process, probablistic choice can only
appear under input/output prefixes. 
The operational semantics of $\probpi$ will be defined only
for state-based processes.

Probabilistic distributions are ranged over by $\Delta.$
A {\em discrete probabilistic distribution} over a set $S$ is a mapping 
$\Delta : S \rightarrow [0,1]$ with $\sum_{s \in S} \Delta(s) = 1.$
The {\em support} of a distribution $\Delta$, denoted
by $\supp \Delta$, is the set $\{s \mid \Delta(s) > 0 \}.$ 
From now on, we shall restrict to only probabilistic distributions
with finite support, and we let $\Dcal(S)$ denote the collection
of such distributions over $S.$ 
If $s$ is a state-based process, then $\pdist s$ denote 
the point distribution that maps $s$ to $1.$ 
For a finite index set $I$, given $p_i$ and distribution $\Delta_i$, for each $i\in I$, 
such that $\sum_{i\in I} p_i = 1$, we define another probability distribution
$\sum_{i\in I} p_i \cdot \Delta_i$ as 
$(\sum_{i\in I} p_i \cdot \Delta_i)(s) = \sum_{i\in I} p_i \cdot \Delta_i(s),$
where $\cdot$ here denotes multiplication. 
We shall sometimes write this distribution as a summation
$p_1 \cdot \Delta_1 + p_2 \cdot \Delta_2 + \ldots + p_n \cdot \Delta_n$
when the index set $I$ is $\{1,\ldots,n\}.$

Probabilistic processes are interpreted
as distributions over state-based processes as follows.
\[\begin{array}{rcl}
\interp s & ::= & \pdist s \ \mbox{ for $s\in S_p$}\\
\interp {P \pch p Q} & ::= & p \cdot \interp P + (1-p) \cdot \interp Q
\end{array}\]
Note that for each process term $P$ the distribution $\interp P$ is
finite, that is it has finite support. 

A transition judgment can take one of the following forms:
$$
\one{s}{a(x)}{\Delta}
\qquad
\one{s}{\tau}{\Delta}
\qquad
\one{s}{\bar a x}{\Delta}
\qquad
\one{s}{\bar a(x)}{\Delta}
$$
The action $a(x)$ is called a {\em bound-input action};
$\tau$ is the silent action; $\bar ax$ is a {\em free-output action}
and $\bar a(x)$ is a {\em bound-output action}.
In actions $a(x)$ and $\bar a(x)$, $x$ is a bound name.
Given an action $\alpha$, we denote with $fn(\alpha)$ the set
of free names in $\alpha$, i.e., those names in $\alpha$ which
are not bound names. The set of bound names in $\alpha$
is denoted by $bn(\alpha)$, and the set of all names (free and bound)
in $\alpha$ is denoted by $n(\alpha).$
The free names of a distribution is the union of free names
of its support, i.e., 
$
fn(\Delta) = \bigcup \{fn(s) \mid s \in \supp \Delta \}.
$

A substitution is a mapping from names to names; substitutions are
ranged over by $\rho, \sigma$ and $\theta.$ 
A substitution $\theta$ is a {\em renaming substitution}
if $\theta$ is an injective map, i.e., $\theta(x) = \theta(y)$
implies $x = y$. 
A substitution is extended to a mapping between processes
in the standard way, avoiding capture of free variables.
We use the notation $s[y/x]$ to denote the result of substituting free
occurrences of $x$ in $s$ with $y.$
Substitution is lifted to a mapping between distributions as follows:
$$
\Delta[y/x] (s) =
\sum \{\Delta(s') \mid s'[y/x] = s \}.
$$
It can be verified that $\interp {P[y/x]} = \interp P [y/x]$
for every process $P.$

The operational semantics is given in Figure~\ref{fig:pi}.
The rules for parallel composition and restriction use an obvious notation for
distributing an operator over distributions, for example:
\[\begin{array}{rcl}
(\Delta_1 ~|~ \Delta_2)(s) & = &
\left\{
\begin{array}{ll}
\Delta_1(s_1) \cdot \Delta_2(s_2) & \hbox{ if $s = s_1 | s_2$ } \\
0 & \hbox{ otherwise}
\end{array}
\right .
\\
(\nu x.\Delta)(s) & = & 
\left\{
\begin{array}{ll}
\Delta(s') & \hbox{ if $s = \nu x.s'$ } \\
0 & \hbox{ otherwise.}
\end{array}
\right.
\end{array}\]
The symmetric counterparts of \textbf{Sum}, \textbf{Par}, \textbf{Com} and \textbf{Close}
are omitted. 
The semantics of $\pi_p$ processes 
is presented 
in terms of simple probabilistic automata \cite{Segala94CONCUR}.

\begin{figure}[t]
$$
\begin{array}{l}
\infer[\textbf{Act}]
{\one {\alpha.P}{\alpha}{\interp{P}}}{}
\hspace{2.2cm}
\infer[\textbf{Sum}]
{\one{s + t}{\alpha}{\Delta}}
{
 \one{s}{\alpha}{\Delta}
}
\medskip\\

\infer[\textbf{Match}]
{\one {[x=x]s} {\alpha} {\Delta}}
{\one s \alpha \Delta}
\qquad\qquad
\infer[\textbf{Mismatch}, x \not = y]
{\one {[x\not = y]s} {\alpha} {\Delta}}
{\one s \alpha \Delta}
\medskip\\

\infer[\textbf{Par}, bn(\alpha)\cap fn(t) = \emptyset]
{\one {s ~ | ~ t}{\alpha} {\Delta ~|~ \pdist t} }
{
 \one {s}{\alpha}{\Delta}
}
\medskip\\

\infer[\textbf{Com}]
{\one {s ~|~ t}{\tau}{\Delta_1[y/x] ~|~ \Delta_2}}
{
 \one {s}{a(x)}{\Delta_1}
 &
 \one {t}{\bar a y}{\Delta_2}
}
\qquad
\infer[\textbf{Close}]
{\one{s ~|~ t}{\tau}{\nu w. (\Delta_1 ~|~ \Delta_2)}}
{
 \one{s}{a(w)}{\Delta_1}
 &
 \one{t}{\bar a(w)}{\Delta_1}
}
\medskip\\

\infer[\textbf{Res}, x \not \in n(\alpha)]
{\one {\nu x.s}{\alpha}{\nu x.\Delta}}
{\one {s}{\alpha}{\Delta}}
\qquad
\infer[\textbf{Open}, y \not = x, y \not \in fn(\nu z.s)]
{\one {\nu z.s}{\bar x(y)}{\Delta[y/z]}}
{\one {s}{\bar x z}{\Delta}}
\end{array}
$$

\caption{The operational semantics of $\probpi$.}
\label{fig:pi}
\end{figure}

\section{Testing probabilistic processes}
\label{sec:test}

As standard in testing theories~\cite{Nicola84,Hennessy88,Boreale95IC}, 
to define a test, we introduce
a distinguished name $\omega$ which can only be used in tests and
is not part of the processes being tested.
A {\em test} is just a probabilistic process with possible free
occurrences of the name $\omega$ as channel name in output prefixes,
i.e., a test is a process which may have subterms of the form
$\bar \omega a.P$. Note that the object of the action prefix
(i.e., the name $a$) is irrelevant for the purpose of testing.
Note also that it makes no differences whether the
name $\omega$ appears in input prefixes instead of output prefixes;
the notion of testing preorder will remain the same.  
Therefore we shall often simply write $\omega.P$ to denote $\bar \omega a.P$,
and $\one {P} {\omega} {\Delta}$ to denote
$\one {P}{\bar \omega a}{\Delta}.$
The definitions of may-testing preorder, $\pmay$, and must-testing preorder, $\pmust$, 
have already been given in the introduction, but we left out the definition of
the $Apply$ function. This will be given below. 

Following \cite{Deng07ENTCS}, to define the $Apply$ function, 
we first define a {\em results-gathering function} $\val: S_p \rightarrow {\cal P}([0,1])$
as follows:
$$
\val(s) =
\left\{
\begin{array}{ll}
\{1\} & \qquad \hbox{if $\one s \omega {}$} \\
\bigcup \{\val(\Delta) \mid \one s \tau \Delta \} & \qquad 
\hbox{if $s \not \stackrel{\omega}{\longrightarrow} {}$ but $\one s {\tau} {}$} \\
\{0\} & \qquad \hbox{ otherwise.}
\end{array}
\right.
$$
Here the notation ${\cal P}([0,1])$ stands for the powerset of $[0,1]$, and
we use $\val(\Delta)$ to denote the set of probabilities
$\{\sum_{s\in \supp \Delta} \Delta(s) \cdot p_s \mid p_s\in\val(s)\}$.
The $Apply$ function is then defined as follows:  
given a test $T$ and a process $P$, 
$$
Apply(T,P) = \val(\interp {\nu \vec x.(T~|~P)})
$$
where $\{\vec x\}$ is the set of free names in $T$ and $P$, excluding $\omega.$
So the process (or rather, the distribution) $\nu \vec x.(T ~|~P)$
can only perform an observable action on $\omega.$

\paragraph{Vector-based testing.}
Following \cite{Deng08LMCS}, we introdude another approach of testing called {\em vector-based testing}, which will play an important role in Section~\ref{sec:comp}.
 
Let $\Omega$ be a set of fresh success actions different from any
normal channel names. An $\Omega$-test is a $\pi_p$-process, but
allowing subterms $\omega.P$ for any $\omega\in\Omega$. Applying such
a test $T$ to a process $P$ yields a non-empty set of test
outcome-tuples $Apply^\Omega(T,P)\subseteq [0,1]^\Omega$. 
For each such tuple, its $\omega$-component gives the probability of successfully performing action $\omega$.

To define a results-gathering function for vector-based testing, we need some auxiliary notations. 
For any action $\alpha$ define $\alpha!:[0,1]^\Omega\rightarrow[0,1]^\Omega$ by
\[\alpha!o(\omega)=\left\{\begin{array}{ll}
1 & \mbox{if $\omega=\alpha$}\\
o(\omega) & \mbox{otherwise}
\end{array}\right.\]
so that if $\alpha$ is a success action in $\Omega$ then $\alpha!$ updates the tuple $1$ at that point, leaving it unchanged otherwise, and when $\alpha\not\in\Omega$ the function $\alpha!$ is the identity. For any set $O\subseteq [0,1]^\Omega$, we write $\alpha!O$  for the set $\{\alpha!o \mid o\in O\}$. 
For any set $X$ define its \emph{convex closure} $\updownarrow X$ by
\[\updownarrow X ~:=~ \{\sum_{i\in I}p_i\cdot o_i \mid o_i\in X \mbox{ for each $i\in I$ and $\sum_{i\in I}p_i=1$}\}.\]
Here, $I$ is assumed to be a finite index set. Finally, zero vector $\vec{0}$ is given by $\vec{0}(\omega)=0$ for all $\omega\in\Omega$. Let $S_p^\Omega$ be the set of state-based $\Omega$-tests.
\begin{definition}
\label{def:vector-based-results}
The vector-based results-gathering function $\val^\Omega:S_p^\Omega\rightarrow \Pcal([0,1]^\Omega)$ is given by
\[\val^\Omega(s)~:=~ \left\{\begin{array}{ll}
\updownarrow\bigcup\{\alpha!(\val^\Omega(\Delta)) \mid s\ar{\alpha}\Delta\} & \mbox{if $s\rightarrow$}\\
\{\vec{0}\} & \mbox{otherwise}
\end{array}\right.\]
The notation $s\rightarrow$ means that $s$ is not a deadlock state, i.e. there is some $\alpha$ and $\Delta$ such that $s\ar{\alpha}\Delta$.
For any process $P$ and $\Omega$-test $T$, we define 
$Apply^\Omega(T,P)$ as  $\val^\Omega(\interp{\nu\vec{x}.(T|P)})$, where
$\{\vec x\} = fn(T,P) - \Omega.$
The vector-based may and must preorders are given by
\[\begin{array}{rcl}
P \pmay^\Omega Q & \mbox{ iff } & \mbox{for all $\Omega$-test $T: Apply^\Omega(T,P) \simrel_{Ho} Apply^\Omega(T,Q)$}\\
P \pmust^\Omega Q & \mbox{ iff } & \mbox{for all $\Omega$-test $T: Apply^\Omega(T,P) \simrel_{Sm} Apply^\Omega(T,Q)$}\\
\end{array}\]
where $\simrel_{Ho}$ and $\simrel_{Sm}$ are the Hoare and Smyth preorders on $\Pcal([0,1]^\Omega)$ 
generated from $\leq$ index-wise on $[0,1]^\Omega$.
\end{definition}
Notice a subtle difference between the definition of $\val^\Omega$ above and the definition
of $\val$ given earlier. In $\val^\Omega$, we use {\em action-based testing}, i.e., 
the actual execution of $\omega$ constitutes a success.  
This is in contrast to the {\em state-based testing} in $\val$, where a success is defined
for a state where a success action $\omega$ is possible, without having to 
actually perform the action $\omega.$ In the case where there is no divergence,
as in our case, these two notions of testing coincide; see \cite{Deng08LMCS} for more
details.

The following theorem can be shown by adapting the proof of Theorem 6.6 in \cite{Deng08LMCS}, which states a general property about probabilistic automata \cite{DGMZ07}.
\begin{theorem}\label{thm:multi-uni}
Let $P$ and $Q$ be any $\pi_p$-processes.
\begin{enumerate}
\item $P\pmay^\Omega Q$ iff $P\pmay Q$
\item $P\pmust^\Omega Q$ iff $P\pmust Q$.
\end{enumerate}
\end{theorem}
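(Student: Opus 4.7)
The plan is to adapt the proof of Theorem~6.6 in \cite{Deng08LMCS} to the $\probpi$ setting, exploiting the fact that finite $\probpi$ has no divergence (no replication or recursion), so the state-based testing used in $\val$ and the action-based testing used in $\val^\Omega$ give the same preorders on singleton $\Omega$. I will prove the may-case and obtain the must-case by a dual argument.

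The easy direction is $\pmay^\Omega \Rightarrow \pmay$: given a scalar test $T$, view it as an $\Omega$-test with $\Omega = \{\omega\}$. Under the isomorphism $[0,1]^{\{\omega\}} \cong [0,1]$, the vector result set and the scalar result set have the same Hoare closure, because the convex-closure operator $\updownarrow$ preserves Hoare and Smyth preorders generated by $\leq$ (convex combinations of a set lie between its min and max). So any $\Omega$-distinguishing scalar test already distinguishes under $\pmay$, and the inclusion follows.

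The hard direction is $\pmay \Rightarrow \pmay^\Omega$. Fix an $\Omega$-test $T$ with $\Omega = \{\omega_1,\ldots,\omega_n\}$ and a weight vector $\vec{p} = (p_1,\ldots,p_n)$ with $p_i \geq 0$ and $\sum_i p_i = 1$. Define a scalar test $T_{\vec{p}}$ by replacing every prefix $\omega_i.R$ appearing in $T$ with the term $\omega.R \pch{p_i} R'$, where $R'$ is the process obtained from $R$ by the same translation but which cannot itself report success (any remaining $\omega_j$ is erased). The key lemma is that
\[
Apply(T_{\vec{p}}, P) ~=~ \updownarrow \{ \vec{p} \cdot \vec{o} \mid \vec{o} \in Apply^\Omega(T,P) \},
\]
which is shown by induction on the structure of the derivation, using $\alpha!$ to track exactly the $p_i$ mass fed into the scalar success channel at each success action. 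Using this identity, suppose $P \not\pmay^\Omega Q$ via $T$. Then there exists $\vec{o}^* \in Apply^\Omega(T,P)$ such that no $\vec{o}' \in Apply^\Omega(T,Q)$ satisfies $\vec{o}^* \leq \vec{o}'$. Since $Apply^\Omega(T,Q)$ is convex and (by the finiteness of $\probpi$) the outcome sets are compact, a standard separating-hyperplane argument yields a nonnegative weight vector $\vec{p}$, which we may normalise so $\sum_i p_i = 1$, such that $\vec{p} \cdot \vec{o}^*$ strictly exceeds $\sup_{\vec{o}' \in Apply^\Omega(T,Q)} \vec{p} \cdot \vec{o}'$. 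By the lemma, $T_{\vec{p}}$ then witnesses $P \not\pmay Q$.

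The main obstacle will be establishing the displayed equation relating $Apply(T_{\vec{p}},P)$ and $Apply^\Omega(T,P)$: one must track the interaction between probabilistic choice, parallel composition, restriction and name passing through both results-gathering functions, and argue that the convex closures on both sides match. The name-passing features do not actually change the combinatorics of the outcome sets (the success actions never carry names relevant to the reduction), but they do require checking that the replacement $\omega_i \mapsto (\omega.R \pch{p_i} R')$ commutes with restriction and substitution. Once the equation is secured, the separation argument and the must-case (which uses the Smyth preorder and a dual minimisation) follow the pCSP template essentially verbatim.
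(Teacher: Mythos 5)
Your overall architecture is the one the paper itself defers to (Theorem~6.6 of \cite{Deng08LMCS}, resting on \cite{DGMZ07}): the easy direction by reading a scalar test as an $\Omega$-test with $|\Omega|=1$, and the hard direction by separating a non-dominated outcome vector from the convex, compact set $Apply^\Omega(T,Q)$ with a nonnegative functional $\vec{p}$ and then realising $\vec{o}\mapsto\vec{p}\cdot\vec{o}$ by a scalar test. The separation step is sound. The gap is in the realisation step: your translation $\omega_i.R\mapsto \omega.R \pch{p_i} R'$, with $R'$ unable to report success, only ever credits the \emph{first} success action encountered along a run, whereas the definition of $\val^\Omega$ lets a single resolution score several success actions (after an $\omega_i$-step it keeps gathering results and merely applies $\omega_i!$ to them). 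Concretely, take $\Omega=\{\omega_1,\omega_2\}$, $T=\omega_1.\omega_2.\nil$ and $P=\nil$: then $Apply^\Omega(T,P)=\{(1,1)\}$, so the right-hand side of your key equation is $\{p_1+p_2\}$, but $T_{\vec{p}}=(\omega.(\cdots))\pch{p_1}\nil$ yields $Apply(T_{\vec{p}},P)=\{p_1\}$. So the displayed equation is false, and worse, the quantity your scalar test actually computes ($\sum_i p_i$ times the probability that $\omega_i$ is the \emph{first} success) is not even a function of $\vec{o}$, so the separating-hyperplane argument cannot be run against it.

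This is repairable, but the repair is precisely the content you cannot skip: either prove that every $\Omega$-test may be replaced, without changing the induced preorders, by one in which each run performs at most one success action, or unfold the finite test into a tree and translate along paths with renormalised biases --- at the $k$-th success action $\omega_{i_k}$ on a path, report $\omega$ with probability $p_{i_k}/(1-p_{i_1}-\cdots-p_{i_{k-1}})$ --- so that the overall success probability telescopes to $\sum_k p_{i_k}=\vec{p}\cdot\vec{o}$ (this uses $\sum_i p_i\le 1$, which your normalisation supplies). A second, smaller point: in the easy direction the mismatch between $\val$ and $\val^\Omega$ is not just the convex closure. For $s=\omega.\nil+\tau.\nil$ one has $\min\val(s)=1$ but $\min\val^\Omega(s)=0$, so especially for the must-case you genuinely need the cited fact that state-based and action-based testing induce the same \emph{preorders} in the absence of divergence; an isomorphism of outcome sets for a fixed test is not available.
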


\section{Simulation and Failure Simulation}
\label{sec:sim}

To define simulation and failure simulation, we need to generalise the transition relations between
states and distributions to those between distributions and distributions. 
This is defined via a notion of lifting of a relation. 

\begin{definition}[Lifting \cite{Deng09CONCUR}]
\label{def:lifting}
Given a relation $\Rcal \subseteq S_p \times \Dcal(S_p)$,
define a {\em lifted relation} $\overline \Rcal \subseteq \Dcal(S_p) \times \Dcal(S_p)$ as the smallest relation that satisfies
\begin{enumerate}
\item $s \Rcal \Theta$ implies $\pdist{s} \lift\Rcal \Theta$
\item (Linearity) $\Delta_i \lift\Rcal \Theta_i$ for all $i\in I$ implies $(\sum_{i\in I}p_i\cdot\Delta_i) \lift\Rcal (\sum_{i\in I}p_i\cdot\Theta_i)$ for any $p_i\in [0,1]$ with $\sum_{i\in I}p_i = 1$.
\end{enumerate}
\end{definition}

The following is a useful properties of the lifting operation.
\begin{proposition}[\cite{Deng07ENTCS}]
\label{prop:lifting}
Suppose $\Rcal \subseteq S \times \Dcal(S)$ and $\sum_{i \in I} p_i = 1.$ If $(\sum_{i\in I} p_i \cdot \Delta_i) \lift \Rcal \Theta$ then 
$\Theta = \sum_{i \in I} p_i \cdot \Theta_i$ for some set of distributions $\Theta_i$ such that
$\Delta_i \lift \Rcal \Theta_i$ for all $i\in I$.
\end{proposition}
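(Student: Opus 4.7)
The plan is to reduce the statement to a ``point-distribution normal form'' for the lifted relation, and then re-partition that normal form along the given convex decomposition of $\Delta := \sum_{i\in I} p_i \cdot \Delta_i$.

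First, I would establish the following normal form lemma: whenever $\Delta \lift{\Rcal} \Theta$, there exist a finite index set $J$, states $s_j \in S$, distributions $\Theta_j \in \Dcal(S)$, and weights $q_j \in [0,1]$ with $\sum_{j\in J} q_j = 1$, such that
\[
\Delta \;=\; \sum_{j\in J} q_j \cdot \pdist{s_j},
\qquad
\Theta \;=\; \sum_{j\in J} q_j \cdot \Theta_j,
\qquad
s_j \,\Rcal\, \Theta_j \text{ for each } j\in J.
\]
This is proved by rule induction on the definition of $\lift\Rcal$ in Definition~\ref{def:lifting}. The base case (clause~1) is immediate with $J$ a singleton. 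For the linearity step (clause~2), one simply takes the disjoint union of the index sets obtained from each premise and scales the weights by the corresponding $p_i$'s; the merging of repeated indices is harmless.

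With this normal form in hand, assume without loss of generality that the $s_j$'s in the decomposition of $\Delta$ are pairwise distinct, so that $q_j = \Delta(s_j) = \sum_{i\in I} p_i \cdot \Delta_i(s_j)$. I then define, for each $i\in I$,
\[
\Theta_i \;:=\; \sum_{j\in J} \Delta_i(s_j) \cdot \Theta_j.
\]
Each $\Theta_i$ is a genuine distribution since $\sum_{j\in J} \Delta_i(s_j) = 1$. A direct computation gives
\[
\sum_{i\in I} p_i \cdot \Theta_i \;=\; \sum_{j\in J} \Bigl(\sum_{i\in I} p_i \cdot \Delta_i(s_j)\Bigr) \Theta_j \;=\; \sum_{j\in J} q_j \cdot \Theta_j \;=\; \Theta,
\]
as required. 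Finally, since $\Delta_i = \sum_{j\in J} \Delta_i(s_j) \cdot \pdist{s_j}$ and $s_j \,\Rcal\, \Theta_j$ for every $j$, applying clauses~1 and~2 of Definition~\ref{def:lifting} yields $\Delta_i \lift\Rcal \Theta_i$ for each $i\in I$.

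The main obstacle is the normal form lemma: one has to make sure that the inductive step for linearity is formulated carefully, since linearity can be applied with arbitrary real coefficients and with overlapping supports among the $\Delta_i$'s. I would handle this by allowing the index set $J$ in the normal form to be a multiset-like finite index set (rather than identifying equal states), so that concatenation of decompositions goes through syntactically; merging to distinct states is postponed to the application step above, where the identity $q_j = \Delta(s_j)$ makes the definition of $\Theta_i$ transparent. The rest of the argument is routine algebraic manipulation.
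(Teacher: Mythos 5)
The paper itself gives no proof of this proposition (it is imported from \cite{Deng07ENTCS}), so your proposal has to stand on its own. The overall strategy --- a decomposition ``normal form'' for $\lift\Rcal$ proved by rule induction, followed by a re-partition of that decomposition along the convex combination $\sum_{i}p_i\cdot\Delta_i$ --- is the standard argument and is essentially right, but one step fails as written. The reduction ``without loss of generality the $s_j$ are pairwise distinct'' is not harmless if you insist on retaining the \emph{unlifted} relation $s_j\,\Rcal\,\Theta_j$: merging two indices $j_1,j_2$ with $s_{j_1}=s_{j_2}=s$ forces the merged target to be the convex combination $\frac{q_{j_1}}{q_{j_1}+q_{j_2}}\Theta_{j_1}+\frac{q_{j_2}}{q_{j_1}+q_{j_2}}\Theta_{j_2}$, and an arbitrary $\Rcal$ is not closed under convex combinations on the right. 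Concretely, for $\Rcal=\{(s,\pdist{t_1}),(s,\pdist{t_2})\}$ with $t_1\neq t_2$ we have $\pdist{s}\lift\Rcal\bigl(\frac12\cdot\pdist{t_1}+\frac12\cdot\pdist{t_2}\bigr)$, yet there is no single $\Theta_s$ with $s\,\Rcal\,\Theta_s$ realising a one-term decomposition of this target; so the distinct-state normal form with $s_j\,\Rcal\,\Theta_j$ is false. Your closing paragraph acknowledges the multiset issue but then still relies simultaneously on $q_j=\Delta(s_j)$ (which needs merging) and on $s_j\,\Rcal\,\Theta_j$ (which merging destroys); that combination is exactly what cannot be had.

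The repair is small and leaves your architecture intact. Either (a) state the merged, distinct-state normal form with the \emph{lifted} relation $\pdist{s_j}\lift\Rcal\Theta_j$ in place of $s_j\,\Rcal\,\Theta_j$ --- this does survive merging, by linearity, and it is all your final step needs, since you only ever recombine point distributions via clause~2; or (b) keep the multiset-indexed normal form with $s_j\,\Rcal\,\Theta_j$ and replace $\Theta_i:=\sum_j\Delta_i(s_j)\cdot\Theta_j$ by a proportional split: set $r_{ij}:=\Delta_i(s_j)\cdot q_j/\Delta(s_j)$ and $\Theta_i:=\sum_j r_{ij}\cdot\Theta_j$, so that $\sum_j r_{ij}=1$, $\Delta_i=\sum_j r_{ij}\cdot\pdist{s_j}$, $\sum_i p_i\cdot\Theta_i=\sum_j q_j\cdot\Theta_j=\Theta$, and $\Delta_i\lift\Rcal\Theta_i$ follows from clauses~1 and~2. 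Note also that your claim $\sum_j\Delta_i(s_j)=1$ uses $\supp{\Delta_i}\subseteq\supp{\Delta}$, which holds only when $p_i>0$; you should state that the proposition is applied with strictly positive coefficients (indices with $p_i=0$ can simply be discarded, as the conclusion for them would otherwise be unobtainable for an arbitrary $\Rcal$).
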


For simplicity of presentation, the lifted version of the
transition relation $\sstep{\alpha}$
will be denoted by the same notation as the unlifted version. 
So we shall write $\one \Delta \alpha \Theta$ when $\Delta$
and $\Theta$ are related by the lifted relation from $\sstep{\alpha}.$
Note that in the lifted transition $\one \Delta \alpha \Theta$, 
{\em all} processes in $\supp{\Delta}$ must be able to simultaneously
make the transition $\alpha$. For example, 
$$
\one{
\frac{1}{2} \cdot \pdist{\bar a x.s} + 
\frac{1}{2} \cdot \pdist{\bar a x.t}
}
{\bar a x}
{\frac{1}{2} \cdot \pdist s + \frac{1}{2} \cdot \pdist t}
$$
but the distribution $\frac{1}{2} \cdot \pdist{\bar a x.s} + \frac{1}{2} \cdot \pdist{\bar b x.t}$ 
will not be able to make that transition. 
We need a few more relations to define (failure) simulation: 
\begin{itemize}
\item We write $\one s {\hat \tau} \Delta$ 
to denote either $\one s \tau \Delta$ or $\Delta = \pdist s.$
Its lifted version will be denoted by the same notation, e.g.,
$\one {\Delta_1}{\hat\tau} {\Delta_2}.$
The reflexive-transitive closure of the latter is denoted by
$\stackrel{\hat \tau}{\Longrightarrow}.$
 
\item$\bigstep {\Delta_1} {\hat \alpha} {\Delta_2}$, for $\alpha \not = \tau$, 
iff $\Delta_1 \bstep{\hat\tau} \Delta' \sstep{\alpha} \Delta'' \bstep{\hat\tau} \Delta_2$
for some $\Delta'$ and $\Delta''.$ 

\item We write $s\barb{a}$ to denote $s\ar{a(x)}$, 
and $s\barb{\bar a}$ to denote either $s\ar{\bar{a}(x)}$ or $s\ar{\bar{a}x}$; 
$s\not \barb{\mu}$ stands for the negation.
We write $s\not\barb{X}$ when $s \not \! \ar{\tau}$ and $\forall\mu\in X: s\not\barb{\mu}$, 
and $\Delta\not\barb{X}$ when $\forall s\in\supp{\Delta}:s\not\barb{X}$.
\end{itemize}

\begin{definition}
\label{def:sim}
A relation $\Rcal \subseteq S_p \times \Dcal(S_p)$ is said to be a {\em failure simulation}
if $s \Rcal \Theta$ implies: 
\begin{enumerate}
\item If $\one s {a(x)} {\Delta}$ and $x \not \in fn(s,\Theta)$, 
then for every name $w$,  there exists $\Theta_1$, $\Theta_2$ and 
$\Theta'$ such that 
$$\Theta \bstep{\hat \tau} \Theta_1 \sstep {a(x)} {\Theta_2}, 
\qquad \Theta_2[w/x] \bstep{\hat \tau} \Theta', 
\qquad \hbox{ and } \qquad (\Delta[w/x]) ~ \overline \Rcal ~ \Theta'. 
$$ 

\item If $\one s \alpha \Delta$ and $\alpha$ is
 not an input action, then
there exists $\Theta'$ such that $\bigstep \Theta {\hat \alpha} {\Theta'}$
and $\Delta ~ \overline \Rcal ~ \Theta'$

\item If $s\not\barb{X}$ then there exists $\Theta'$ such that $\Theta\dar{\hat \tau} \Theta'\not\barb{X}$.
\end{enumerate}
We denote with $\failsimord$ the largest failure simulation relation. Similarly, we define \emph{simulation} and $\simord$ by dropping the third clause above.
The {\em simulation preorder} $\simpreo$ and 
\emph{failure simulation preorder} $\failsimpreo$ 
 on process terms are defined by letting
\[\begin{array}{rll}
P \simpreo Q \mbox{ iff } \mbox{there is a distribution
$\Theta$ with $\bigstep {\interp Q}{\hat \tau} \Theta$
and $\interp P ~ \lift{\simord} ~ \Theta.$}\\
P \failsimpreo Q \mbox{ iff } \mbox{there is a distribution
$\Theta$ with $\bigstep {\interp P}{\hat \tau} \Theta$
and $\interp Q ~ \lift{\failsimord} ~ \Theta.$}
\end{array}\]
\end{definition}

Notice the rather unusual clause for input action, where no silent
action from $\Theta_2$ is permitted after the input transition. This is reminiscent of
the notion of {\em delay (bi)simulation}~\cite{Ferrari95,Sangiorgi96,vanGlabbeek96}. 
If instead of that clause, we simply require $\Theta \bstep {\widehat{a(x)}} {\Theta''}$
and $\Delta[w/x] ~ \overline \Rcal ~ \Theta''[w/x]$ then, 
in the presence of mismatch, simulation is not sound w.r.t.
the may-testing preorder, even in the non-probabilistic case.
Consider, for example, the following
processes: 
$$
P = a(x).\bar a b
\qquad
Q = a(x).[x \not = c] \tau.\bar a b
$$
where we recall that $\tau.R$ abbreviates $\nu z.(z(u) ~|~ \bar z z.R)$ for
some $z \not \in fn(R).$
The process $P$ can make an input transition, and regardless of the
value of the input, it can then output $b$ on channel $a.$
Notice that for $Q$, we have
$$
Q \sstep{a(x)} [x \not = c] \tau. \bar a b \sstep{\tau} \nu z(0 ~|~ \bar a b) = Q'.
$$
$Q'$ can also outputs $b$ on channel $a$, so under this alternative definition, 
$Q$ can simulate $P.$ But $P \not \simrel_{pmay} Q$, as the test
$\bar a c.a(y).\omega$ will distinguish them.
This issue has also appeared in the theory of weak (late) bisimulation
for the non-probabilistic $\pi$-calculus; see, e.g., \cite{Sangiorgi01book}.

Note that the above definition of $\simord$ is what is usually called the
``early'' simulation. 
One can obtain different variants of ``late'' simulation using different
alternations of the universal quantification on names and the existential quantifications on distributions in
clause 1 of Definition~\ref{def:sim}. 
Any of these variants leads to a strictly more discriminating simulation. To see why,
consider the weaker of such late variants, i.e., one in which the universal quantifier
on $w$ comes after the existential quantifier on $\Theta_1$:
\begin{quote}
If $\one s {a(x)} {\Delta}$ and $x \not \in fn(s,\Theta)$, 
then there exists $\Theta_1$ such that for every name $w$,  there exist $\Theta_2$ and 
$\Theta'$ such that 
$$\Theta \bstep{\hat \tau} \Theta_1 \sstep {a(x)} {\Theta_2}, 
\qquad \Theta_2[w/x] \bstep{\hat \tau} \Theta', 
\qquad \hbox{ and } \qquad (\Delta[w/x]) ~ \overline \Rcal ~ \Theta'. 
$$ 
\end{quote}
Let us denote this variant with $\simrel_{S'}.$ Consider the following processes:
$$
P = a(x).\bar b x.\nil + a(x).\nil + a(x).[x=z]\bar b x.\nil
\qquad
Q = \tau.a(x).\bar b x.\nil + \tau.a(x).\nil
$$
It is easy to see that $P \simrel_S Q$ but $P {\not \simrel}_{S'} Q.$

If we drop the silent transitions $\Theta_2[w/x] \bstep{\hat \tau} \Theta'$
in clause (1) of Definition~\ref{def:sim}, i.e., we let $\Theta' = \Theta_2[w/x]$ 
(hence, we get a delay simulation), then again we get a strictly stronger relation than $\simrel_S$. 
Let us refer to this stronger relation as $\simrel_{D}$. 
Let $P$ be $a(x).(c {\pch {\frac{1}{2}}} d)$ and let $Q$ be $a(x).\tau.(c {\pch {\frac{1}{2}}} d).$
Here we remove the parameters in the input prefixes $c$ and $d$ to simplify presentation. 
Again, it can be shown that $P \simrel_S Q$ but $P ~ {\not \simrel}_{D} ~ Q.$ For the latter to hold,
we would have to prove 
$
\frac{1}{2} \cdot \pdist c + \frac{1}{2} \cdot \pdist d ~ \overline{\simord} ~ \pdist{\tau.(c {\pch {\frac{1}{2}}} d)},
$
which is impossible.

Note that (failure) simulation is a relation between processes and distributions,
rather than between processes, so it is not immediately obvious that it is a preorder. 
This is established in Corollary~\ref{cor:sim.fail.preorder} below, whose proof requires a series of lemmas. 

In the following, when we apply a substitution to an action,
we assume that the substitution affects both the free and the bound names
in the action. For example, if $\alpha = a(x)$ and 
$\theta = [b/a, y/x]$ then $\alpha\theta = b(y).$
However, application of a substitution to processes or distributions
must still avoid capture.
\begin{lemma}\label{lm:rename}
Suppose $\sigma$ is a renaming substitution.
\begin{enumerate}
\item If $\one s \alpha \Delta$ then $\one {s\sigma}{\alpha \sigma}{\Delta\sigma}.$

\item If $\bigstep {\Delta} {\hat \alpha} {\Delta'}$
then $\bigstep {\Delta\sigma}{\hat \alpha\sigma} {\Delta'\sigma}.$
\end{enumerate}
\end{lemma}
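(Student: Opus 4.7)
The plan is to establish part (1) by structural induction on the derivation of $\one{s}{\alpha}{\Delta}$ according to the rules in Figure~\ref{fig:pi}, and then to derive part (2) from part (1) together with the definitions of the lifted and weak transition relations.

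For part (1), in each case I will use the assumption (valid by the paper's alpha-convention on bound names) that bound names of $s$ do not appear in the domain or codomain of $\sigma$, which lets me commute $\sigma$ past restriction and input/output binders. The routine cases are \textbf{Act}, \textbf{Sum}, \textbf{Match} and \textbf{Par}: for \textbf{Act} I use the identity $\interp{P\sigma}=\interp{P}\sigma$ stated just after the definition of $\interp{\cdot}$; for \textbf{Par}, after alpha-renaming bound names of $\alpha$ to be fresh for $\sigma$ and $t$, the side condition $bn(\alpha)\cap fn(t)=\emptyset$ is preserved as $bn(\alpha\sigma)\cap fn(t\sigma)=\emptyset$. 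The \textbf{Mismatch} case is where injectivity of $\sigma$ is essential: from $x\neq y$ and $\sigma$ being a renaming (i.e.\ injective) substitution we get $x\sigma\neq y\sigma$, so the \textbf{Mismatch} rule still fires on $[x\sigma\neq y\sigma]s\sigma$.

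The cases demanding the most care are \textbf{Com}, \textbf{Close}, \textbf{Res} and \textbf{Open}, because they involve an interaction between the substitution $\sigma$ and the bound names introduced by input, restriction or opening. For \textbf{Com} I need $(\Delta_1[y/x])\sigma = (\Delta_1\sigma)[y\sigma/x]$, which is immediate once $x$ is chosen (by alpha-conversion) outside the support of $\sigma$. For \textbf{Close} I similarly commute $\sigma$ past the inner $\nu w$ after alpha-renaming $w$ to be fresh for $\sigma$. For \textbf{Res}, the side condition $x\notin n(\alpha)$ becomes $x\notin n(\alpha\sigma)$ once $x$ is chosen fresh for $\sigma$. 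For \textbf{Open}, I choose $z$ and $y$ fresh for $\sigma$ and verify that $y\sigma\neq x\sigma$ (again using injectivity of $\sigma$) and $y\sigma\notin fn(\nu z.s\sigma)$, so that an application of \textbf{Open} to $\one{s\sigma}{\overline{x\sigma}\,z}{\Delta\sigma}$ yields $\one{\nu z.(s\sigma)}{\overline{x\sigma}(y)}{(\Delta\sigma)[y/z]}$, and then I rename $y$ to $y\sigma$. The main obstacle is exactly this bookkeeping around freshness conditions, but it is mechanical once we commit to the convention that bound names are always chosen fresh for $\sigma$ before applying $\sigma$.

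For part (2), I first lift part (1) to distributions: if $\one{\Delta}{\alpha}{\Delta'}$ in the lifted sense, then by Definition~\ref{def:lifting} there are states $s_i$ with $\Delta=\sum_i p_i\cdot\pdist{s_i}$ and $\one{s_i}{\alpha}{\Theta_i}$, $\Delta'=\sum_i p_i\cdot\Theta_i$; applying part (1) to each $s_i$ and using linearity of substitution on distributions (together with $\pdist{s_i}\sigma=\pdist{s_i\sigma}$) gives $\one{\Delta\sigma}{\alpha\sigma}{\Delta'\sigma}$. The same argument applied to $\one{s}{\hat\tau}{\Delta}$ shows the lifted $\hat\tau$-transition is preserved by $\sigma$, and a straightforward induction on the length of a $\bstep{\hat\tau}$-sequence extends this to $\bstep{\hat\tau}$. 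Finally, for $\alpha\neq\tau$, unfolding the definition of $\bigstep{\Delta}{\hat\alpha}{\Delta'}$ as $\Delta\bstep{\hat\tau}\cdot\sstep{\alpha}\cdot\bstep{\hat\tau}\Delta'$ and applying $\sigma$ to each segment gives the desired $\bigstep{\Delta\sigma}{\hat\alpha\sigma}{\Delta'\sigma}$.
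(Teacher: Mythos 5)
The paper states this lemma without proof, treating it as routine, and your rule induction on the derivation of $\one s \alpha \Delta$ --- with injectivity of $\sigma$ invoked for \textbf{Mismatch} and \textbf{Open}, the alpha-convention discharging the bound-name side conditions, and part (2) obtained by lifting part (1) and unfolding $\bstep{\hat\tau}$ and $\bstep{\hat\alpha}$ --- is exactly the standard argument the paper implicitly relies on. The only quibble is in the \textbf{Open} case: the extruded name $y$ is fixed by the given transition and cannot literally be ``chosen fresh for $\sigma$,'' but the two conditions you verify ($y\sigma \neq x\sigma$ and $y\sigma \notin fn((\nu z.s)\sigma)$, both consequences of injectivity) are precisely what is needed to apply \textbf{Open} directly with extruded name $y\sigma$, so the argument goes through.
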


\begin{lemma}
\label{lm:lifted-trans-rename}
Let $I$ be a finite index set, and let $\sum_{i \in I} p_i = 1.$
Suppose $\one {s_i} {a(x_i)} {\Delta_i}$ for each $i \in I$.
Let $x$ be a fresh name not occuring in any of $s_i$, $a(x_i)$ or $\Delta_i.$
Then
$$
\one {\sum_{i \in I} p_i \cdot \pdist {s_i}} {a(x)}{\sum_{i \in I} p_i \cdot \Delta_i[x/x_i]}.
$$
\end{lemma}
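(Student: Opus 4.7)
The plan is to derive the lifted input transition by combining $\alpha$-conversion of the bound input name with the two clauses of the Lifting Definition.

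First, I would use $\alpha$-conversion to uniformise the bound name of the input action across all $i \in I$. Since $x$ is fresh for $s_i$, $a(x_i)$, and $\Delta_i$, and since by the bound-name convention $x_i$ occurs in $s_i$ only under the input prefix that fires in this transition, I can apply Lemma~\ref{lm:rename}(1) with the renaming substitution $\sigma_i$ that swaps $x$ and $x_i$ (and is identity elsewhere). This substitution is injective, hence a genuine renaming substitution. Its action is as follows: $s_i\sigma_i = s_i$ (no free $x_i$ or $x$ in $s_i$), $a(x_i)\sigma_i = a(x)$ (recall substitution affects bound names of actions), and $\Delta_i\sigma_i = \Delta_i[x/x_i]$ (since $x$ is not free in $\Delta_i$). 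Thus for each $i \in I$ I obtain
$$
\one{s_i}{a(x)}{\Delta_i[x/x_i]}.
$$

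Next, I would apply clause~(1) of the Lifting Definition, taking $\Rcal$ to be the transition relation $\sstep{a(x)}$ restricted so that states relate to distributions in the expected way. This yields $\pdist{s_i} \lift{\Rcal} \Delta_i[x/x_i]$ for every $i \in I$. Finally, I would invoke clause~(2) of the same definition (linearity) to combine these into
$$
\sum_{i \in I} p_i \cdot \pdist{s_i} ~ \lift{\Rcal} ~ \sum_{i \in I} p_i \cdot \Delta_i[x/x_i],
$$
which, under the convention that lifted transitions share the notation with their unlifted versions, is exactly the judgment $\one{\sum_{i \in I} p_i \cdot \pdist{s_i}}{a(x)}{\sum_{i \in I} p_i \cdot \Delta_i[x/x_i]}$ we want.

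The only point requiring care is the $\alpha$-renaming step: one must check that a single fresh $x$ can be used uniformly for all $i$ simultaneously. This is immediate from the hypothesis that $x$ is fresh for every $s_i$, $a(x_i)$, and $\Delta_i$, which in particular ensures $x \neq x_i$ for all $i$ and so each swap substitution is well-defined as an injective renaming. I do not anticipate any genuine technical obstacle beyond this bookkeeping; the lemma is essentially a packaging result that exposes the linearity of the lifted input transition.
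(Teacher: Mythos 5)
Your proof is correct; the paper states this lemma without proof, and your argument --- first uniformising the bound input name via Lemma~\ref{lm:rename}(1) with the swap $[x/x_i,x_i/x]$ (which is legitimate since $x$ is fresh and, by the paper's convention that bound names are distinct from free names, $x_i\notin fn(s_i)$, so $s_i\sigma_i=s_i$ and $\Delta_i\sigma_i=\Delta_i[x/x_i]$), then applying clauses (1) and (2) of Definition~\ref{def:lifting} to the relation $\sstep{a(x)}$ --- is exactly the intended packaging of the result.
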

Given the above lemma, given transitions $\one {s_i}{a(x_i)}{\Delta_i}$, we can always assume that, 
all the $x_i$'s are the same fresh name, so that when lifting those transitions to
distributions, we shall omit the explicit renaming of individual $x_i.$ This will simplify the
presentation of the proofs in the following. The same remark applies to bound output transitions.

\begin{lemma}
\label{lm:bigstep-dist}
Suppose $\sum_{i \in I} p_i = 1$ and $\Delta_i \bstep {\hat \alpha} \Phi_i$ for each $i \in I,$ where
$I$ is a finite index set. Then 
$$
\sum_{i \in I} p_i \cdot \Delta_i \bstep {\hat \alpha} \sum_{i \in I} p_i \cdot \Phi_i.
$$
\end{lemma}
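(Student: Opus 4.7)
The plan is to unfold the definition of $\bstep{\hat\alpha}$ down to lifted one-step transitions and then repeatedly apply the linearity clause (clause~2 of Definition~\ref{def:lifting}). The whole lemma should follow from the fact that lifted transitions, by construction, are closed under convex combinations.

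First I would prove the building block: if $\Delta_i \sstep{\hat\tau} \Phi_i$ for every $i\in I$, then $\sum_{i\in I} p_i\cdot\Delta_i \sstep{\hat\tau} \sum_{i\in I} p_i\cdot\Phi_i$. This is essentially immediate from the linearity of the lifting, because the lifted $\sstep{\hat\tau}$ between distributions is, by definition, the smallest relation closed under that clause. To promote this to $\bstep{\hat\tau}$ (the reflexive-transitive closure) I would observe that $\sstep{\hat\tau}$, once lifted, is reflexive on \emph{all} distributions, not just point distributions: for every $s$, $s\sstep{\hat\tau}\pdist{s}$ by the very definition of $\hat\tau$, so $\pdist{s}\,\lift{\sstep{\hat\tau}}\,\pdist{s}$, and by linearity $\Delta\sstep{\hat\tau}\Delta$ for any $\Delta$. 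Therefore, if $\Delta_i\bstep{\hat\tau}\Phi_i$ with chains of possibly different lengths $n_i$, I pad each with trailing reflexive steps until all chains have the same length $n=\max_i n_i$, and then induct on $n$ applying the one-step building block componentwise.

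For $\alpha\neq\tau$ I would decompose each hypothesis as
$$\Delta_i \bstep{\hat\tau} \Delta_i' \sstep{\alpha} \Delta_i'' \bstep{\hat\tau} \Phi_i.$$
If $\alpha$ is a bound-input or bound-output action I first invoke Lemma~\ref{lm:lifted-trans-rename} (and its analogue for bound output mentioned in the remark following it) so that all of the middle-step transitions share a common fresh bound name; once the bound name is uniform, the middle step $\sum_i p_i\cdot\Delta_i' \sstep{\alpha} \sum_i p_i\cdot\Delta_i''$ follows again from linearity of lifting. The two $\hat\tau$ halves then combine by the preceding paragraph, and concatenating gives exactly $\sum_i p_i\cdot\Delta_i \bstep{\hat\alpha} \sum_i p_i\cdot\Phi_i$.

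The only step that is not an immediate rewriting is the padding argument for $\bstep{\hat\tau}$, since $I$ is finite but the various chain lengths $n_i$ may differ. The key observation making it go through is reflexivity of the lifted $\sstep{\hat\tau}$ on arbitrary distributions, which itself is a one-line consequence of linearity once applied to the trivial transitions $\pdist{s}\sstep{\hat\tau}\pdist{s}$. Everything else is bookkeeping.
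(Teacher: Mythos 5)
Your proof is correct and is essentially the argument the paper relies on: the paper simply defers to Lemma 6.6 of the cited pCSP work, whose proof is exactly this decomposition into single lifted steps, closure under convex combinations via the linearity clause of Definition~\ref{def:lifting}, and padding of the $\bstep{\hat\tau}$ chains using reflexivity of the lifted $\sstep{\hat\tau}$ (which holds because $\one{s}{\hat\tau}{\pdist{s}}$ by definition). Your extra care with Lemma~\ref{lm:lifted-trans-rename} for bound names in the visible step is a harmless and sensible precaution.
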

\begin{proof}
Same as in the proof of Lemma 6.6. in \cite{Deng07ENTCS}. \qed
\end{proof}

\begin{lemma}
\label{lm:sim-refl}
For every state-based process $s$, we have $s \simord \pdist s$ and $s \failsimord \pdist s.$
\end{lemma}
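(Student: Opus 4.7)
The natural strategy is to exhibit the identity-like relation $\Rcal = \{(s, \pdist{s}) \mid s \in S_p\}$ and verify that it is simultaneously a simulation and a failure simulation. Since $\simord$ (resp.\ $\failsimord$) is defined as the largest such relation, this immediately yields $s \simord \pdist{s}$ and $s \failsimord \pdist{s}$ for every state-based process $s$.

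The one auxiliary fact I need is that the lifted relation $\lift{\Rcal}$ is reflexive on distributions, i.e., $\Delta \lift{\Rcal} \Delta$ for every $\Delta \in \Dcal(S_p)$. This follows from the two clauses of Definition~\ref{def:lifting}: from $s \Rcal \pdist{s}$ the first clause gives $\pdist{s} \lift{\Rcal} \pdist{s}$ for each $s \in \supp{\Delta}$, and then writing $\Delta = \sum_{s \in \supp{\Delta}} \Delta(s) \cdot \pdist{s}$ and applying linearity yields $\Delta \lift{\Rcal} \Delta$.

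With this in hand, I verify the three clauses of failure simulation for any pair $(s, \pdist{s}) \in \Rcal$. For clause (1), suppose $\one{s}{a(x)}{\Delta}$ with $x \notin fn(s)$. For an arbitrary name $w$, I take $\Theta_1 := \pdist{s}$, $\Theta_2 := \Delta$ (the transition $\one{\pdist{s}}{a(x)}{\Delta}$ is obtained by lifting the given transition of $s$), and $\Theta' := \Delta[w/x]$; the required matching condition $\Delta[w/x] \lift{\Rcal} \Theta'$ then holds by reflexivity of $\lift{\Rcal}$, and both of the $\hat\tau$-transitions are taken to be the trivial ones. Clause (2) is handled analogously: for $\one{s}{\alpha}{\Delta}$ with $\alpha$ not an input, take $\Theta' := \Delta$, so that $\pdist{s} \bstep{\hat{\alpha}} \Delta$ via the single-step transition surrounded by empty $\hat\tau$-sequences, and again $\Delta \lift{\Rcal} \Delta$.

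For the extra clause (3) needed for failure simulation, assume $s \not\barb{X}$. Choose $\Theta' := \pdist{s}$; the transition $\pdist{s} \dar{\hat\tau} \pdist{s}$ holds trivially (the reflexive-transitive closure of $\bstep{\hat\tau}$ contains the identity), and $\pdist{s} \not\barb{X}$ is equivalent to $s \not\barb{X}$ by definition of the predicate on distributions. There is no serious obstacle here; the only mild subtlety is remembering that the $\hat\tau$-notation permits the trivial zero-step transition, which is what makes all three clauses mimickable by the identity witness. The simulation case is identical but drops clause (3), so both conclusions follow at once.
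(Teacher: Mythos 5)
Your proposal is correct and follows exactly the paper's approach: the paper also takes the relation $\Rcal$ defined by $s\ \Rcal\ \Theta$ iff $\Theta = \pdist{s}$ and notes that it is a simulation and a failure simulation, leaving the verification to the reader. Your elaboration of the three clauses (using trivial $\hat\tau$-steps and reflexivity of the lifted relation) is precisely the "easy to see" part the paper omits.
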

\begin{proof}
Let $\Rcal \subseteq S_p \times \Dcal(S_p)$ be the relation
defined as follows: $s ~ \Rcal ~ \Theta$ iff $\Theta = \pdist s.$
It is easy to see that $\Rcal$ is a simulation and also a failure simulation. 
\qed 
\end{proof}

\begin{lemma}
\label{lm:sim-like1}
Suppose $\Delta ~ \simordl_S ~ \Phi$ and $\one {\Delta} {\alpha} {\Delta'}$, where $\alpha$
is either $\tau$, a free action or a bound output action. Then 
$\one \Phi {\hat \alpha} {\Phi'}$ for some $\Phi'$ such that $\Delta' ~ \simordl_S ~ \Phi'.$
\end{lemma}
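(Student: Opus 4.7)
The plan is to reduce the lemma to a pointwise use of clause~2 of Definition~\ref{def:sim}, then reassemble using the linearity of both the lifted transition relation and of lifted simulation. The main tools will be Proposition~\ref{prop:lifting} and Lemma~\ref{lm:bigstep-dist}.

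First, I would unpack the lifted transition $\one{\Delta}{\alpha}{\Delta'}$. By the definition of the lifting of $\sstep{\alpha}$, there is a finite index set $I$, probabilities $p_i$ summing to $1$, and state-based processes $s_i$ with transitions $\one{s_i}{\alpha}{\Theta_i}$ such that
\[
\Delta \;=\; \sum_{i\in I} p_i\cdot \pdist{s_i},
\qquad
\Delta' \;=\; \sum_{i\in I} p_i\cdot \Theta_i.
\]
For the bound-output case $\alpha = \bar a(x)$, I would first use $\alpha$-conversion (together with the bound-output analogue of Lemma~\ref{lm:lifted-trans-rename}) so that all component transitions carry a common fresh bound name; this makes the decomposition above well-formed and makes subsequent name bookkeeping trivial.

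Next, I apply Proposition~\ref{prop:lifting} to $\Delta = \sum_{i\in I} p_i\cdot \pdist{s_i}$ and $\Delta \lift{\simord} \Phi$, obtaining $\Phi = \sum_{i\in I} p_i\cdot \Phi_i$ with $\pdist{s_i}\lift{\simord}\Phi_i$ for each $i$, i.e., $s_i \simord \Phi_i$. Clause~2 of Definition~\ref{def:sim} now applies to each pair $(s_i,\Phi_i)$, yielding distributions $\Phi_i'$ with $\bigstep{\Phi_i}{\hat\alpha}{\Phi_i'}$ and $\Theta_i \lift{\simord} \Phi_i'$. Setting $\Phi' = \sum_{i\in I} p_i\cdot \Phi_i'$, Lemma~\ref{lm:bigstep-dist} (or its $\bigstep{}{\hat\alpha}{}$ variant, which follows from the $\hat\tau$ version and linearity of single action lifting) gives $\bigstep{\Phi}{\hat\alpha}{\Phi'}$. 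Finally, the linearity clause of Definition~\ref{def:lifting} applied to $\Theta_i \lift{\simord} \Phi_i'$ yields $\Delta' = \sum_i p_i\cdot \Theta_i \;\lift{\simord}\; \sum_i p_i\cdot \Phi_i' = \Phi'$, as required.

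The main obstacle is the reconciliation of two independent decompositions of~$\Delta$: the one induced by its being a lifted $\alpha$-move, and the one implicit in $\Delta \lift{\simord} \Phi$. Proposition~\ref{prop:lifting} is precisely what makes this work, since it allows me to transport a chosen decomposition of $\Delta$ to a matching decomposition of~$\Phi$. A secondary subtlety is the bound-output subcase, where one must choose a single fresh bound name across the index set before invoking the simulation clause; this is handled by $\alpha$-conversion together with the bound-output counterpart of Lemma~\ref{lm:lifted-trans-rename}. No analysis of the input action is needed because the hypothesis explicitly restricts $\alpha$ to $\tau$, a free action, or a bound output, so the awkward early-style quantification of clause~1 does not arise.
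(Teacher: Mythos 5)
Your overall strategy is sound and is essentially the one the paper intends (the paper defers to Lemma 6.7 of Deng et al.; its detailed sibling in this paper is the proof of Lemma~\ref{lm:sim-like2}). There is, however, one unjustified step: from Proposition~\ref{prop:lifting} you obtain $\pdist{s_i}\lift{\simord}\Phi_i$, and you then write ``i.e., $s_i\simord\Phi_i$''. That inference is not licensed by anything in the paper. By Definition~\ref{def:lifting}, $\pdist{s_i}\lift{\simord}\Phi_i$ only says that $\Phi_i$ is a convex combination $\sum_k r_{i,k}\cdot\Phi_{i,k}$ of distributions with $s_i\simord\Phi_{i,k}$; it does not say that $\Phi_i$ itself is related to $s_i$ by $\simord$. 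That stronger claim amounts to convexity of the largest simulation in its second argument, which happens to be true but is itself a lemma requiring proof and is nowhere stated in the paper. So clause~2 of Definition~\ref{def:sim} cannot be applied to the pair $(s_i,\Phi_i)$ as you do.

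The gap is repairable with one more level of decomposition: refine $\Phi_i=\sum_k r_{i,k}\cdot\Phi_{i,k}$ with $s_i\simord\Phi_{i,k}$, apply clause~2 of Definition~\ref{def:sim} to each pair $(s_i,\Phi_{i,k})$ to get $\Phi_{i,k}\bstep{\hat\alpha}\Phi_{i,k}'$ with $\Theta_i\lift{\simord}\Phi_{i,k}'$, set $\Phi_i'=\sum_k r_{i,k}\cdot\Phi_{i,k}'$, and recombine: Lemma~\ref{lm:bigstep-dist} gives $\Phi_i\bstep{\hat\alpha}\Phi_i'$, and the linearity clause of Definition~\ref{def:lifting} (with all first components equal to $\Theta_i$) gives $\Theta_i\lift{\simord}\Phi_i'$; the rest of your argument then goes through unchanged. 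Note that the paper's own detailed proof of the analogous Lemma~\ref{lm:sim-like2} sidesteps this issue by decomposing the lifting relation \emph{first} --- which yields genuine pairs $s_i\simord\Phi_i$ --- and then reconciling that decomposition with the one induced by the transition via the index sets $I_j$ and $J_i$; your route via Proposition~\ref{prop:lifting} is slicker but pays for it with exactly the point-distribution subtlety above.
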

\begin{proof}
Similar to the proof of Lemma 6.7 in \cite{Deng07ENTCS}. \qed
\end{proof}

\begin{lemma}
\label{lm:sim-like2}
Suppose $\Delta ~ \simordl_S ~ \Phi$ and $\one{\Delta}{a(x)}{\Delta'}$. 
Then for all name $w$, there exist $\Psi_1$, $\Psi_2$ and $\Psi$ such that
$$
\Phi \bstep {\hat \tau} \Psi_1 \sstep {a(x)} \Psi_2,
\qquad
\Psi_2[w/x] \bstep {\hat \tau} \Psi, 
\qquad \mbox{ and } 
\qquad
(\Delta'[w/x]) ~ \simordl_S ~ \Psi.
$$
\end{lemma}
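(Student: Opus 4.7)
The plan is to decompose the hypothesised lifted transition and the hypothesised lifted simulation by the structure of the lifting operator, match them up pointwise using the one-step simulation clause for inputs, and then reassemble everything by linearity together with Lemma~\ref{lm:bigstep-dist}.

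First I would use Lemma~\ref{lm:lifted-trans-rename} (together with the definition of lifting) to write $\Delta = \sum_{i\in I} p_i \cdot \pdist{s_i}$ with $\one{s_i}{a(x)}{\Delta_i}$, using a common bound name $x$ across all components, and with $\Delta' = \sum_{i\in I} p_i \cdot \Delta_i$. By $\alpha$-conversion I may assume that $x$ is fresh with respect to $\Phi$ as well, so that $x \not\in fn(s_i,\Phi_i)$ for every $i$ in the decomposition of $\Phi$ produced in the next step. Then, since $\Delta \simordl_S \Phi$ means $(\sum_{i\in I} p_i \cdot \pdist{s_i}) \lift{\simord} \Phi$, Proposition~\ref{prop:lifting} gives $\Phi = \sum_{i\in I} p_i \cdot \Phi_i$ with $\pdist{s_i} \lift{\simord} \Phi_i$, i.e.\ $s_i \simord \Phi_i$, for each $i \in I$.

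Next, I would apply clause~(1) of Definition~\ref{def:sim} to each pair $(s_i, \Phi_i)$ using the transition $\one{s_i}{a(x)}{\Delta_i}$ and the given name $w$. This yields distributions $\Theta_1^i$, $\Theta_2^i$ and $\Theta^i$ such that $\Phi_i \bstep{\hat\tau} \Theta_1^i \sstep{a(x)} \Theta_2^i$, $\Theta_2^i[w/x] \bstep{\hat\tau} \Theta^i$, and $\Delta_i[w/x] \lift{\simord} \Theta^i$. Now define $\Psi_1 := \sum_{i\in I} p_i \cdot \Theta_1^i$, $\Psi_2 := \sum_{i\in I} p_i \cdot \Theta_2^i$ and $\Psi := \sum_{i\in I} p_i \cdot \Theta^i$. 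Lemma~\ref{lm:bigstep-dist} gives $\Phi \bstep{\hat\tau} \Psi_1$, the lifting rule (applied to the common-name transitions $\Theta_1^i \sstep{a(x)} \Theta_2^i$, again invoking Lemma~\ref{lm:lifted-trans-rename}) gives $\Psi_1 \sstep{a(x)} \Psi_2$, and Lemma~\ref{lm:bigstep-dist} applied once more gives $\Psi_2[w/x] \bstep{\hat\tau} \Psi$ (using that substitution distributes over convex combinations of distributions). Finally, linearity of the lifting (clause~(2) of Definition~\ref{def:lifting}) yields $\Delta'[w/x] = \sum_{i\in I} p_i \cdot \Delta_i[w/x] \lift{\simord} \sum_{i\in I} p_i \cdot \Theta^i = \Psi$, as required.

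The main obstacle I anticipate is the bookkeeping around the bound input name $x$: the one-step clause requires $x \not\in fn(s_i, \Phi_i)$, and later we rely on the fact that $[w/x]$ commutes with convex combinations and with the weak silent transition. All of this is routine provided $x$ is chosen fresh uniformly across the decomposition, which is where Lemma~\ref{lm:lifted-trans-rename} does the real work for the input transitions, and where an initial $\alpha$-conversion of $\Phi$ (and of the individual $\Phi_i$) ensures freshness on the simulating side. Everything else is a mechanical reassembly, analogous to the proof of Lemma~\ref{lm:sim-like1} but with the extra argument at $w$ threaded through each component.
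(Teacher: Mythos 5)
Your reassembly argument is sound and the overall strategy works, but it is genuinely different from the paper's, and one step in your decomposition needs repair. The paper starts from the \emph{canonical} decomposition of the lifted simulation $\Delta~\simordl_S~\Phi$ (giving $\Delta=\sum_i p_i\cdot\pdist{s_i}$, $s_i\simord\Phi_i$, $\Phi=\sum_i p_i\cdot\Phi_i$) and, \emph{separately}, the decomposition of $\Delta$ induced by the transition ($\Delta=\sum_j q_j\cdot\pdist{t_j}$, $t_j\sstep{a(x)}\Theta_j$). Since these two decompositions of the same $\Delta$ need not agree, the paper forms their common refinement with weights $\frac{p_i\cdot q_j}{\Delta(s_i)}$ over pairs $(i,j)$ with $s_i=t_j$, and all the index bookkeeping lives there. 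You instead fix the transition-induced decomposition first and then use Proposition~\ref{prop:lifting} to split $\Phi$ along it, which eliminates the double indexing entirely --- a cleaner route when it works.

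The step that does not follow as written is ``$\pdist{s_i}\lift{\simord}\Phi_i$, i.e.\ $s_i\simord\Phi_i$.'' Proposition~\ref{prop:lifting} only delivers the \emph{lifted} relation between the point distribution $\pdist{s_i}$ and $\Phi_i$, and by Definition~\ref{def:lifting} this means $\Phi_i=\sum_k r_k\cdot\Phi_{ik}$ with $s_i\simord\Phi_{ik}$ for each $k$; it does not directly give $s_i\simord\Phi_i$ unless you know that $\simord$ is convex in its second argument (closed under convex combinations of simulating distributions). The paper never states such a convexity lemma, and its own proof is structured precisely so as not to need it. You can repair your argument either by proving that convexity claim (it does hold: the convex closure of $\simord$ is itself a simulation, using Lemma~\ref{lm:bigstep-dist} and linearity of the lifting) or, more in the spirit of the paper, by carrying the extra index $k$ through: apply clause~(1) of Definition~\ref{def:sim} to each pair $(s_i,\Phi_{ik})$ and reassemble with the weights $p_i\cdot r_k$. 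Either fix is routine, but one of them is needed; the remainder of your proof (the use of Lemma~\ref{lm:lifted-trans-rename} for a uniform fresh $x$, Lemma~\ref{lm:bigstep-dist} for the weak steps, linearity for the final relation, and the commutation of $[w/x]$ with convex combinations) is correct.
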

\begin{proof}
From $\Delta ~ \simordl_S ~ \Phi$ we have that
\begin{equation}
\label{eq:sim-like2-1}
\Delta = \sum_{i \in I} p_i \cdot \pdist{s_i}, \qquad s_i \simord \Phi_i, \qquad
\Phi = \sum_{i \in I} p_i \cdot \Phi_i. 
\end{equation}
and from $\Delta \sstep {a(x)} \Delta'$ we have:
\begin{equation}
\label{eq:sim-like2-2}
\Delta = \sum_{j \in J} q_j \cdot \pdist{t_j}, \qquad t_j \sstep {a(x)} \Theta_j,
\qquad 
\Delta' = \sum_{j \in J} q_j \cdot \Theta_j.
\end{equation}
We assume w.l.o.g. that all $p_i$ and $q_j$ are non-zero. 
Following \cite{Deng07ENTCS}, we define two index sets: 
$I_j = \{ i \in I \mid s_i = t_j \}$ and $J_i = \{ j \in J \mid t_j = s_i \}.$
Obviously, we have
\begin{equation}
\label{eq:sim-like2-3}
\{(i,j) \mid i \in I, j \in J_i\} = \{(i,j) \mid j \in J, i \in J_i \}, \quad \mbox{and} 
\end{equation}
\begin{equation}
\label{eq:sim-like2-4}
\Delta(s_i) = \sum_{j \in J_i} q_j 
\qquad
\Delta(t_j) = \sum_{i \in I_j} p_i.
\end{equation}
It follows from (\ref{eq:sim-like2-4}) that we can rewrite $\Phi$ as
$$
\Phi = \sum_{i \in I} \sum_{j \in J_i} \frac{p_i \cdot q_j}{\Delta(s_i)} \cdot \Phi_i.
$$
Note that $s_i = t_j$ when $j \in I_i.$ 
Since $s_i \simord \Phi_i$, and $s_i = t_j \sstep {a(x)}{\Theta_j}$, we have,
given any name $w$, some $\Phi_{ij}^1$, $\Phi_{ij}^2$ and $\Phi_{ij}$ such that:
\begin{equation}
\label{eq:sim-like2-5}
\Phi_i \bstep {\hat \tau} \Phi_{ij}^1 \sstep {a(x)} \Phi_{ij}^2, \qquad
\Phi_{ij}^2[w/x] \bstep{\hat \tau}{\Phi_{ij}}, \qquad
 \Theta_j[w/x] ~ \simordl_S ~ \Phi_{ij}.
\end{equation}
Let
$$
\Psi_1 = \sum_{i \in I} \sum_{j \in J_i} \frac{p_i \cdot q_j}{\Delta(s_i)} \cdot \Phi_{ij}^1
\qquad
\Psi_2 = \sum_{i \in I} \sum_{j \in J_i} \frac{p_i \cdot q_j}{\Delta(s_i)} \cdot \Phi_{ij}^2
\qquad
\Psi = \sum_{i \in I} \sum_{j \in J_i} \frac{p_i \cdot q_j}{\Delta(s_i)} \cdot \Phi_{ij}.
$$
Lemma~\ref{lm:bigstep-dist} and (\ref{eq:sim-like2-5}) above give us:
$$
\Phi = \sum_{i \in I} \sum_{j \in J_i} \frac{p_i \cdot q_j}{\Delta(s_i)} \cdot \Phi_{i}
\bstep {\hat \tau}
\Psi_1 
\sstep {a(x)} 
\Psi_2
\qquad
\Psi_2[w/x] 
\bstep {\hat \tau} 
\Psi
$$
It remains to show that $\Delta'[w/x] ~ \simordl_S ~ \Psi.$
\begin{align*}
\Delta'[w/x] 
 & =  \sum_{j \in J} q_j \cdot \Theta_j[w/x] & \\  
 & =  \sum_{j \in J} q_j \cdot \sum_{i \in I_j} \frac{p_i}{\Delta(t_j)} \cdot \Theta_j[w/x] & \mbox{ using (\ref{eq:sim-like2-4})} \\
 & =  \sum_{j \in J} \sum_{i \in I_j} \frac{p_i \cdot q_j}{\Delta(t_j)} \cdot \Theta_j[w/x] \\ 
 & =  \sum_{i \in I} \sum_{j \in J_i} \frac{p_i \cdot q_j}{\Delta(s_i)} \cdot \Theta_j[w/x] & \mbox{ using (\ref{eq:sim-like2-3}) } \\ 
 & \simordl_S ~ \sum_{i \in I} \sum_{j \in J_i} \frac{p_i \cdot q_j}{\Delta(t_j)} \cdot \Phi_{ij} = \Psi & 
  \mbox{ using (\ref{eq:sim-like2-5}) and linearity of $\simordl_S$}
\end{align*}
\qed
\end{proof}

\begin{lemma}
\label{lm:sim-like3}
Suppose $\Delta ~ \simordl_S ~ \Phi$ and $\Delta ~ \bstep {\hat \alpha} \Delta'$,
where $\alpha$ is either $\tau$, a free action or a bound output. 
Then $\Phi \bstep {\hat \alpha} \Phi'$ for some $\Phi'$ such that
$\Delta' ~ \simordl_S ~ \Phi'$.
\end{lemma}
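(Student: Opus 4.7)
The plan is to reduce the statement to the single-step version in Lemma~\ref{lm:sim-like1} by a straightforward induction, splitting on whether $\alpha = \tau$ or not.

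First I would handle the case $\alpha = \tau$. Here $\Delta \bstep{\hat\tau} \Delta'$ unfolds into a finite chain $\Delta = \Delta_0 \sstep{\hat\tau} \Delta_1 \sstep{\hat\tau} \cdots \sstep{\hat\tau} \Delta_n = \Delta'$, where each $\sstep{\hat\tau}$ is either an actual $\tau$-step or the identity. I would proceed by induction on $n$. The base case $n=0$ is trivial: take $\Phi' = \Phi$, since $\Phi \bstep{\hat\tau} \Phi$ reflexively. For the inductive step, the induction hypothesis gives $\Phi_{n-1}$ with $\Phi \bstep{\hat\tau} \Phi_{n-1}$ and $\Delta_{n-1} \simordl_S \Phi_{n-1}$. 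If $\Delta_{n-1} \sstep{\hat\tau} \Delta_n$ is the identity, take $\Phi' = \Phi_{n-1}$; otherwise it is a genuine $\tau$-step and Lemma~\ref{lm:sim-like1} (applied with $\alpha = \tau$) yields some $\Phi'$ with $\Phi_{n-1} \sstep{\hat\tau} \Phi'$ and $\Delta_n \simordl_S \Phi'$. Concatenating weak transitions gives $\Phi \bstep{\hat\tau} \Phi'$ as required.

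Next I would handle the case where $\alpha$ is a free action or bound output. Unfolding the definition of $\bstep{\hat\alpha}$, we can write
\[
\Delta \bstep{\hat\tau} \Delta_a \sstep{\alpha} \Delta_b \bstep{\hat\tau} \Delta'.
\]
Applying the $\tau$-case already established, there is some $\Phi_a$ with $\Phi \bstep{\hat\tau} \Phi_a$ and $\Delta_a \simordl_S \Phi_a$. Then Lemma~\ref{lm:sim-like1} applies to the single step $\Delta_a \sstep{\alpha} \Delta_b$ (the hypothesis on $\alpha$ is precisely the one required by that lemma), yielding $\Phi_b$ with $\Phi_a \sstep{\alpha} \Phi_b$ and $\Delta_b \simordl_S \Phi_b$. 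Applying the $\tau$-case once more to $\Delta_b \bstep{\hat\tau} \Delta'$ produces $\Phi'$ with $\Phi_b \bstep{\hat\tau} \Phi'$ and $\Delta' \simordl_S \Phi'$. Splicing these three weak derivations together gives $\Phi \bstep{\hat\alpha} \Phi'$, completing the proof.

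There is no real obstacle: the argument is just iterating Lemma~\ref{lm:sim-like1} along the chain. The only small pitfall to keep in mind is the asymmetric treatment of $\sstep{\hat\tau}$ in the chain (an identity step is allowed and must be matched by taking the corresponding $\Phi_i$ unchanged), and the fact that the non-$\tau$ cases of Lemma~\ref{lm:sim-like1} exclude input and close transitions, which is precisely the restriction on $\alpha$ assumed here, so the case analysis remains exhaustive for the statement we are proving.
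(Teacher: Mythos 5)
Your overall strategy --- decompose the weak transition into a chain of single lifted steps, match each one using Lemma~\ref{lm:sim-like1}, and splice the resulting weak derivations together --- is exactly the standard argument, and it is essentially the proof of Lemma 6.8 of \cite{Deng07ENTCS} to which the paper defers. The non-$\tau$ case and the composition $\Phi \bstep{\hat\tau} \Phi_a \bstep{\hat\alpha} \Phi_b \bstep{\hat\tau} \Phi'$ are fine.

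There is, however, one genuine flaw in the $\tau$-case induction: a single lifted step $\Delta_{n-1}\sstep{\hat\tau}\Delta_n$ is \emph{not} ``either an actual $\tau$-step or the identity.'' By Definition~\ref{def:lifting}, the lifted $\hat\tau$ relation lets each state in $\supp{\Delta_{n-1}}$ independently either perform a $\tau$ or stay put, so in general the step is a proper convex mixture of the two; it is then neither the identity nor an instance of the lifted relation $\sstep{\tau}$, which (as the paper stresses) requires \emph{all} states in the support to move simultaneously. Your two-case analysis is therefore not exhaustive, and Lemma~\ref{lm:sim-like1} with $\alpha=\tau$ cannot be applied to such a mixed step as it stands. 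The repair is routine but uses machinery you did not invoke: write $\Delta_{n-1}=p\cdot\Gamma_1+(1-p)\cdot\Gamma_2$, where $\Gamma_1$ collects the states that perform a genuine $\tau$ (so $\Gamma_1\sstep{\tau}\Gamma_1'$) and $\Gamma_2$ the stationary ones; use Proposition~\ref{prop:lifting} to split $\Phi_{n-1}$ correspondingly as $p\cdot\Psi_1+(1-p)\cdot\Psi_2$ with $\Gamma_i~\simordl_S~\Psi_i$; apply Lemma~\ref{lm:sim-like1} only to the $\Gamma_1/\Psi_1$ component, leave $\Psi_2$ unchanged; and recombine using Lemma~\ref{lm:bigstep-dist} and the linearity of the lifted relation. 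With that insertion your induction step covers all cases and the proof goes through.
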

\begin{proof}
Similar to the proof of Lemma 6.8 in \cite{Deng07ENTCS}. 
\qed
\end{proof}

\begin{proposition}\label{prop:sim-refl-trans}
The relation $\simordl_S$ is reflexive and transitive.
\end{proposition}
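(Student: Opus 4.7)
The plan is to prove reflexivity directly from Lemma~\ref{lm:sim-refl} and linearity of lifting, and then to prove transitivity by the standard ``relational composition is a simulation'' argument, using Proposition~\ref{prop:lifting} together with the sim-like Lemmas~\ref{lm:sim-like1}, \ref{lm:sim-like2} and \ref{lm:sim-like3}. For reflexivity, I take any $\Delta$ and write $\Delta = \sum_{i\in I} p_i \cdot \pdist{s_i}$; Lemma~\ref{lm:sim-refl} gives $s_i \simord \pdist{s_i}$, whence clause~(1) of Definition~\ref{def:lifting} yields $\pdist{s_i} \simordl_S \pdist{s_i}$ and clause~(2) (linearity) yields $\Delta \simordl_S \Delta$.

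For transitivity, I introduce the composition relation
\[
\Rcal \;=\; \{\,(s,\Theta) \in S_p \times \Dcal(S_p) \mid \exists \Phi.\ s \simord \Phi \ \mbox{and}\ \Phi \simordl_S \Theta\,\}
\]
and show it is a simulation, so that by maximality of $\simord$ we get $\Rcal \subseteq \simord$. For clause~(2) of Definition~\ref{def:sim}, suppose $s \Rcal \Theta$ via some $\Phi$ and $s \sstep{\alpha} \Delta$ with $\alpha$ not an input. From $s \simord \Phi$ I obtain $\Phi \bstep{\hat\alpha} \Phi'$ with $\Delta \simordl_S \Phi'$, and Lemma~\ref{lm:sim-like3} applied to $\Phi \simordl_S \Theta$ gives $\Theta \bstep{\hat\alpha} \Theta'$ with $\Phi' \simordl_S \Theta'$. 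Decomposing $\Delta \simordl_S \Phi'$ as $\Delta = \sum_j q_j \pdist{r_j}$ with $r_j \simord \Phi'_j$ and $\Phi' = \sum_j q_j \cdot \Phi'_j$, Proposition~\ref{prop:lifting} splits $\Theta' = \sum_j q_j \cdot \Theta'_j$ with $\Phi'_j \simordl_S \Theta'_j$; then each $(r_j,\Theta'_j) \in \Rcal$, giving $\Delta \lift{\Rcal} \Theta'$.

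The delicate clause is clause~(1) for input, which forces us to chain three sim-like lemmas. Given $s \sstep{a(x)} \Delta$ and any name $w$, $s \simord \Phi$ supplies $\Phi \bstep{\hat\tau} \Phi_1 \sstep{a(x)} \Phi_2$ and $\Phi_2[w/x] \bstep{\hat\tau} \Phi'$ with $\Delta[w/x] \simordl_S \Phi'$. Using $\Phi \simordl_S \Theta$, Lemma~\ref{lm:sim-like3} pushes $\Phi \bstep{\hat\tau} \Phi_1$ to some $\Theta \bstep{\hat\tau} \Theta_1$ with $\Phi_1 \simordl_S \Theta_1$; Lemma~\ref{lm:sim-like2} applied to the latter together with $\Phi_1 \sstep{a(x)} \Phi_2$ then produces $\Theta_1 \bstep{\hat\tau} \Psi_1 \sstep{a(x)} \Psi_2$ with $\Psi_2[w/x] \bstep{\hat\tau} \Psi$ and $\Phi_2[w/x] \simordl_S \Psi$; and a final invocation of Lemma~\ref{lm:sim-like3}, with $\Phi_2[w/x] \simordl_S \Psi$ and $\Phi_2[w/x] \bstep{\hat\tau} \Phi'$, yields $\Psi \bstep{\hat\tau} \Psi'$ with $\Phi' \simordl_S \Psi'$. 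Concatenating the $\tau$-closures leaves the required witnesses $\Theta \bstep{\hat\tau} \Psi_1 \sstep{a(x)} \Psi_2$ and $\Psi_2[w/x] \bstep{\hat\tau} \Psi'$, and the same decompose-and-reassemble manoeuvre applied to $\Delta[w/x] \simordl_S \Phi'$ and then $\Phi' \simordl_S \Psi'$ via Proposition~\ref{prop:lifting} produces $\Delta[w/x] \lift{\Rcal} \Psi'$. The main obstacle is exactly this chaining: each sim-like lemma consumes only one piece of the required transition pattern, and we must deliberately refuse to invoke transitivity of $\simordl_S$ itself, since that is the very property we are proving---reassembling the two successive $\simordl_S$-links only as a $\Rcal$-lifting keeps the argument non-circular.

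Finally, transitivity of $\simordl_S$ follows from $\Rcal \subseteq \simord$: given $\Delta \simordl_S \Phi$ and $\Phi \simordl_S \Theta$, decompose $\Delta = \sum_i p_i \pdist{s_i}$ with $s_i \simord \Phi_i$ and $\Phi = \sum_i p_i \Phi_i$; by Proposition~\ref{prop:lifting}, $\Theta = \sum_i p_i \Theta_i$ with $\Phi_i \simordl_S \Theta_i$; each $(s_i,\Theta_i) \in \Rcal \subseteq \simord$, and linearity of lifting then assembles $\Delta \simordl_S \Theta$.
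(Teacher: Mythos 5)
Your proof is correct and follows essentially the same route as the paper: the same composition relation $\Rcal$, the same chaining of Lemmas~\ref{lm:sim-like2} and \ref{lm:sim-like3} for the input clause, and the same reassembly of two successive $\simordl_S$-links into a $\lift{\Rcal}$-link via Proposition~\ref{prop:lifting}. The only difference is that you spell out the decompose-and-reassemble step that the paper states as a claim and defers to the pCSP reference.
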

\begin{proof}
Reflexivity of $\simordl_S$ follows from Lemma~\ref{lm:sim-refl}.
To show transitivity, let us define a relation $\Rcal \subseteq S_p \times \Dcal(S_p)$
as follows: 
$
s ~ \Rcal ~ \Theta 
$
iff there exists $\Delta$ such that
$s ~ \simord ~ \Delta$ and $\Delta ~ \simordl_S ~ \Theta.$
We show that $\Rcal$ is a simulation. 

But first, we claim that 
$\Theta ~ \simordl_S ~ \Delta ~ \simordl_S ~ \Phi$ implies $\Theta ~ \overline \Rcal ~ \Phi.$
This can be proved similarly as in the case of CSP (see the proof of Proposition 6.9
in \cite{Deng07ENTCS}).

Now to show that $\Rcal$ is a simulation, there are two cases to consider.
Suppose $s ~ \Rcal ~ \Phi$, i.e., $s ~ \simord ~ \Delta ~ \simordl_S ~ \Phi.$
\begin{itemize}
\item Suppose $s \sstep {\alpha} \Theta$, where $\alpha$ is either $\tau$,
a free action or a bound output action.
From $s ~ \simord ~ \Delta$, we have 
\begin{equation}
\label{eq:sim-trans-1}
\Delta \bstep {\hat \alpha} \Delta' \qquad \mbox{ and } \qquad
\Theta ~ \simordl_S ~ \Delta'.
\end{equation}
By Lemma~\ref{lm:sim-like3} and (\ref{eq:sim-trans-1}), we have
$\Phi \bstep{\hat \alpha} \Phi'$ and $\Delta' ~ \simordl_S ~ \Phi'$,
and by the above claim and (\ref{eq:sim-trans-1}), $\Theta ~ \overline \Rcal ~ \Phi'$.

\item Suppose $s \sstep {a(x)} \Theta,$ so we have: for all $w$, there exist
$\Delta_1$, $\Delta_2$, and $\Delta'$ such that
\begin{equation}
\Delta \bstep{\hat \tau} \Delta_1 \sstep{a(x)} \Delta_2, \qquad
\Delta_2[w/x] \bstep{\hat \tau} \Delta', \qquad \mbox{ and }
\Theta[w/x] ~ \simordl_S ~ \Delta'.
\end{equation}
Since $\Delta ~ \simordl_S ~ \Phi$, by Lemma~\ref{lm:sim-like3} we have
$\Phi \bstep {\hat \tau} \Phi_1$ and $\Delta_1 ~ \simordl_S ~ \Phi_1.$
And since $\Delta_1 \sstep{a(x)} \Delta_2$, by Lemma~\ref{lm:sim-like2},
for all $w$, there exist $\Phi_2$, $\Phi_3$ and $\Phi_4$ such that:
$$
\Phi_1 \bstep{\hat\tau} \Phi_2 \sstep{a(x)} \Phi_3,
\qquad
\Phi_3[w/x] \bstep{\hat\tau} \Phi_4, 
\qquad
\Delta_2[w/x] ~ \simordl_S ~ \Phi_4.
$$
Lemma~\ref{lm:sim-like3}, together with $\Delta_2[w/x]  ~ \simordl_S ~ \Phi_4$
and $\Delta_2[w/x] \bstep{\hat\tau} \Delta'$, implies that
$\Phi_4 \bstep{\hat\tau} \Phi_5$ and $\Delta' ~ \simordl_S ~ \Phi_5$ for some
$\Phi_5.$
From $\Theta[w/x] ~ \simordl_S ~ \Delta'$ and $\Delta' ~ \simordl_S ~ \Phi_5$,
we have $\Theta[w/x] ~ \overline \Rcal ~ \Phi_5.$
Putting it all together, we have: 
$$
\Phi \bstep{\hat\tau} \Phi_2 \sstep{a(x)} \Phi_3, 
\qquad
\Phi_3[w/x] \bstep{\hat\tau} \Phi_5, 
\qquad 
\Theta[w/x] ~ \overline \Rcal ~ \Phi_5.
$$
\end{itemize}
Thus $\Rcal$ is indeed a simulation. 
\qed
\end{proof}

\begin{proposition}
\label{prop:failsim-refl-trans}
The relation $\simordl_{FS}$ is reflexive and transitive.
\end{proposition}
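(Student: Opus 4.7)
The plan is to mirror the proof of Proposition~\ref{prop:sim-refl-trans}, adapting the argument to the failure-simulation setting by also carrying through the refusal clause. Reflexivity is immediate from Lemma~\ref{lm:sim-refl}, which gives $s \failsimord \pdist{s}$ for every state-based process $s$. For transitivity, define $\Rcal \subseteq S_p \times \Dcal(S_p)$ by $s \,\Rcal\, \Theta$ iff there exists $\Delta$ with $s \failsimord \Delta$ and $\Delta \simordl_{FS} \Theta$, and argue, exactly as in Proposition~\ref{prop:sim-refl-trans}, that $\Rcal$ is a failure simulation (from which $\Theta \simordl_{FS} \Delta \simordl_{FS} \Phi$ implies $\Theta \lift{\Rcal} \Phi$, yielding transitivity of $\simordl_{FS}$).

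To verify that $\Rcal$ is a failure simulation, I first state failure-simulation analogues of Lemmas~\ref{lm:sim-like1}, \ref{lm:sim-like2} and \ref{lm:sim-like3} with $\simordl_S$ replaced throughout by $\simordl_{FS}$. Their proofs are the same as in the simulation case, since the transition-matching content of the failure-simulation definition (clauses 1 and 2) is identical to that of simulation, and none of those lemmas inspects refusals. Using these, the transition clauses for $\Rcal$ are discharged exactly as in Proposition~\ref{prop:sim-refl-trans}: given $s \failsimord \Delta \simordl_{FS} \Phi$ and a transition $s \sstep{\alpha} \Theta$, the free/bound-output/$\tau$ case is handled via the (failure-)analogue of Lemma~\ref{lm:sim-like3}, and the input case via the analogue of Lemma~\ref{lm:sim-like2} composed with Lemma~\ref{lm:sim-like3}.

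The new content is clause 3. Suppose $s \not\barb{X}$ and $s \failsimord \Delta \simordl_{FS} \Phi$. Applying clause 3 of $\failsimord$ to $s \failsimord \Delta$ yields $\Delta \bstep{\hat \tau} \Delta'$ with $\Delta' \not\barb{X}$. The failure-simulation analogue of Lemma~\ref{lm:sim-like3} applied to $\Delta \simordl_{FS} \Phi$ gives $\Phi \bstep{\hat\tau} \Phi'$ with $\Delta' \simordl_{FS} \Phi'$. Decomposing this lifted relation, $\Delta' = \sum_{j} q_j \cdot \pdist{u_j}$, $u_j \failsimord \Phi'_j$, and $\Phi' = \sum_j q_j \cdot \Phi'_j$; since $\Delta' \not\barb{X}$, each $u_j \not\barb{X}$, so clause 3 for $\failsimord$ yields $\Phi'_j \bstep{\hat \tau} \Phi''_j \not\barb{X}$. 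Lemma~\ref{lm:bigstep-dist} then assembles a single silent move $\Phi' \bstep{\hat\tau} \sum_j q_j \cdot \Phi''_j =: \Phi''$, and $\Phi'' \not\barb{X}$ by construction. Concatenating the two silent moves gives $\Phi \bstep{\hat \tau} \Phi'' \not\barb{X}$, as required.

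The main obstacle is precisely this refusal-preservation step: the refusal $\not\barb{X}$ is a statewise property, yet it has to be propagated through a lifted failure-simulation relation and then recombined into a single distribution-level silent evolution. The argument is routine once one realises that $\Delta' \not\barb{X}$ is preserved under the canonical decomposition of the lifting, and that Lemma~\ref{lm:bigstep-dist} lets us glue the per-state silent moves back together. Aside from verifying the failure-simulation analogues of the sim-like lemmas (which require only minor bookkeeping beyond the proofs already cited from \cite{Deng07ENTCS}), no new technique is needed.
\qed
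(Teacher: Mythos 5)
Your proof is correct and takes essentially the same route as the paper's: reflexivity via Lemma~\ref{lm:sim-refl}, transitivity by showing the composite relation $\Rcal$ is a failure simulation using failure-simulation analogues of Lemmas~\ref{lm:sim-like1}--\ref{lm:sim-like3}, and the refusal clause handled by decomposing the lifted relation (as in Proposition~\ref{prop:lifting}) into point distributions, applying clause 3 statewise, and reassembling the silent moves via Lemma~\ref{lm:bigstep-dist}.
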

\begin{proof}
Reflexivity of $\simordl_{FS}$ follows from Lemma~\ref{lm:sim-refl}.
To show transivity, we use a similar argument as in
the proof of Proposition~\ref{prop:sim-refl-trans}:
define $\Rcal$ such that
$s ~ \Rcal ~ \Theta$ iff there exists $\Delta$ such that
$s ~ \failsimord ~ \Delta$ and $\Delta ~ \simordl_{FS} ~ \Theta.$ We show that $\Rcal$ is a failure simulation.

Suppose $s ~ \Rcal ~ \Theta$. 
The matching up of transitions between $s$ and $\Theta$ 
is proved similarly to the case with simulation, by
proving the analog of Lemmas~\ref{lm:sim-like1} - \ref{lm:sim-like3} for failure simulation. 
It then remains to show that
when $s \not \barb{X}$ then there exists
$\Theta'$ such that $\Theta \dar{\hat \tau} \Theta' \not \barb{X}.$
Since $s ~\Rcal ~ \Theta$, by the definition of $\Rcal$,
we have a $\Delta$ s.t. $s ~ \failsimord ~ \Delta$
and $\Delta ~ \simordl_{FS} ~ \Theta.$ The former
implies that $\Delta \dar{\hat \tau} \Delta' \not \barb{X}$,
for some $\Delta'$. 
It can be shown that, using arguments similar to
the proof of Lemma~\ref{lm:sim-like3} that
$\Theta \dar{\hat\tau} \Theta'$ for some $\Theta'$
such that $\Delta' ~ \simordl_{FS} \Theta'.$
Suppose $\supp {\Delta'} = \{s_i\}_{i \in I},$ i.e.,
$\Delta' = \sum_{i\in I} p_i \cdot \pdist{s_i}$
with $\sum_{i\in I} p_i = 1.$
Obviously, $s_i \not \barb{X}$ for each $i\in I.$
By Proposition~\ref{prop:lifting}, 
$\Theta = \sum_{i\in I} p_i \cdot \Theta_i$
for some distributions $\Theta_i$ such that
$\pdist{s_i} ~ \simordl_{FS} ~ \Theta_i.$  
The latter implies, by Definition~\ref{def:lifting},
that $s_i ~ \failsimord ~ \Theta_i.$ 
Since $s_i \not \barb{X}$, it follows that
$\Theta_i \dar{\hat \tau} \Theta_i' \not \barb{X}$,
for some $\Theta_i'.$ 
Thus $\Theta \dar{\hat \tau} (\sum_{i \in I} p_i \cdot \Theta_i) \not \barb{X}.$
\qed
\end{proof}

\begin{corollary}\label{cor:sim.fail.preorder}
The relations $\simrel_S$ and $\failsimpreo$ are preorders.
\end{corollary}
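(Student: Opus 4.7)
The plan is to reduce both claims to the reflexivity and transitivity of the lifted relations $\lift{\simord}$ and $\lift{\failsimord}$ on distributions, which are exactly Propositions~\ref{prop:sim-refl-trans} and~\ref{prop:failsim-refl-trans}. The only gap to bridge is the extra weak $\hat\tau$-closure present on one side of each preorder's definition.

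For reflexivity, the witness $\Theta := \interp P$ works in both cases: $\interp P \bstep{\hat\tau} \interp P$ always holds since $\bstep{\hat\tau}$ is a reflexive--transitive closure, and $\interp P \lift{\simord} \interp P$ (respectively $\interp P \lift{\failsimord} \interp P$) is immediate from Proposition~\ref{prop:sim-refl-trans} (respectively Proposition~\ref{prop:failsim-refl-trans}).

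For transitivity of $\simpreo$, suppose $P \simpreo Q$ and $Q \simpreo R$ are witnessed by $\Theta_1, \Theta_2$ respectively, so that $\interp Q \bstep{\hat\tau} \Theta_1$, $\interp P \lift{\simord} \Theta_1$, $\interp R \bstep{\hat\tau} \Theta_2$, and $\interp Q \lift{\simord} \Theta_2$. I would apply Lemma~\ref{lm:sim-like3} to $\interp Q \lift{\simord} \Theta_2$ together with the weak transition $\interp Q \bstep{\hat\tau} \Theta_1$ to obtain some $\Theta'$ with $\Theta_2 \bstep{\hat\tau} \Theta'$ and $\Theta_1 \lift{\simord} \Theta'$. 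Concatenating weak transitions gives $\interp R \bstep{\hat\tau} \Theta'$, and transitivity of $\lift{\simord}$ applied to $\interp P \lift{\simord} \Theta_1 \lift{\simord} \Theta'$ gives $\interp P \lift{\simord} \Theta'$, so that $\Theta'$ witnesses $P \simpreo R$.

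The transitivity of $\failsimpreo$ is dual: given witnesses with $\interp P \bstep{\hat\tau} \Theta_1$, $\interp Q \lift{\failsimord} \Theta_1$, $\interp Q \bstep{\hat\tau} \Theta_2$, and $\interp R \lift{\failsimord} \Theta_2$, I would apply the failure-simulation analog of Lemma~\ref{lm:sim-like3} (the one alluded to in the proof of Proposition~\ref{prop:failsim-refl-trans}) to $\interp Q \lift{\failsimord} \Theta_1$ and $\interp Q \bstep{\hat\tau} \Theta_2$, producing $\Theta'$ with $\Theta_1 \bstep{\hat\tau} \Theta'$ and $\Theta_2 \lift{\failsimord} \Theta'$; then $\interp P \bstep{\hat\tau} \Theta'$ by concatenation, and $\interp R \lift{\failsimord} \Theta'$ by transitivity of $\lift{\failsimord}$. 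Since Propositions~\ref{prop:sim-refl-trans} and~\ref{prop:failsim-refl-trans} already contain all the real content, there is no genuine obstacle here; the only thing to be careful about is the asymmetry between the two definitions --- in $\simpreo$ the $\hat\tau$-closure sits on $\interp Q$'s side while in $\failsimpreo$ it sits on $\interp P$'s side --- so the common descendant $\Theta'$ is reached downstream of $\Theta_2$ in the simulation argument but downstream of $\Theta_1$ in the failure-simulation one.
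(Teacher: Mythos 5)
Your proposal is correct and follows essentially the same route as the paper: the paper's own (very terse) proof likewise reduces the claim to Lemma~\ref{lm:sim-like3} (and its failure-simulation analog) for pushing the lifted relation across the weak $\hat\tau$-step, together with Propositions~\ref{prop:sim-refl-trans} and~\ref{prop:failsim-refl-trans} for reflexivity and transitivity of the lifted relations. Your explicit handling of the asymmetric placement of the $\hat\tau$-closure in the two definitions is exactly the detail the paper leaves implicit.
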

\begin{proof}
The fact that $\simrel_S$ is a preorder
follows from Lemma~\ref{lm:sim-like3} and Proposition~\ref{prop:sim-refl-trans}. 
Similar arguments hold for $\failsimpreo$, using an analog of
Lemma~\ref{lm:sim-like3} and Proposition~\ref{prop:failsim-refl-trans}.
\qed
\end{proof}

\section{Soundness of the simulation preorders}
\label{sec:sound}

In proving soundness of the simulation preorders with respect to
testing preorders, we first need to prove
certain congruence properties, i.e., closure
under restriction and parallel composition. For this, it is helpful
to consider a slightly more general definition of simulation,
which incorporates another relation. This technique, called
the {\em up-to} technique, has been used in the literature
to prove congruence properties of various (pre-)order for
the $\pi$-calculus~\cite{Sangiorgi98MSCS}.

\begin{definition}[Up-to rules]
Let $\Rcal \subseteq S_p \times \Dcal(S_p).$
Define the relation $\Rcal^t$ where $t \in \{r, \nu, p\}$
as the smallest relation which satisfies
the closure rule for $t$, given below (where $\sigma$ is a renaming substitution):
$$
\infer[r]
{s \sigma ~\Rcal^r ~ \Delta \sigma}
{s ~ \Rcal ~ \Delta}
\qquad
\infer[\nu]
{(\nu \vec x.s) ~ \Rcal^\nu ~ (\nu \vec x.\Delta)}
{s ~ \Rcal ~ \Delta}
\qquad
\infer[p]
{(s_1 ~|~ s_2) ~ \Rcal^p ~ (\Delta_1 ~|~ \Delta_2)}
{
 s_1 ~ \Rcal ~ \Delta_1 
& 
 s_2 ~ \Rcal ~ \Delta_2
}
$$
\end{definition}

\begin{definition}[(Failure) Simulation up-to]
A relation $\Rcal \subseteq S_p \times \Dcal(S_p)$ is said to be a {\em (failure) simulation up to 
renaming} (likewise, restriction and parallel composition) if 
it satisfies the clauses 1, and 2, (and 3 for failure simulation) 
in Definition~\ref{def:sim}, but with
$\overline \Rcal$ in the clauses  replaced by 
$\overline {\Rcal^r}$ (respectively, $\overline{\Rcal^\nu}$ and
$\overline{\Rcal^p}$).
\end{definition}

It is easy to see that $\Rcal \subseteq \Rcal^t$ for any $t \in \{r,\nu\}$
(i.e., via the identity relation as renaming substitution in the former, and via the empty restriction
in the latter). The following lemma is then an easy consequence.

\begin{lemma}
\label{lm:sim-upto}
If $\Rcal$ is a (failure) simulation then it is a (failure) simulation up-to renaming,
and also a (failure) simulation up to restriction.
\end{lemma}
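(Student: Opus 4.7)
The plan is to exploit the two inclusions $\Rcal \subseteq \Rcal^r$ and $\Rcal \subseteq \Rcal^\nu$ flagged in the paragraph preceding the statement, and then propagate them through the lifting operation. The argument has three short steps.

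First, I would verify the two pointwise inclusions explicitly. For $t = r$, take the identity renaming $\sigma = \mathrm{id}$ in the closure rule; since $s\sigma = s$ and $\Delta\sigma = \Delta$, every pair $s \Rcal \Delta$ is already in $\Rcal^r$. For $t = \nu$, take the empty sequence $\vec x$ in the closure rule for restriction, so that $\nu\vec x.s = s$ and $\nu\vec x.\Delta = \Delta$, giving $s \Rcal \Delta$ implies $s \Rcal^\nu \Delta$. (If one prefers not to rely on the empty-sequence convention, the same conclusion follows by choosing any single fresh $x \notin fn(s) \cup fn(\Delta)$, since $\nu x.s$ and $\nu x.\Delta$ are then structurally identical to $s$ and $\Delta$ up to a trivial restriction that plays no role in any subsequent transition.)

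Second, I would observe that the lifting operation is monotone with respect to inclusion of relations: if $\Rcal_1 \subseteq \Rcal_2$ then $\lift{\Rcal_1} \subseteq \lift{\Rcal_2}$. This is immediate by induction on the construction of $\lift{\Rcal_1}$ per Definition~\ref{def:lifting}, since clause (1) gives point distributions in $\lift{\Rcal_2}$ and clause (2) is preserved by any superset. Consequently $\lift{\Rcal} \subseteq \lift{\Rcal^t}$ for $t \in \{r, \nu\}$.

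Third, I would conclude by rereading the clauses of Definition~\ref{def:sim}. Fix $s \Rcal \Theta$. Clauses (1) and (2) assert the existence of distributions $\Theta', \Theta_1, \Theta_2$ matching the relevant transitions with the final relatedness being $(\Delta[w/x]) \lift{\Rcal} \Theta'$ or $\Delta \lift{\Rcal} \Theta'$. By the second step, the same witnesses satisfy $(\Delta[w/x]) \lift{\Rcal^t} \Theta'$ or $\Delta \lift{\Rcal^t} \Theta'$, which is exactly what the up-to definition requires. Clause (3), used for failure simulation, does not mention $\lift{\Rcal}$ at all, so it is preserved verbatim. This shows that $\Rcal$ is a (failure) simulation up-to renaming and up-to restriction. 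There is no real obstacle; the only thing worth being careful about is the empty-sequence/identity-substitution convention used to obtain $\Rcal \subseteq \Rcal^t$ in the first step.
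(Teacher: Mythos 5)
Your proposal is correct and follows essentially the same route as the paper, which likewise observes $\Rcal \subseteq \Rcal^r$ (via the identity renaming) and $\Rcal \subseteq \Rcal^\nu$ (via the empty restriction) and then declares the lemma an easy consequence. Your explicit middle step --- monotonicity of the lifting operation, so that $\lift{\Rcal} \subseteq \lift{\Rcal^t}$ and the same witnesses in the simulation clauses still work --- is exactly the detail the paper leaves implicit.
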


Our objective is really to show that simulation up-to parallel composition
is itself a simulation. This would then entail that (the lifted) simulation
is closed under parallel composition, from which soundness w.r.t. may-testing
follows. We prove this indirectly in three stages: 
\begin{itemize}
\item simulation up-to renaming is a simulation;
\item simulation up-to restriction is a simulation up-to renaming (hence also
a simulation by the previous item); 
\item and, finally, simulation up-to parallel composition is a simulation up-to
restriction.
\end{itemize}

\subsection{Up to renaming}

Note that as a consequence of Lemma~\ref{lm:rename} (1),  
given an injective renaming substitution
$\sigma$, we have: 
if $\one {s\sigma}{\alpha'}{\Delta'}$
then there exists $\alpha$ and $\Delta$ such that $\alpha' = \alpha \sigma$,
$\Delta' = \Delta\sigma$ and $\one s \alpha \Delta.$ This is proved by
simply applying Lemma~\ref{lm:rename} (1) to $\one {s\sigma}{\alpha'}{\Delta'}$
using the inverse of $\sigma$.

In the following, we shall write $\Rcal^{tt}$ to denote 
$(\Rcal^t)^t$, i.e., the result of applying the up-to closure
rule $t$ twice to $\Rcal.$

\begin{lemma}
$\Rcal^{rr} = \Rcal^r.$
\end{lemma}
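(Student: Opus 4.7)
The plan is to prove the equality by a double inclusion, unpacking the definition of $\Rcal^r$ in each direction.

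For the inclusion $\Rcal^r \subseteq \Rcal^{rr}$, I will just invoke the observation made just before Lemma~\ref{lm:sim-upto}: any relation $\Scal$ is contained in $\Scal^r$, since one may take $\sigma$ to be the identity renaming in the up-to-$r$ closure rule. Applying this with $\Scal = \Rcal^r$ gives $\Rcal^r \subseteq (\Rcal^r)^r = \Rcal^{rr}$.

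For the converse inclusion $\Rcal^{rr} \subseteq \Rcal^r$, suppose $s' \, \Rcal^{rr} \, \Delta'$. By definition of $(\Rcal^r)^r$, there exist $s, \Delta$ and a renaming substitution $\sigma$ with $s \, \Rcal^r \, \Delta$, $s' = s\sigma$, and $\Delta' = \Delta\sigma$. Unfolding the definition of $\Rcal^r$ once more, there exist $s_0, \Delta_0$ and a renaming substitution $\sigma_0$ such that $s_0 \, \Rcal \, \Delta_0$, $s = s_0 \sigma_0$, and $\Delta = \Delta_0 \sigma_0$. Using that substitution on processes and (by the definition of $\Delta[y/x]$ lifted to general substitutions) on distributions is compatible with composition, we obtain $s' = s_0 (\sigma_0 \sigma)$ and $\Delta' = \Delta_0 (\sigma_0 \sigma)$. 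Since the composition of two injective maps is injective, $\sigma_0 \sigma$ is itself a renaming substitution, so by the up-to-$r$ rule applied to $s_0 \, \Rcal \, \Delta_0$ we get $s' \, \Rcal^r \, \Delta'$.

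There is no real obstacle here; the only mildly technical point is checking that $(\Delta \sigma_0)\sigma = \Delta(\sigma_0 \sigma)$ when the substitution is lifted pointwise to distributions, which is a routine rewriting of the defining sum $\Delta[y/x](s) = \sum\{\Delta(s') \mid s'[y/x] = s\}$ under composition. Once this and the analogous identity on state-based processes are noted, the two inclusions close the argument.
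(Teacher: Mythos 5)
Your proof is correct, and since the paper states this lemma without any proof, your double-inclusion argument (identity renaming for $\Rcal^r \subseteq \Rcal^{rr}$, composition of injective renamings plus compatibility of lifted substitution with composition for the converse) is exactly the argument the authors are implicitly relying on. The only point worth being explicit about is that the identity $(s\sigma_0)\sigma = s(\sigma_0\sigma)$ holds up to $\alpha$-equivalence, which is the congruence the paper works with, so nothing is missing.
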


\begin{lemma}
\label{lm:lift-renaming}
If $\Delta_1 ~ \overline{\Rcal^r} ~ \Delta_2$ then
$(\Delta_1\sigma) ~ \overline {\Rcal^r} ~ (\Delta_2 \sigma)$ 
for any renaming substitution $\sigma.$
\end{lemma}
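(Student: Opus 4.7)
The plan is to unfold the lifting, push $\sigma$ through the resulting decomposition by routine bookkeeping on the pointwise definition of substitution on distributions, and then reapply the up-to-renaming closure rule $r$, using the preceding lemma $\Rcal^{rr} = \Rcal^r$ to absorb the extra renaming. From $\Delta_1 ~ \overline{\Rcal^r} ~ \Delta_2$, I would first decompose: writing $\Delta_1 = \sum_{s \in \supp{\Delta_1}} \Delta_1(s) \cdot \pdist{s}$ and applying Proposition~\ref{prop:lifting}, one obtains a finite index set $I$, weights $p_i$ with $\sum_{i\in I} p_i = 1$, state-based processes $s_i$, and distributions $\Theta_i$ such that
\[
\Delta_1 = \sum_{i\in I} p_i \cdot \pdist{s_i}, \qquad \Delta_2 = \sum_{i\in I} p_i \cdot \Theta_i, \qquad s_i ~ \Rcal^r ~ \Theta_i \mbox{ for each } i\in I.
\]

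Next, I would check that substitution commutes with the decomposition. Directly from $(\Delta\sigma)(s) = \sum\{\Delta(s') \mid s'\sigma = s\}$, one obtains $\pdist{s}\sigma = \pdist{s\sigma}$ and $(\sum_i p_i \cdot \Delta_i)\sigma = \sum_i p_i \cdot (\Delta_i\sigma)$, so
\[
\Delta_1\sigma = \sum_{i\in I} p_i \cdot \pdist{s_i\sigma}, \qquad \Delta_2\sigma = \sum_{i\in I} p_i \cdot (\Theta_i\sigma).
\]
The crux is then to show $s_i\sigma ~ \Rcal^r ~ \Theta_i\sigma$ for every $i$. Since $s_i ~ \Rcal^r ~ \Theta_i$ and $\sigma$ is a renaming substitution, the closure rule $r$, applied with $\Rcal^r$ in place of $\Rcal$, yields $s_i\sigma ~ (\Rcal^r)^r ~ \Theta_i\sigma$, i.e., $s_i\sigma ~ \Rcal^{rr} ~ \Theta_i\sigma$. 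By the preceding idempotence lemma $\Rcal^{rr} = \Rcal^r$, this collapses to $s_i\sigma ~ \Rcal^r ~ \Theta_i\sigma$.

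Finally, I would reassemble the lifting: clause~1 of Definition~\ref{def:lifting} gives $\pdist{s_i\sigma} ~ \overline{\Rcal^r} ~ (\Theta_i\sigma)$ for each $i$, and the linearity clause then delivers $(\Delta_1\sigma) ~ \overline{\Rcal^r} ~ (\Delta_2\sigma)$. The main obstacle is really just the invocation of $\Rcal^{rr} = \Rcal^r$; without it one would only be able to conclude $(\Delta_1\sigma) ~ \overline{\Rcal^{rr}} ~ (\Delta_2\sigma)$, which a priori sits in a larger relation than $\overline{\Rcal^r}$. All remaining steps are routine bookkeeping about how the map $\Delta \mapsto \Delta\sigma$ distributes over convex combinations and point distributions.
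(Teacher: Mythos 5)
Your proof is correct and follows essentially the same route as the paper: the paper's (one-line) argument is precisely that $\Delta_1\,\overline{\Rcal^r}\,\Delta_2$ implies $\Delta_1\sigma\,\overline{\Rcal^{rr}}\,\Delta_2\sigma$ --- which is exactly your decompose--push-$\sigma$--reassemble step --- followed by the idempotence $\Rcal^{rr}=\Rcal^r$. The only cosmetic remark is that the decomposition $\Delta_1=\sum_i p_i\cdot\pdist{s_i}$, $s_i\,\Rcal^r\,\Theta_i$, $\Delta_2=\sum_i p_i\cdot\Theta_i$ is really the standard characterisation of the lifting obtained by induction on Definition~\ref{def:lifting} (used without comment elsewhere in the paper, e.g.\ in Lemma~\ref{lm:sim-like2}) rather than a direct consequence of Proposition~\ref{prop:lifting} alone, since $\pdist{s}\,\lift{\Rcal}\,\Theta$ does not by itself yield $s\,\Rcal\,\Theta$.
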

\begin{proof}
This follows from the fact that $\Delta_1 ~ \overline{\Rcal^r} ~ \Delta_2$
implies $\Delta_1\sigma ~ \overline{\Rcal^{rr}} ~ \Delta_2\sigma$
and that $\Rcal^{rr} = \Rcal^r.$
\qed
\end{proof}

\begin{lemma}
\label{lm:upto-renaming}
If $\Rcal$ is a (failure) simulation up to renaming, then 
$\Rcal^r \subseteq \simord$ (respectively, $\Rcal^r \subseteq \failsimord$).
\end{lemma}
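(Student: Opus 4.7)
The plan is to show that $\Rcal^r$ is itself a (failure) simulation; since $\simord$ (resp.\ $\failsimord$) is the largest such relation, the desired inclusion follows. An arbitrary pair in $\Rcal^r$ has the form $(s\sigma, \Theta\sigma)$ with $s\Rcal\Theta$ and $\sigma$ a renaming substitution, so the task is, for each clause of Definition~\ref{def:sim}, to construct witnesses for $(s\sigma,\Theta\sigma)$ out of the witnesses supplied by the up-to hypothesis at $(s,\Theta)$. The key preparatory observation is that injectivity of $\sigma$ lets us invert it (after extending to a bijection on names if necessary), so Lemma~\ref{lm:rename}(1) applied in both directions yields a bijective correspondence between transitions: $s\sigma\sstep{\alpha'}\Delta'$ iff there exist $\alpha,\Delta$ with $\alpha'=\alpha\sigma$, $\Delta'=\Delta\sigma$ and $s\sstep{\alpha}\Delta$. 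The same correspondence lifts to weak transitions via Lemma~\ref{lm:rename}(2).

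For clause~2 (non-input actions) the verification is direct: pull back $s\sigma\sstep{\alpha'}\Delta'$ to some $s\sstep{\alpha}\Delta$, invoke the up-to hypothesis to obtain $\Theta\bstep{\hat\alpha}\Theta_0$ with $\Delta\;\overline{\Rcal^r}\;\Theta_0$, and push the result forward by $\sigma$ using Lemma~\ref{lm:rename}(2) and Lemma~\ref{lm:lift-renaming}. Clause~1 (input action) is more delicate. Given $s\sigma\sstep{a(x)}\Delta'$ and an arbitrary name $w$, I first pull back to $s\sstep{b(y)}\Delta$ with $b\sigma=a$ and $y$ alpha-renamed to be fresh for $\sigma$, and then choose a preimage $w_0$ for $w$: if $w\in\mathrm{range}(\sigma)$ take $w_0:=\sigma^{-1}(w)$; otherwise take $w_0$ fresh for $s,\Theta$ and work with the extended renaming $\sigma':=\sigma\cup\{w_0\mapsto w\}$, which remains injective precisely because $w\notin\mathrm{range}(\sigma)$ and which agrees with $\sigma$ on $fn(s,\Theta)$. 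Applying the up-to clause at input name $w_0$ and then renaming the entire matching configuration by $\sigma$ (or $\sigma'$) produces the required witnesses; the only routine step is to commute $[w_0/y]$ past $\sigma'$, which is legal because $w_0$ and $y$ have been chosen fresh for $\sigma'$.

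For clause~3 (failure case), given $s\sigma\not\barb{X}$ set $Y:=\{\mu\mid\mu\sigma\in X\}$. Injectivity of $\sigma$ (which also preserves the semantics of match and mismatch in both directions) together with the transition correspondence shows $s\not\barb{Y}$; the failure-simulation up-to clause then yields $\Theta\dar{\hat\tau}\Theta^+\not\barb{Y}$, which after pushing forward by $\sigma$ and applying the same correspondence pointwise on $\supp{\Theta^+\sigma}$ becomes $\Theta\sigma\dar{\hat\tau}\Theta^+\sigma\not\barb{X}$. The main obstacle throughout is the input clause, where the universally quantified name $w$ on the $\Rcal^r$-side may escape $\mathrm{range}(\sigma)$; the workaround above (fresh preimage plus an injective extension of $\sigma$) is the only real idea needed beyond routine manipulation of renamings and substitutions.
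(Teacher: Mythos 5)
Your proposal follows essentially the same route as the paper: show that $\Rcal^r$ is itself a (failure) simulation by pulling each transition of $s\sigma$ back through the inverse renaming via Lemma~\ref{lm:rename}(1), invoking the up-to hypothesis at the unrenamed pair, and pushing the resulting weak transitions and the lifted relation forward with Lemma~\ref{lm:rename}(2) and Lemma~\ref{lm:lift-renaming}, with the failure clause handled via the inverse image $X\sigma^{-1}$. If anything, your treatment of the input clause is slightly more careful than the paper's, which instantiates the universally quantified name at $w$ itself and then commutes $[w/x]$ past $\sigma$ (a step that is only literally valid when $\sigma$ fixes $w$), whereas your choice of a preimage $w_0$ with $\sigma'(w_0)=w$ (extending $\sigma$ injectively at a name fresh for $s,\Theta$ when $w$ is outside the range) handles arbitrary $w$ cleanly.
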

\begin{proof}
Suppose $\Rcal$ is a simulation. It is enough to show that $\Rcal^r$ is a simulation. 
So suppose $s ~\Rcal^r ~ \Delta$ and  $\one {s} \alpha \Theta.$
By the definition of $\Rcal^r$, $s = s'\sigma$ and $\Delta = \Delta'\sigma$
for some renaming substitution $\sigma$ and some $s'$ 
and $\Delta'$ such that $s' ~ \Rcal ~ \Delta'.$
There are several cases to consider depending on the type of $\alpha$.

\begin{itemize}
\item $\alpha$ is $\tau$ or a free action: 
By Lemma~\ref{lm:rename} (1) we have $\one {s'} {\alpha'} {\Theta'}$
for some $\alpha'$ and $\Theta'$ such that $\alpha = \alpha'\sigma$
and $\Theta = \Theta'\sigma.$ Since $\Rcal$ is a simulation up to renaming,
$s' \Rcal \Delta'$ implies that 
$\bigstep {\Delta'}{\hat{\alpha'}}{\Delta_1}$
and $\Theta' ~ \overline{\Rcal^r} ~ \Delta_1.$
The former implies, by Lemma~\ref{lm:rename} (2), that
$\bigstep{\Delta}{\hat \alpha}{\Delta_2}$
for some $\Delta_2$ such that $\Delta_2 = \Delta_1\sigma,$
while the latter implies, by Lemma~\ref{lm:lift-renaming},
that $\Theta = (\Theta'\sigma) ~ \overline {\Rcal^r} ~ (\Delta_1\sigma) = \Delta_2.$

\item $\alpha = a(x)$ for some $a$ and $x$:
In this case, $x \not \in fn(s,\Delta),$ so we can assume, without loss of generality, 
that $x$ does not occur in $\sigma.$ 
Using a similar argument as in the previous case, we have that
$\one {s'} {b(x)} {\Theta'}$
for some $b$ and $\Theta'$ such that $\sigma(b) = a$
and $\Theta = \Theta'\sigma.$
Since $\Rcal$ is a simulation up to renaming,
$s' \Rcal \Delta'$ implies that for every name $w$, there exist
$\Delta_w^1$, $\Delta_w^2$ and $\Delta_w$ such that: 
\begin{equation}
\label{eq:ren1}
\Delta' \bstep {\hat \tau} \Delta_w^1 \sstep {b(x)} \Delta_w^2, 
\qquad
\Delta_w^2[w/x] \bstep {\hat \tau} \Delta_w, \quad \mbox{ and }
\end{equation}
\begin{equation}
\label{eq:ren2}
\Theta'[w/x] ~ \overline{\Rcal^r} ~ \Delta_w.
\end{equation}
Let $\Phi_1 = \Delta_w^1\sigma$, $\Phi_2 = \Delta_w^2 \sigma$
and $\Phi = \Delta_w\sigma.$
From (\ref{eq:ren1}) and Lemma~\ref{lm:rename} (2) we get:
$$
\Delta = \Delta' \sigma \bstep {\hat \tau} 
\Delta_w^1\sigma = \Phi_1 \sstep{a(x)} \Delta_w^2 \sigma = \Phi_2.
$$
By (\ref{eq:ren1}), the freshness assumption of $x$ w.r.t. $\sigma$, 
and Lemma~\ref{lm:rename} (2), we get
$$
\Phi_2[w/x] = \Delta_w^2\sigma [w/x] = \Delta_w^2 [w/x] \sigma 
\bstep{\hat\tau} \Delta_w\sigma = \Phi.
$$
Finally, by (\ref{eq:ren2}) and Lemma~\ref{lm:lift-renaming}, 
$
\Theta[w/x] = \Theta'\sigma [w/x] = 
\Theta'[w/x] \sigma ~ \overline{\Rcal^r} ~ \Delta_w\sigma = \Phi.
$

\item $\alpha = \bar a(x)$: This case can be proved similarly to
the previous cases. 
\end{itemize}
For the case where $\Rcal$ is a failure simulation, we additionally
need to show that whenever $s ~ \Rcal^r ~ \Delta$ and $s \not \barb{X}$,
we have $\Delta \dar{\hat \tau} \Theta \not \barb{X}$ for some $\Theta$.
Since $s \Rcal \Delta$, we have $s = s'\sigma$ and $\Delta = \Delta'\sigma$
for some $s'$, $\Delta$ and renaming substitution $\sigma.$
Let $X' = X\sigma^{-1}$, i.e., $X'$ is the inverse image of $X$ under $\sigma.$
Then we have that $s' \not \barb{X'}$, and $\Delta' \dar{\hat \tau} \Theta' \not \barb{X'}.$
Applying $\sigma^{-1}$ to the latter, 
we obtain 
$\Delta \dar{\hat \tau} \Theta \not \barb{X}.$
\qed
\end{proof}

\begin{lemma}
\label{lm:clo-renaming}
Suppose $P \simrel_S Q$ ($P \simrel_{FS} Q$) 
and $\sigma$ is a 
renaming substitution. Then $P \sigma \simrel_S Q\sigma$
(respectively, $P \sigma \simrel_{FS} Q\sigma$).
\end{lemma}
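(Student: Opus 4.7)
The plan is to show $P\sigma \simrel_S Q\sigma$ (and analogously for $\failsimpreo$) by exhibiting a single canonical witness, namely, applying $\sigma$ to the witness distribution for $P \simrel_S Q$. Unfolding the definition of $\simrel_S$, we have some $\Theta$ with $\interp Q \bstep{\hat\tau} \Theta$ and $\interp P \lift{\simord} \Theta$. I will propose $\Theta\sigma$ as the witness for $P\sigma \simrel_S Q\sigma$, which requires verifying (i) $\interp{Q\sigma} \bstep{\hat\tau} \Theta\sigma$, and (ii) $\interp{P\sigma} \lift{\simord} \Theta\sigma$.

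For (i), recall from Section~\ref{sec:pi} that $\interp{Q\sigma} = \interp{Q}\sigma$, so this follows directly from Lemma~\ref{lm:rename}(2) applied to $\interp Q \bstep{\hat\tau} \Theta$ (the action $\tau$ being unaffected by $\sigma$). Likewise $\interp{P\sigma} = \interp P \sigma$, so (ii) reduces to showing $\interp{P}\sigma \lift{\simord} \Theta\sigma$.

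For (ii), the main point is the closure of the (lifted) simulation relation under renaming, and this is precisely what the up-to machinery just developed is designed to deliver. By Lemma~\ref{lm:sim-upto}, the relation $\simord$, being a simulation, is a simulation up to renaming, so by Lemma~\ref{lm:upto-renaming} we have $(\simord)^r \subseteq \simord$. Since trivially $\simord \subseteq (\simord)^r$, monotonicity of the lifting operation gives $\interp P \lift{(\simord)^r} \Theta$; then Lemma~\ref{lm:lift-renaming} yields $\interp{P}\sigma \lift{(\simord)^r} \Theta\sigma$, and applying $(\simord)^r \subseteq \simord$ once more (again using monotonicity of lifting) delivers $\interp{P}\sigma \lift{\simord} \Theta\sigma$, as required.

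The failure simulation case is entirely parallel: one uses the failure-simulation analogues of Lemmas~\ref{lm:sim-upto} and \ref{lm:upto-renaming}, together with the same Lemma~\ref{lm:lift-renaming} (which concerns only the shape of the renaming closure, and so applies uniformly to any base relation). The only point that requires care, and which I expect to be the chief obstacle, is the chain of coercions between $\simord$, $(\simord)^r$, and their lifted versions; this is purely bookkeeping once one observes that $\simord \subseteq (\simord)^r \subseteq \simord$ and that lifting is monotonic in its underlying relation. With those two facts in hand, the result is immediate.
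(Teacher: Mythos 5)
Your proposal is correct and follows essentially the same route as the paper: the paper's proof is simply ``Immediate from Lemma~\ref{lm:upto-renaming}'', and you have filled in exactly the routine details that this one-liner suppresses (transporting the witness distribution via Lemma~\ref{lm:rename}(2), and combining $\simord \subseteq (\simord)^r \subseteq \simord$ with Lemma~\ref{lm:lift-renaming} and monotonicity of lifting). No gaps.
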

\begin{proof}
Immediate from Lemma~\ref{lm:upto-renaming}. \qed
\end{proof}

\subsection{Up to name restriction}
The following lemma says that transitions are closed under name
restriction, if certain conditions are satisfied.
\begin{lemma}\label{lm:res}
\begin{enumerate}
\item For every state-based process $s$, every action $\alpha$
and every list of names $\vec x$ such that 
$\{\vec x\}\cap n(\alpha) = \emptyset$,
$\one{s}{\alpha}{\Delta}$
implies $\one{\nu \vec x.s}{\alpha}{\nu \vec x.\Delta}.$

\item For every $\Delta$ and $\Phi$, every action $\alpha$
and every list of names $\vec x$ such that 
$\{\vec x\}\cap n(\alpha) = \emptyset$,
$\Delta \sstep{\alpha} \Phi$
implies $\nu \vec x.\Delta \sstep{\alpha} \nu \vec x.\Phi.$

\item Suppose $\one s {\bar a b} \Delta$ and suppose $\vec x$ and $\vec y$
are names such that $\{\vec x,\vec y\} \cap \{a,b\} = \emptyset.$
Then $\one {\nu \vec x \nu b\nu \vec y. s} {\bar a(b)}{\nu \vec x\nu \vec y.\Delta}$.
\end{enumerate}
\end{lemma}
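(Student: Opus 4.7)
The plan is to handle the three parts in order, with parts (2) and (3) building on part (1).

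For part (1), I would proceed by induction on the length of $\vec x$. The base case $\vec x = \emptyset$ is immediate. For the inductive step $\vec x = x,\vec y$, the induction hypothesis applied to $\one{s}{\alpha}{\Delta}$ (using $\{\vec y\} \cap n(\alpha) \subseteq \{\vec x\}\cap n(\alpha) = \emptyset$) gives $\one{\nu \vec y.s}{\alpha}{\nu \vec y.\Delta}$, and one final application of the \textbf{Res} rule, whose side condition $x \not\in n(\alpha)$ is covered by the hypothesis, yields $\one{\nu x.\nu \vec y.s}{\alpha}{\nu x.\nu \vec y.\Delta}$, i.e. $\one{\nu \vec x.s}{\alpha}{\nu \vec x.\Delta}$.

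For part (2), I would unfold $\Delta \sstep{\alpha} \Phi$ using Definition~\ref{def:lifting}: there exist $p_i \in [0,1]$ summing to $1$, state-based $s_i$, and distributions $\Theta_i$ such that $\Delta = \sum_i p_i \cdot \pdist{s_i}$, $\one{s_i}{\alpha}{\Theta_i}$ for each $i$, and $\Phi = \sum_i p_i \cdot \Theta_i$. Applying part (1) pointwise gives $\one{\nu \vec x.s_i}{\alpha}{\nu \vec x.\Theta_i}$, and linearity of the lifting then yields $\sum_i p_i \cdot \pdist{\nu \vec x.s_i} \sstep{\alpha} \sum_i p_i \cdot \nu \vec x.\Theta_i$. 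The conclusion follows by recognising, from the definition of $\nu x.\Delta$ given before Figure~\ref{fig:pi}, that the left-hand side equals $\nu \vec x.\Delta$ and the right-hand side equals $\nu \vec x.\Phi$.

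For part (3), my plan is to interleave \textbf{Res} with a single application of \textbf{Open}, in such an order that the restrictions end up around $s$ in exactly the prescribed sequence $\nu\vec x\,\nu b\,\nu \vec y.s$, so that no reshuffling of binders is needed. Starting from $\one{s}{\bar a b}{\Delta}$, I would first apply part (1) with $\vec y$: the side condition $\{\vec y\} \cap n(\bar a b) = \{\vec y\}\cap \{a,b\} = \emptyset$ holds by assumption, giving $\one{\nu \vec y.s}{\bar a b}{\nu \vec y.\Delta}$. Next I would apply \textbf{Open} with the bound name chosen to be $b$; the side conditions $b \neq a$ and $b \not\in fn(\nu b.\nu\vec y.s)$ are both immediate, and since the open substitution is $[b/b]$ (the identity), this yields $\one{\nu b.\nu \vec y.s}{\bar a(b)}{\nu \vec y.\Delta}$. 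A final use of part (1) with $\vec x$ (whose side condition $\{\vec x\}\cap n(\bar a(b)) = \emptyset$ holds by hypothesis) produces $\one{\nu \vec x.\nu b.\nu \vec y.s}{\bar a(b)}{\nu \vec x.\nu \vec y.\Delta}$, as required.

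The most delicate point is really the bookkeeping in part (3): if one applies \textbf{Open} too early or too late, the resulting outermost binders no longer match the syntactic shape $\nu\vec x\,\nu b\,\nu\vec y.s$ demanded by the statement and one would have to appeal to commutativity of $\nu$, which is not explicitly part of the structural identifications made in Section~\ref{sec:pi}. Performing the restrictions for $\vec y$ before \textbf{Open} and those for $\vec x$ after it avoids this issue entirely, and the side conditions on \textbf{Open} and \textbf{Res} all follow directly from $\{\vec x,\vec y\}\cap\{a,b\}=\emptyset$.
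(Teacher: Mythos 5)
Your proof is correct, and since the paper states Lemma~\ref{lm:res} without proof, your argument (iterated \textbf{Res} by induction on $\vec x$, the standard decomposition of the lifted relation from Definition~\ref{def:lifting} for part (2), and the careful interleaving $\vec y$-restrictions / \textbf{Open} / $\vec x$-restrictions for part (3)) is exactly the routine derivation the authors evidently had in mind. The only point worth flagging is that the side condition $b\neq a$ of \textbf{Open} in part (3) is not literally among the stated hypotheses, but it is forced by the well-formedness of the conclusion (otherwise $a$ would be captured by $\nu b$) and holds in every use of the lemma in Section~\ref{sec:sound}, so this is not a gap.
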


\begin{lemma}
\label{lm:nu}
If $\Delta ~ \overline{\Rcal^\nu} ~ \Theta$ then 
$(\nu \vec x.\Delta) ~ \overline{\Rcal^\nu} ~ (\nu \vec x.\Theta) $
\end{lemma}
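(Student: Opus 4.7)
The plan is to exploit two linearity facts: that the lifted relation $\overline{\Rcal^\nu}$ admits a canonical decomposition into point-distribution pairs, and that the name-restriction operator distributes over convex combinations of distributions. The first is a routine consequence of Definition~\ref{def:lifting} (and can be read off Proposition~\ref{prop:lifting} applied to $\Rcal^\nu$): from $\Delta ~\overline{\Rcal^\nu}~\Theta$ we may write $\Delta = \sum_{i \in I} p_i \cdot \pdist{s_i}$ and $\Theta = \sum_{i \in I} p_i \cdot \Theta_i$ with $\sum_i p_i = 1$ and $s_i ~\Rcal^\nu~ \Theta_i$ for every $i \in I$. The second follows immediately from the pointwise definition of $\nu x.\Delta$ in Section~\ref{sec:pi}: $\nu \vec x.(\sum_i p_i \cdot \Delta_i) = \sum_i p_i \cdot \nu \vec x.\Delta_i$, and in particular $\nu \vec x.\pdist{s} = \pdist{\nu \vec x.s}$.

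Given these, the strategy is to push $\nu \vec x$ through the decomposition componentwise, re-apply the up-to-$\nu$ closure rule, and then re-lift by linearity. Concretely, each pair $s_i ~\Rcal^\nu~ \Theta_i$ was generated by the single closure rule, so has the form $s_i = \nu \vec y_i.s_i'$ and $\Theta_i = \nu \vec y_i.\Theta_i'$ with $s_i' ~\Rcal~\Theta_i'$ (allowing $\vec y_i$ to be empty, which covers the inclusion $\Rcal \subseteq \Rcal^\nu$). After alpha-renaming $\vec y_i$ so that its names are fresh for $\vec x$, a single further application of the closure rule with restriction list $\vec x \vec y_i$ yields $\nu \vec x.s_i = \nu \vec x\vec y_i.s_i' ~\Rcal^\nu~ \nu \vec x \vec y_i.\Theta_i' = \nu \vec x.\Theta_i$, hence $\pdist{\nu \vec x.s_i} ~\overline{\Rcal^\nu}~ \nu \vec x.\Theta_i$ by clause (1) of Definition~\ref{def:lifting}.

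Combining these steps: using linearity of $\nu \vec x$ on distributions, $\nu \vec x.\Delta = \sum_i p_i \cdot \pdist{\nu \vec x.s_i}$ and $\nu \vec x.\Theta = \sum_i p_i \cdot \nu \vec x.\Theta_i$; then clause (2) of Definition~\ref{def:lifting} (linearity of lifting) assembles the pointwise relations into $\nu \vec x.\Delta ~\overline{\Rcal^\nu}~ \nu \vec x.\Theta$, as required. The main (and only) subtlety is the bookkeeping around alpha-conversion when absorbing $\vec y_i$ into the outer $\vec x$-restriction; this is harmless since processes are identified up to renaming of bound names. Everything else is mechanical application of the definitions.
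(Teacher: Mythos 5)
Your proof is correct and is exactly the routine argument the paper evidently intends (Lemma~\ref{lm:nu} is stated there without proof): decompose the lifted relation $\overline{\Rcal^\nu}$ into point-distribution pairs, absorb the outer $\nu \vec x$ into the restriction list of the up-to-$\nu$ closure rule, and reassemble using linearity of both lifting and the pointwise definition of $\nu \vec x.\Delta$. The alpha-renaming bookkeeping is indeed harmless --- and in fact avoidable, since the closure rule can be applied directly with the concatenated list $\vec x\,\vec y_i$ (inner binders shadow any clashing outer ones), which sidesteps any worry about $\Rcal$ not being closed under renaming.
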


\begin{lemma}
\label{lm:upto-restriction}
If $\Rcal$ is a (failure) simulation up to restriction, then 
$\Rcal^\nu \subseteq \simord$ (respectively, $\Rcal^\nu \subseteq \failsimord$).
\end{lemma}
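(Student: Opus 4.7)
The plan is to show that $\Rcal^\nu$ is itself a (failure) simulation up to renaming; Lemma~\ref{lm:upto-renaming} applied to $\Rcal^\nu$ then delivers $(\Rcal^\nu)^r \subseteq \simord$ (respectively, $\failsimord$), and the trivial inclusion $\Rcal^\nu \subseteq (\Rcal^\nu)^r$ closes the argument.

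So suppose $s ~ \Rcal^\nu ~ \Delta$; by the $\nu$-closure rule, $s = \nu\vec x.s'$ and $\Delta = \nu\vec x.\Delta'$ for some $s'$, $\Delta'$ with $s' ~ \Rcal ~ \Delta'$. Given a transition $\one{s}{\alpha}{\Theta}$, I would inspect the operational rules that produced it. By alpha-renaming we may assume $\vec x$ is disjoint from all free names of $\alpha$ and, in the input case, from the universally quantified name $w$ appearing in clause~1 of Definition~\ref{def:sim}. Transitions split into two groups. Group~(a) consists of those factoring through Res alone, i.e.\ $\one{s'}{\alpha}{\Theta'}$ with $\Theta = \nu\vec x.\Theta'$ and $\vec x \cap n(\alpha) = \emptyset$; this covers $\tau$, free actions, inputs, and bound outputs that do not extract a name from $\vec x$. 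For any such transition, invoking the hypothesis on $s'~\Rcal~\Delta'$ yields a matching weak transition of $\Delta'$ whose target is $\overline{\Rcal^\nu}$-related to $\Theta'$ (or to $\Theta'[w/y]$ for inputs); Lemma~\ref{lm:res}(1)--(2) wraps the weak transition with $\nu\vec x$, and Lemma~\ref{lm:nu} wraps the relation, which lies in $\overline{(\Rcal^\nu)^r}$ trivially.

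Group~(b), the main obstacle, comprises the bound outputs $\one{\nu\vec x.s'}{\bar a(y)}{(\nu(\vec x \setminus\{z\}).\Theta')[y/z]}$ produced via Open from $\one{s'}{\bar a z}{\Theta'}$ with $z \in \vec x$. Here the hypothesis gives $\Delta' \bstep{\widehat{\bar a z}} \Phi'$ with $\Theta'~\overline{\Rcal^\nu}~\Phi'$, and an iterated use of Lemma~\ref{lm:res}(3) lifts this to $\Delta \bstep{\widehat{\bar a(y)}} (\nu(\vec x\setminus\{z\}).\Phi')[y/z]$. What remains is to show $(\nu(\vec x\setminus\{z\}).\Theta')[y/z] ~ \overline{(\Rcal^\nu)^r} ~ (\nu(\vec x\setminus\{z\}).\Phi')[y/z]$: the substitution $[y/z]$ is exactly what the $r$-closure in $(\Rcal^\nu)^r$ absorbs (via an obvious analog of Lemma~\ref{lm:lift-renaming}), while the surrounding $\nu(\vec x\setminus\{z\})$ is absorbed by Lemma~\ref{lm:nu}. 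For the failure clause in the failure-simulation case, $\nu\vec x.s'\not\barb{X}$ forces $s'$ to satisfy $\not\barb{X'}$ for the subset $X'$ of $X$ whose channels lie outside $\vec x$; the hypothesis then yields a weak $\tau$-derivative of $\Delta'$ with that failure, and wrapping with $\nu\vec x$ reinstates $\not\barb{X}$, since restrictions block precisely the barbs whose channels sit in $\vec x$. The only delicate bookkeeping is to keep $\vec x$, the extruded name $z$, and any substituted or universally quantified names mutually disjoint, which alpha-conversion on $\vec x$ always permits.
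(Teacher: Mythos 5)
Your overall strategy is exactly the paper's: show that $\Rcal^\nu$ is a (failure) simulation up to renaming and conclude via Lemma~\ref{lm:upto-renaming} and the inclusion $\Rcal^\nu \subseteq (\Rcal^\nu)^r$, with the same case split between transitions that factor through \textbf{Res} alone and the \textbf{Open}-extrusion case, and the same treatment of the failure clause via the equivalence $\nu\vec x.s'\not\barb{X}$ iff $s'\not\barb{X\setminus\{\vec x\}}$.

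The one step whose justification does not hold up as written is the opening claim that ``by alpha-renaming we may assume $\vec x$ is disjoint from \ldots\ the universally quantified name $w$'' in the input clause. That name $w$ ranges over \emph{all} names, including those in $\vec x$, and alpha-converting $\vec x$ replaces the bodies $s'$, $\Delta'$ by renamed copies that are related only by $\Rcal^r$, not by $\Rcal$ itself (the up-to-restriction hypothesis is stated only for $\Rcal$-pairs, and $\Rcal$ need not be closed under renaming). So you cannot both dodge $w$ and still invoke the hypothesis on the bodies. The paper handles this by an explicit sub-case: when $w$ clashes with the restricted (or witness) names, it first constructs the matching weak transitions $\Gamma_v^1,\Gamma_v^2,\Gamma_v$ for a \emph{fresh} name $v$ and only at the end applies the renaming $[w/v]$, absorbing it into the outer $r$-closure via Lemma~\ref{lm:rename}(2) and Lemma~\ref{lm:lift-renaming}. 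This is precisely the ``push the substitution into $(\Rcal^\nu)^r$'' manoeuvre you already deploy for the extruded name in your Group~(b), so the repair is entirely within your toolkit; but as stated, the input case has a hole for $w$ coinciding with a restricted name, and a fully rigorous write-up also has to carry the paper's explicit $[\vec y/\vec x]$ bookkeeping for the $\Rcal$-witnesses throughout rather than silently alpha-converting them.
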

\begin{proof}
Suppose $\Rcal$ is a simulation up to restriction. 
We show that $\Rcal^\nu$ is a simulation up to renaming,
hence by Lemma~\ref{lm:upto-renaming} we have 
$\Rcal^\nu \subseteq \Rcal^{\nu r} \subseteq \simord.$

Suppose $s ~ \Rcal^\nu \Delta$
and $\one {s} {\alpha}{\Theta}.$
By the definition of $\Rcal^\nu$, we have that
$s = \nu \vec x.s'$, $\Delta = \nu \vec x.\Delta'$,
and $s'[\vec y/\vec x] ~ \Rcal ~ \Delta'[\vec y/\vec x]$
for some $\vec y$ such that
$\{\vec y\} \cap fn(s,\Delta) = \emptyset.$

There are several cases depending on how the transition 
$\one s \alpha \Theta$ is derived. 
Note that there may be implicit $\alpha$-renaming involved
in the derivations of a transition judgment.
We assume that the names $\vec x$ are chosen such that
no $\alpha$-renaming is needed in deriving the transition
relation $\one{\nu \vec x.s'}{\alpha}{\Theta}$,
e.g., one such choice would be one that avoids clashes with the free 
names in $\vec y$, $s$, and $\Delta$. 

\begin{itemize}
\item $\alpha$ is either $\tau$
or a free action. In this case, the transition must have been
derived as follows:
$$
\infer=[res]
{\one {\nu \vec x.s'}{\alpha}{\nu \vec x.\Theta'}}
{
 \one {s'}{\alpha}{\Theta'}
}
$$
where $\Theta = \nu \vec x.\Theta'$
and $n(\alpha) \cap \{\vec x\} = \emptyset.$
Here a double-line in the inference rule indicates zero or more applications
of the rule. 
An inspection on the operational semantics will reveal that
in this case, $n(\alpha) \subseteq fn(s)$ and $fn(\Theta) \subseteq fn(s)$. 
So in particular, $\{ \vec y\} \cap n(\alpha) = \emptyset.$
We thus can apply the renaming substitution $[\vec y/\vec x, \vec x/\vec y]$
to get
$
\one {s'[\vec y/\vec x]} {\alpha} {\Theta'[\vec y/\vec x]}.
$
Since $s'[\vec y/\vec x] ~ \Rcal ~ \Delta'[\vec y/\vec x]$, we have
that
$
\bigstep{\Delta'[\vec y/\vec x]}{\alpha}{\Delta''[\vec y/\vec x]}
$
and
$
\Theta'[\vec y/\vec x] ~ \overline{\Rcal^\nu} ~ \Delta''[\vec y/\vec x].
$
The former implies, via Lemma~\ref{lm:res} (1), that
$
\bigstep{\nu \vec x. \Delta'}{\alpha}{\nu \vec x.\Delta''}
$ 
and the latter implies, via Lemma~\ref{lm:nu},
that 
$
(\nu \vec x. \Theta') ~ \overline{\Rcal^\nu} ~ (\nu \vec x. \Delta'')
$. 
Since $\Rcal^\nu \subseteq (\Rcal^{\nu})^r$, we also have
$
(\nu \vec x. \Theta') ~ \overline{\Rcal^{\nu r}} ~ (\nu \vec x. \Delta'').
$

\item $\alpha = a(z)$: 
With a similar argument as in the previous case, we can show that
in this case we must have $\one {s} {a(z)}{\Theta'}$
where $\Theta = \nu \vec x.\Theta'.$ We need to show that
for every name $w$, there exist $\Gamma_w^1$, $\Gamma_w^2$ and $\Gamma_w$ such that
$\Delta \bstep {\hat \tau} \Gamma_w^1 \sstep {a(z)} \Gamma_w^2$, 
$\Gamma_w^2[w/z] \bstep{\hat \tau} \Gamma_w$,
and $\Theta[w/z] ~ \overline{\Rcal^{\nu r}} ~ \Gamma_w.$

Note that $z \not \in \{\vec x\}$,
but it may be the case that $z \in \{\vec y\}.$
So we first apply a renaming $[u/z,z/u, \vec y/\vec x,\vec x/\vec y]$, for some fresh name $u$,  
to the transition $\one {s'}{a(z)}{\Theta'}$ to get:
$$
\one{s'[\vec y/\vec x]}{a(u)}{\Theta'[u/z,\vec y/\vec x]}.
$$

Since $s'[\vec y/\vec x] ~ \Rcal ~ \Delta'[\vec y/\vec x]$, 
we have, for every name $w$, some $\Delta_w^1$, $\Delta_w^2$ and $\Delta_w$
such that 
\begin{equation}
\label{eq:clo-nu2a}
\Delta'[\vec y/\vec x] \bstep{\hat\tau} 
\Delta_w^1 \sstep{a(u)} \Delta_w^2, 
\qquad
\Delta_w^2[w/u] \bstep{\hat \tau} \Delta_w, \qquad \mbox{and }
\end{equation}
\begin{equation}
\label{eq:clo-nu3a}
\Theta'[u/z, \vec y/\vec x][w/u] = \Theta'[w/z,\vec y/\vec x] ~ \overline{\Rcal^\nu} ~ \Delta_w[w/u].
\end{equation}
Let $\Phi_w^1$, $\Phi_w^2$ and $\Phi_w$ be distributions 
such that $\Delta_w^1 = \Phi_w^1[\vec y/\vec x]$,
$\Delta_w^2 = \Phi_w^2[u/z, \vec y/\vec x]$,
and $\Delta_w  = \Phi_w[\vec y/\vec x].$
So in particular, $\Delta_w^2[w/u] = \Phi_w^2[w/z, \vec y/\vec x]$
and $\Delta_w[w/u] = \Phi_w[w/z, \vec y/\vec x].$
Then (\ref{eq:clo-nu2a}) can be rewritten as:
\begin{equation}
\label{eq:clo-nu2b}
\Delta'[\vec y/\vec x] \bstep{\hat\tau} 
\Phi_w^1[\vec y/\vec x]  \sstep{a(u)} \Phi_w^2[u/z, \vec y/\vec x] 
\qquad
\Phi_w^2[w/z, \vec y/\vec x] \bstep{\hat \tau} \Phi_w[\vec y/\vec x],
\end{equation}
and (\ref{eq:clo-nu3a}) can be rewritten as:
\begin{equation}
\label{eq:clo-nu3b}
\Theta'[w/z,\vec y/\vec x] ~ \overline{\Rcal^\nu} ~ \Phi_w[w/z, \vec y/\vec x].
\end{equation}

Now, to define $\Gamma_w^1$, $\Gamma_w^2$ and $\Gamma_w$, we need to consider
two cases, based on the value of $w$. The reason is that in the construction
of $\Gamma_w$ we need to bound the free names in $\Phi_w$, so if $z$ is substituted
with a name in $\vec y$, it could get captured. 
\begin{itemize}
\item $w \not \in \{\vec x, \vec y\}$. 
In this case, define: 
$$
\Gamma_w^1 = \nu \vec x. \Phi_w^1, \qquad
\Gamma_w^2 = \nu \vec x. \Phi_w^2, \qquad
\Gamma_w = \nu \vec x. \Phi_w.
$$
By Lemma~\ref{lm:res} (1) and (\ref{eq:clo-nu2b}), we have:
$$
\nu \vec x. \Delta' \bstep {\hat \tau} \Gamma_w^1 
\sstep {a(z)} \Gamma_w^2, 
\qquad
\Gamma_w^2[w/z] \bstep {\hat \tau} \Gamma_w
$$
and by Lemma~\ref{lm:nu} and (\ref{eq:clo-nu3b}), we have 
$$
(\Theta[w/z]) = (\nu \vec x. \Theta')[w/z] ~ \overline{\Rcal^\nu} ~ 
\Gamma_w,
$$
hence also,
$
(\Theta[w/z]) = (\nu \vec x. \Theta')[w/z] ~ \overline{\Rcal^{\nu r}} ~ 
\Gamma_w.
$

\item $w \in \{\vec x, \vec y\}.$ Let $v$ be a new name (distinct from
all other names considered so far). From the previous case, we
know how to construct $\Gamma_v^1$, $\Gamma_v^2$ and $\Gamma_v$ such that
\begin{equation}
\label{eq:clo-nu3c}
\nu \vec x. \Delta' \bstep {\hat \tau} \Gamma_v^1 
\sstep {a(z)} \Gamma_v^2, 
\qquad
\Gamma_v^2[v/z] \bstep {\hat \tau} \Gamma_v
\qquad
(\Theta[v/z])  ~ \overline{\Rcal^{\nu r}} ~ 
\Gamma_v.
\end{equation}
In this case, let $\Gamma_w^1 = \Gamma_v^1$, $\Gamma_w^2 = \Gamma_v^2$
and $\Gamma_w = \Gamma_v[w/v].$ (Note that because subsitution is capture-avoiding,
the bound names in $\Gamma_v$ will be renamed via $\alpha$-conversion). 
Then by Lemma~\ref{lm:rename} (2) and Lemma~\ref{lm:lift-renaming} 
and (\ref{eq:clo-nu3c}):
$$
\nu \vec x. \Delta' \bstep {\hat \tau} \Gamma_w^1 
\sstep {a(z)} \Gamma_w^2, 
\qquad
\Gamma_v^2[w/z] \bstep {\hat \tau} \Gamma_w
\qquad
(\Theta[w/z])  ~ \overline{\Rcal^{\nu r}} ~ 
\Gamma_w.
$$
\end{itemize}

\item If $\alpha$ is a bound output action, i.e., $\alpha = \bar a(b)$ for some $a$ and $b.$
There are two subcases to consider, depending on whether $b \in \{\vec x\}$ (i.e., one of
the restriction names $\vec x$ is extruded) or not. The latter can be proved similarly to
the previous case. We show here a proof of the former case.
So suppose $b \in \vec x$, i.e., $\nu \vec x = \nu \vec x_1 \nu b \nu \vec x_2$ and
suppose that $[\vec y/\vec x]$ maps $b$ to $c$, i.e., $\nu \vec y = \nu \vec y_1 \nu c \nu \vec y_2.$ Suppose 
the transition relation is derived as follows:
$$
\infer=[res]
{\one{\nu\vec x_1\nu b \nu \vec x_2.s'}{\bar a(b)}{\nu \vec x_1 \nu \vec x_2. \Theta'}}
{
 \infer[open]
 {\one {\nu b \nu \vec x_2.s} {\bar a(b)}{\nu \vec x_2.\Theta'}}
 {
  \infer=[res]
  {\one {\nu \vec x_2.s}{\bar a b}{\nu \vec x_2.\Theta'} }
  {\one {s}{\bar a b}{\Theta'}}
 }
}
$$
Applying the renaming $[\vec y/\vec x, \vec x/\vec y]$ we have:
$
\one{s[\vec y/\vec x]}{\bar a c}{\Theta'[\vec y/\vec x]}.
$
Since $s'[\vec y/\vec x] ~ \Rcal ~ \Delta'[\vec y/\vec x]$, we have that
\begin{equation}
\label{eq:clo-nu4}
\bigstep {\Delta'[\vec y/\vec x]}{\bar a c}{\Phi}, \qquad \mbox{ and } \qquad
\Theta'[\vec y/\vec x] ~ \overline{\Rcal^\nu} ~ \Phi.
\end{equation}
Let $\Psi[\vec y/\vec x] = \Phi.$
Lemma~\ref{lm:res} (3) and (\ref{eq:clo-nu4}) imply that
$$
\bigstep{\nu \vec x.\Delta' = \nu \vec y_1\nu c\vec y_2.\Delta'[\vec y/\vec x]}
{\bar a(c)}{\nu\vec y_1\nu \vec y_2.\Psi[\vec y/\vec x] = \nu \vec x_1\vec x_2. \Psi[c/b]}
$$
and by an application of a renaming (Lemma~\ref{lm:rename} (1))
we get
$$
\bigstep {\nu \vec x.\Delta'}{\bar a (b)}{\nu\vec x_1\nu \vec x_2.\Psi}.
$$
Lemma~\ref{lm:nu} and (\ref{eq:clo-nu4}) imply
$$
(\nu \vec x_1\nu\vec x_2.\Theta'[c/b]) ~ \overline{\Rcal^\nu} ~
(\nu\vec x_1\nu \vec x_2.\Psi[c/b])
$$
hence, via the renaming $[c/b,b/c]$, 
$
(\nu \vec x_1\nu\vec x_2.\Theta') ~ \overline{\Rcal^{\nu r}} ~
(\nu\vec x_1\nu \vec x_2.\Psi).
$
\end{itemize}
If $\Rcal$ is a failure simulation up to restriction, we need to additionally
show that $\Rcal^\nu$ satisfies clause 3 of Definition~\ref{def:sim}.
Suppose $s ~ \Rcal^\nu ~ \Theta$. Then $s = \nu \vec x. s'$ and $\Theta = \nu \vec x.\Theta'$
for some $\vec x$, $s'$ and $\Theta'$ such that $s' ~ \Rcal ~ \Theta'.$
Suppose $s \not \barb{X}.$ We need to show that $\Theta \dar{\hat \tau} \Delta$
such that $\Delta \not \barb{X}$ for some $\Delta.$
Since name restriction hides visible actions, it can be shown that 
$s' \not \barb{X \setminus \{\vec x\}}$ 
iff $\nu \vec x. s' \not \barb{X}.$ 
So from $s' ~ \Rcal ~ \Theta'$ we have that $\Theta' \dar{\hat \tau} \Delta' \not \barb{X \setminus \{\vec x\}}.$
Let $\Delta = \nu \vec x.\Delta'.$
Then by Lemma~\ref{lm:res} (2), we have
$\Theta = \nu \vec x.\Theta' \dar{\hat \tau} \nu \vec x.\Delta' = \Delta \not \barb{X}.$
\qed
\end{proof}

\begin{lemma}
\label{lm:clo-res}
If $P \simrel_S Q$ ($P \simrel_{FS} Q$) then 
$(\nu \vec x. P) ~ \simrel_S (\nu \vec x.Q)$ (respectively, $(\nu \vec x.P) \simrel_{FS} (\nu \vec x.Q)$).
\end{lemma}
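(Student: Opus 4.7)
The plan is to unfold the definition of $\simpreo$ and $\failsimpreo$, push the restriction $\nu \vec x.{}$ into both the weak $\hat\tau$-transition and the lifted simulation, and then invoke the up-to-restriction machinery that we have just built.

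I will work with the may case first; the must case is entirely analogous. From $P \simpreo Q$ we obtain $\Theta$ such that $\interp{Q} \bstep{\hat\tau} \Theta$ and $\interp{P} \lift{\simord} \Theta$. Since $\interp{\nu \vec x.R} = \nu \vec x.\interp{R}$ for every $R$, it suffices to exhibit a $\Theta^\star$ with $\nu\vec x.\interp{Q} \bstep{\hat\tau} \Theta^\star$ and $\nu\vec x.\interp{P} \lift{\simord} \Theta^\star$. The natural candidate is $\Theta^\star := \nu\vec x.\Theta$.

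For the transition part I would first extend Lemma~\ref{lm:res}(2) from single strong $\alpha$-transitions to weak $\tau$-transitions by a routine induction on the length of the weak derivation: each constituent $\sstep{\tau}$ is preserved under $\nu\vec x.{}$ because $\{\vec x\} \cap n(\tau) = \emptyset$, and the identity step $\Delta = \pdist{s}$ is trivially preserved. Composing these yields $\nu\vec x.\interp{Q} \bstep{\hat\tau} \nu\vec x.\Theta$, as required.

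For the lifted simulation part, I would argue that $\simord$ is trivially a simulation up to restriction in the sense of Definition for Simulation up-to (since a simulation is a simulation up-to anything), so Lemma~\ref{lm:upto-restriction} gives $(\simord)^\nu \subseteq \simord$. Now decompose $\interp{P} \lift{\simord} \Theta$ via Definition~\ref{def:lifting} as $\interp{P} = \sum_i p_i\cdot \pdist{s_i}$, $\Theta = \sum_i p_i \cdot \Theta_i$ with $s_i \simord \Theta_i$. Applying the $\nu$-closure rule to each pair gives $(\nu\vec x.s_i) ~ (\simord)^\nu ~ (\nu\vec x.\Theta_i)$, hence $(\nu\vec x.s_i) \simord (\nu\vec x.\Theta_i)$; by linearity of lifting we conclude $\nu\vec x.\interp{P} \lift{\simord} \nu\vec x.\Theta$.

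For the failure case, I would repeat the same argument using $\failsimord$ in place of $\simord$, relying on the corresponding part of Lemma~\ref{lm:upto-restriction} to obtain $(\failsimord)^\nu \subseteq \failsimord$; note that here it is $\interp{P} \bstep{\hat\tau} \Theta$ and $\interp{Q} \lift{\failsimord} \Theta$ that we start from, and the candidate witness is again $\nu\vec x.\Theta$. The only mildly non-routine point in the whole argument, and thus what I expect to be the main obstacle, is the book-keeping needed to extend Lemma~\ref{lm:res} from strong to weak transitions and then to lifted weak transitions; everything else reduces to invoking the up-to-restriction lemma and the linearity of $\lift{(\cdot)}$.
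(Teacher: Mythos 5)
Your proposal is correct and follows essentially the same route as the paper, which disposes of this lemma as a direct corollary of Lemma~\ref{lm:sim-upto} (a simulation is a simulation up to restriction) and Lemma~\ref{lm:upto-restriction} (giving $(\simord)^\nu \subseteq \simord$ and $(\failsimord)^\nu \subseteq \failsimord$); the remaining steps you spell out --- pushing $\nu\vec x$ through the weak $\hat\tau$-move via Lemma~\ref{lm:res} and through the lifted relation via the decomposition/linearity argument (essentially Lemma~\ref{lm:nu}) --- are exactly the routine book-keeping the paper leaves implicit.
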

\begin{proof}
This is a simple corollary of Lemma~\ref{lm:sim-upto} and Lemma~\ref{lm:upto-restriction}. \qed
\end{proof}

\subsection{Up to parallel composition}

The following lemma will be useful in proving the closure of simulation
under parallel composition. It is independent of the underlying calculus,
and is originally proved in \cite{Deng07ENTCS}.
\begin{lemma}
\label{lm:par0}
\begin{enumerate}
\item $
(\sum_{j\in J} p_j \cdot \Phi_j) ~|~ (\sum_{k\in K} q_k \cdot \Delta_k)
= \sum_{j\in J} \sum_{k \in K} (p_j \cdot q_k) \cdot (\Phi_j ~|~ \Delta_k).
$

\item Suppose $\Rcal, \Rcal' \subseteq S_p \times \Dcal(S_p)$ are two relations
such that $s \Rcal' \Delta$ whenever $s = s_1 ~|~ s_2$ and 
$\Delta = \Delta_1 ~|~ \Delta_2$ with $s_1 \Rcal \Delta_1$
and $s_2 \Rcal \Delta_2.$
Then $\Phi_1 \overline \Rcal \Delta_1$
and $\Phi_2 \overline \Rcal \Delta_2$ imply
$(\Phi_1 ~|~ \Phi_2) \overline {\Rcal'} (\Delta_1 ~|~ \Delta_2)$.
\end{enumerate}
\end{lemma}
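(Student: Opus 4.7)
The plan is to prove (1) by direct pointwise computation on distributions, then use (1) together with the two clauses of Definition~\ref{def:lifting} to prove (2); this mirrors the original CSP-level argument of \cite{Deng07ENTCS} and is essentially calculus-independent.

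For (1), I would show that the two distributions agree as functions on $S_p$ by evaluating at an arbitrary $s$. By the definition of $\mid$ on distributions given in the paper, both sides vanish unless $s = s_1 \mid s_2$ for some $s_1, s_2$, so fix such a decomposition. Evaluating the left-hand side unfolds to
$\bigl(\sum_{j} p_j \Phi_j(s_1)\bigr)\cdot\bigl(\sum_{k} q_k \Delta_k(s_2)\bigr)$,
which by distributivity equals $\sum_{j,k} p_j q_k\,\Phi_j(s_1)\Delta_k(s_2)$. Evaluating the right-hand side at $s$, using the definition of the weighted sum and of $\mid$, yields exactly the same expression. So the identity holds.

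For (2), the first step is to unpack the two lifted relations. Using Definition~\ref{def:lifting} I would obtain decompositions $\Phi_1 = \sum_{j\in J} p_j \cdot \pdist{s_j^1}$, $\Delta_1 = \sum_{j\in J} p_j \cdot \Theta_j^1$ with $s_j^1 \Rcal \Theta_j^1$, and similarly $\Phi_2 = \sum_{k\in K} q_k \cdot \pdist{s_k^2}$, $\Delta_2 = \sum_{k\in K} q_k \cdot \Theta_k^2$ with $s_k^2 \Rcal \Theta_k^2$. Applying part (1) to both sides then gives
$\Phi_1 \mid \Phi_2 = \sum_{j,k} p_j q_k \cdot (\pdist{s_j^1} \mid \pdist{s_k^2})$
and $\Delta_1 \mid \Delta_2 = \sum_{j,k} p_j q_k \cdot (\Theta_j^1 \mid \Theta_k^2)$. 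A direct check against the definition of $\mid$ on distributions shows $\pdist{s_j^1} \mid \pdist{s_k^2} = \pdist{s_j^1 \mid s_k^2}$. The hypothesis on $\Rcal, \Rcal'$ yields $(s_j^1 \mid s_k^2)~\Rcal'~(\Theta_j^1 \mid \Theta_k^2)$ for every pair $(j,k)$, which by clause 1 of Definition~\ref{def:lifting} lifts to $\pdist{s_j^1 \mid s_k^2}~\overline{\Rcal'}~(\Theta_j^1 \mid \Theta_k^2)$. Finally, clause 2 (linearity) concludes $(\Phi_1 \mid \Phi_2)~\overline{\Rcal'}~(\Delta_1 \mid \Delta_2)$.

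The only delicate point is the initial decomposition in (2): I need to know that any pair in the lifted relation $\overline{\Rcal}$ can be witnessed by a weighted sum of point distributions on the left together with underlying $\Rcal$-related targets on the right. This is not one of the two closure rules per se, but is a standard consequence of the smallest-relation formulation of Definition~\ref{def:lifting}, provable by induction on the derivation (using the fact that a point distribution $\pdist{s}$ can only arise, in rule~2, from summands that are all equal to $\pdist{s}$). Once that decomposition is in hand, the rest is routine distribution arithmetic, so I do not expect any genuine obstacle.
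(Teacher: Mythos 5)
Your proof is correct. The paper itself gives no proof of this lemma, deferring to \cite{Deng07ENTCS}; your argument — pointwise verification of (1), then the standard decomposition characterisation of $\lift{\Rcal}$ (which the paper uses freely elsewhere, e.g.\ in the proof of Lemma~\ref{lm:sim-like2}) combined with (1), the identity $\pdist{s_1}\mid\pdist{s_2}=\pdist{s_1\mid s_2}$, and linearity for (2) — is exactly the expected one, and your caveat about justifying the decomposition by induction on the derivation of the lifting is the right way to discharge the only non-routine point.
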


We also need a slightly more general substitution lemma for
transitions than the one given in Lemma~\ref{lm:rename} (1).
In the following, we denote with $n(\theta)$ the set of
all names appearing in 
the domain and range of $\theta$.
\begin{lemma}
\label{lm:trans-subst}
For any substitution $\sigma$, the following hold:
\begin{enumerate}
\item If $\one s \alpha \Delta$ and $bn(\alpha) \cap n(\sigma) = \emptyset$
then $\one{s\sigma}{\alpha\sigma}{\Delta\sigma}.$

\item If $\bigstep {\Delta}{\hat \alpha}{\Phi}$ and 
$bn(\alpha) \cap n(\sigma) = \emptyset$ 
then $\bigstep{\Delta\sigma}{\hat \alpha\sigma}{\Phi\sigma}.$
\end{enumerate}
\end{lemma}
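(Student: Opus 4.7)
I would prove both parts together, with (1) by induction on the derivation of $\one s \alpha \Delta$ and (2) as a corollary of (1): unfold $\bstep{\hat\alpha}$ as a finite chain of $\sstep{\tau}$-steps surrounding a single $\sstep{\alpha}$-step between distributions, then apply (1) componentwise and use Lemma~\ref{lm:bigstep-dist} to stitch the images back together (the convex-combination structure transfers because substitution is linear on distributions, as already observed just before the operational semantics). Throughout, I would exploit the permissible $\alpha$-conversion on $s$ to assume that every bound name appearing in a \textbf{Res}, \textbf{Open}, or input prefix of the derivation is disjoint from $n(\sigma)$; combined with the hypothesis $bn(\alpha)\cap n(\sigma)=\emptyset$, this guarantees that $\sigma$ commutes with all binder-introducing constructs and with $\interp{\cdot}$ (the latter by a direct induction on $P$, generalising the equation $\interp{P[y/x]} = \interp{P}[y/x]$ noted earlier).

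Under this setup, the cases \textbf{Act}, \textbf{Sum}, \textbf{Match}, \textbf{Par}, \textbf{Res}, \textbf{Open}, \textbf{Com} and \textbf{Close} are routine: apply the IH to each premise (the side condition descends to each sub-action automatically after bound names have been renamed fresh w.r.t.\ $n(\sigma)$), and then reassemble the conclusion by re-applying the same rule to $s\sigma$. For example, \textbf{Com} with premises $\one{s}{a(x)}{\Delta_1}$ and $\one{t}{\bar a y}{\Delta_2}$ yields via the IH $\one{s\sigma}{a\sigma(x)}{\Delta_1\sigma}$ and $\one{t\sigma}{\overline{a\sigma}\,(y\sigma)}{\Delta_2\sigma}$, and a fresh application of \textbf{Com} then delivers $\one{(s|t)\sigma}{\tau}{(\Delta_1\sigma)[y\sigma/x] \,|\, \Delta_2\sigma}$, which equals $(\Delta_1[y/x] \,|\, \Delta_2)\sigma$ because $x \notin n(\sigma)$ makes the two substitutions commute. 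The \textbf{Par} side condition $bn(\alpha) \cap fn(t) = \emptyset$ transfers to $bn(\alpha\sigma) \cap fn(t\sigma) = \emptyset$ after renaming bound names freshly, and \textbf{Open} works similarly since the extruded name is by hypothesis outside $n(\sigma)$.

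The delicate case is \textbf{Mismatch}: from $[x\neq y]s\sstep{\alpha}\Delta$ (so $x \neq y$ and $s\sstep{\alpha}\Delta$) the IH produces $s\sigma\sstep{\alpha\sigma}\Delta\sigma$, but re-applying \textbf{Mismatch} to $[x\sigma\neq y\sigma]s\sigma$ demands $x\sigma\neq y\sigma$, which the side condition $bn(\alpha)\cap n(\sigma)=\emptyset$ does not by itself force. I expect this to be the main obstacle and the reason the earlier Lemma~\ref{lm:rename} was stated only for renaming substitutions. In every invocation of the present lemma in Section~\ref{sec:sound}, however, the substitution is either a renaming or one mapping into fresh names, so injectivity on $fn(s)$ is available and the \textbf{Mismatch} step goes through unchanged. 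The cleanest repair is to strengthen the inductive invariant to carry along the disequalities witnessed by the derivation (equivalently, restricting $\sigma$ to be injective on the names that the derivation compares via mismatch); with this strengthening the inductive step closes and both (1) and the corollary (2) follow.
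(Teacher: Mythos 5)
Your overall strategy is the right one and, as far as one can tell, the intended one: the paper supplies no proof of this lemma at all, and part (1) by induction on the transition derivation followed by part (2) via decomposition of $\bstep{\hat \alpha}$, componentwise application of (1), linearity of substitution on distributions, and Lemma~\ref{lm:bigstep-dist} is exactly how it must go. The routine cases are handled correctly, including the commutation $(\Delta_1[y/x])\sigma=(\Delta_1\sigma)[y\sigma/x]$ in the \textbf{Com} case, which indeed relies on the bound input name being $\alpha$-converted away from $n(\sigma)$.

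More importantly, your diagnosis of the \textbf{Mismatch} case identifies a genuine defect in the statement rather than a gap in your argument: for a non-injective $\sigma$ the lemma is false as written. Take $s=[y\neq w]\bar a a.\nil$ and $\sigma=[w/y]$; then $bn(\bar a a)\cap n(\sigma)=\emptyset$ and $\one{s}{\bar a a}{\pdist{\nil}}$, yet $s\sigma=[w\neq w]\bar a a.\nil$ is stuck. So ``for any substitution'' must carry an implicit proviso that $\sigma$ does not identify any pair of names compared by a mismatch guard along the derivation (injectivity on $fn(s)$ suffices), which is exactly the strengthening you propose; Lemma~\ref{lm:rename}(1) then becomes the special case of a global renaming. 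Since the places where the paper actually needs to push a transition through a substitution use either a renaming or an instantiation by a name chosen fresh for the process, the repaired lemma still supports its applications, and with that proviso your induction closes. This is a more careful account than the paper's own, which states the lemma for arbitrary substitutions and offers no proof.
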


The following lemma shows that transitions are closed under
parallel composition, under suitable conditions.

\begin{lemma}\label{lm:trans-par}
\begin{enumerate}
\item If $\one {s}{\alpha}{\Delta}$ and $fn(s') \cap bn(\alpha) = \emptyset$
then $\one{s~|~s'} {\alpha}{\Delta~|~\pdist{s'}}$
and $\one{s'~|~s}{\alpha}{\pdist{s'}~|~\Delta}.$

\item If $\bigstep {\Phi}{\hat \alpha}{\Delta}$, where $\alpha$ is either $\tau$,
a free action or a bound output, and $fn(\Phi') \cap bn(\alpha) = \emptyset$
then $\bigstep {\Phi~|~\Phi'} {\hat \alpha}{\Delta~|~\Phi'}$
and $\bigstep{\Phi'~|~\Phi}{\hat \alpha}{\Phi'~|~\Delta}.$

\item If $\one{\Phi}{a(y)}{\Phi'}$ and $\one{\Delta}{\bar a w}{\Delta'}$
then $\one{\Phi~|~\Delta}{\tau}{\Phi'[w/y]~|~\Delta'}.$

\item If $\one{\Phi}{a(y)}{\Phi'}$ and $\one{\Delta}{\bar a(y)}{\Delta'}$
then $\one{\Phi~|~\Delta}{\tau}{\nu y.(\Phi'~|~\Delta')}.$
\end{enumerate}
\end{lemma}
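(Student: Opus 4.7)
The plan is to prove the four items in order, lifting the corresponding single-state rules \textbf{Par}, \textbf{Com}, and \textbf{Close} from Figure~\ref{fig:pi} to the distribution level via the two-step closure given by Definition~\ref{def:lifting}.

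Item (1) follows immediately by applying \textbf{Par} (and its symmetric counterpart, which is admissible by the remark following Figure~\ref{fig:pi}): the rule's side condition $bn(\alpha)\cap fn(t)=\emptyset$ is exactly the assumption $fn(s')\cap bn(\alpha)=\emptyset$.

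For item (2), I would first unfold the lifted single-step relation $\Phi \sstep{\alpha} \Delta$ to get $\Phi = \sum_{i\in I} p_i\cdot \pdist{s_i}$ with $s_i \sstep{\alpha} \Delta_i$ and $\Delta = \sum_{i\in I} p_i\cdot \Delta_i$, and write $\Phi' = \sum_{j\in J} q_j \cdot \pdist{t_j}$. When $\alpha$ is an input or bound-output action I would first invoke Lemma~\ref{lm:lifted-trans-rename} to choose a common bound name that is also fresh for $fn(\Phi')$, so that the freshness hypothesis in item (1) applies uniformly. Applying item (1) state-wise yields $s_i \mid t_j \sstep{\alpha} \Delta_i \mid \pdist{t_j}$ for every $i,j$. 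Linearity of the lifting (Definition~\ref{def:lifting}) then combines these into a single lifted transition, and Lemma~\ref{lm:par0}(1) identifies the source as $\Phi\mid\Phi'$ and the target as $\Delta\mid\Phi'$. For the weak version $\bstep{\hat\alpha}$, I would decompose it into the canonical form $\bstep{\hat\tau}\sstep{\alpha}\bstep{\hat\tau}$ and handle each $\hat\tau$ segment by induction on the length of the $\sstep{\tau}$-chain, using Lemma~\ref{lm:bigstep-dist} together with the single-step case already established (the freshness condition $bn(\alpha)\cap fn(\Phi')=\emptyset$ is only needed at the visible-action step, as $\hat\tau$ has no bound names). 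The symmetric variant is proved in the same way.

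Items (3) and (4) are handled analogously using \textbf{Com} and \textbf{Close}. In both cases I would first apply Lemma~\ref{lm:lifted-trans-rename} so that every state in $\supp{\Phi}$ performs an input on the same bound name $y$, and (for item (4)) every state in $\supp{\Delta}$ performs a bound output on that same $y$, after which Proposition~\ref{prop:lifting} lets me decompose $\Phi = \sum_i p_i\cdot\pdist{s_i}$, $\Phi' = \sum_i p_i \cdot \Phi'_i$ with $s_i \sstep{a(y)} \Phi'_i$, and similarly $\Delta = \sum_j q_j\cdot\pdist{t_j}$, $\Delta' = \sum_j q_j\cdot \Delta'_j$. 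Applying \textbf{Com} (respectively \textbf{Close}) to each pair $s_i \mid t_j$ gives $s_i \mid t_j \sstep{\tau} \Phi'_i[w/y] \mid \Delta'_j$ (respectively $\nu y.(\Phi'_i \mid \Delta'_j)$), and one more appeal to linearity and Lemma~\ref{lm:par0}(1) packages these up into the claimed lifted transition.

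The only genuine friction is bookkeeping with bound names: without the uniform-renaming step supplied by Lemma~\ref{lm:lifted-trans-rename}, the individual pointwise transitions in the support of $\Phi$ could use different bound names, and the side conditions of \textbf{Par}, \textbf{Com}, and \textbf{Close} could not be discharged in one stroke. Once that renaming is in place, everything else is routine algebraic manipulation of distributions via Lemma~\ref{lm:par0} and Definition~\ref{def:lifting}.
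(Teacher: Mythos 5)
The paper states Lemma~\ref{lm:trans-par} without proof, and your argument is exactly the standard one that is being left implicit: apply \textbf{Par}/\textbf{Com}/\textbf{Close} pointwise to the support, recombine by linearity of the lifting (Definition~\ref{def:lifting}) and Lemma~\ref{lm:par0}(1), and iterate for the $\hat\tau$ segments of a weak transition. It is correct; the only cosmetic remark is that the appeal to Lemma~\ref{lm:lifted-trans-rename} in item (2) is superfluous, since input actions are excluded there and the hypothesis $fn(\Phi')\cap bn(\alpha)=\emptyset$ already fixes a single bound name fresh for $\Phi'$, while in items (3) and (4) the shared bound name $y$ is forced by the fact that the lifted relation is taken for the fixed action $a(y)$.
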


\begin{lemma}
\label{lm:upto-par}
If $\Rcal$ is a simulation, then $\Rcal^p \subseteq \simord$.
\end{lemma}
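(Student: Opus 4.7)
The plan is to show that $\Rcal^p$ is itself a simulation up to restriction; then Lemma~\ref{lm:upto-restriction} (together with Lemma~\ref{lm:upto-renaming}, via $\Rcal^p \subseteq (\Rcal^p)^\nu$) immediately gives $\Rcal^p \subseteq \simord$. So fix $s ~ \Rcal^p ~ \Delta$; by the shape of the closure rule $p$, we may write $s = s_1 ~|~ s_2$ and $\Delta = \Delta_1 ~|~ \Delta_2$ with $s_i ~ \Rcal ~ \Delta_i$, and hence $\pdist{s_i} ~ \overline\Rcal ~ \Delta_i$ for $i = 1,2$. We must then match every transition $\one s \alpha \Theta$.

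We do case analysis on how $\one s \alpha \Theta$ is derived (Par, Com, Close, plus symmetric versions), assuming throughout that bound names are chosen fresh for everything else in sight. For \textbf{Par} with a non-input $\alpha$ performed by, say, $s_1$, the hypothesis $s_1 ~ \Rcal ~ \Delta_1$ supplies $\bigstep{\Delta_1}{\hat \alpha}{\Delta_1'}$ with $\Theta_1 ~ \overline \Rcal ~ \Delta_1'$; Lemma~\ref{lm:trans-par}(2) lifts this to $\bigstep{\Delta_1 ~|~ \Delta_2}{\hat \alpha}{\Delta_1' ~|~ \Delta_2}$, and Lemma~\ref{lm:par0}(2) gives $(\Theta_1 ~|~ \pdist{s_2}) ~ \overline{\Rcal^p} ~ (\Delta_1' ~|~ \Delta_2)$, which in turn lies inside $\overline{(\Rcal^p)^\nu}$. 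For \textbf{Par} with input $\alpha = a(x)$ (assuming $x$ fresh for $s_2,\Delta_2$), we instantiate the universal $w$-clause from $s_1 ~ \Rcal ~ \Delta_1$ and then combine using Lemma~\ref{lm:trans-par}(1)(2), using that $\Delta_2[w/x] = \Delta_2$ and $\pdist{s_2}[w/x] = \pdist{s_2}$ by freshness of $x$.

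For \textbf{Com}, say $\one{s_1}{a(x)}{\Theta_1}$ and $\one{s_2}{\bar a w}{\Theta_2}$ with $\Theta = \Theta_1[w/x] ~|~ \Theta_2$: instantiate the input clause of $s_1 ~ \Rcal ~ \Delta_1$ at the specific name $w$ (yielding $\Delta_1 \bstep{\hat\tau} \Delta_1^1 \sstep{a(x)} \Delta_1^2$, $\Delta_1^2[w/x] \bstep{\hat\tau} \Delta_1'$, with $\Theta_1[w/x] ~ \overline\Rcal ~ \Delta_1'$), together with the free-output clause from $s_2 ~ \Rcal ~ \Delta_2$ (giving $\Delta_2 \bstep{\hat\tau} \Delta_2^a \sstep{\bar a w} \Delta_2^b \bstep{\hat\tau} \Delta_2'$). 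We then glue these together using Lemma~\ref{lm:trans-par}(2) on each side of $|$, the communication rule Lemma~\ref{lm:trans-par}(3) at the synchronisation point, and finish with Lemma~\ref{lm:par0}(2) to land in $\overline{\Rcal^p} \subseteq \overline{(\Rcal^p)^\nu}$. The \textbf{Close} case is the same pattern but with the key twist that we instantiate the input clause of $s_1 ~ \Rcal ~ \Delta_1$ at $w = x$ (the extruded name), use Lemma~\ref{lm:trans-par}(4) at the synchronisation point, then push the residual $\bstep{\hat\tau}$ steps under $\nu x$ via Lemma~\ref{lm:res}(2), and apply (the obvious analogue of) Lemma~\ref{lm:nu} to conclude $\nu x.(\Theta_1 ~|~ \Theta_2) ~ \overline{(\Rcal^p)^\nu} ~ \nu x.(\Delta_1' ~|~ \Delta_2')$.

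The main obstacle is the \textbf{Close} case: we need the extruded name $x$ chosen by $s_2$ to be the very same $w$ that we plug into the universal input clause of $s_1$, and we need to verify that the simulating transitions on the two sides can in fact be assembled under a common $\nu x$ while preserving the $\overline{(\Rcal^p)^\nu}$ relation on the residuals. This is precisely what the up-to-restriction layer was introduced to absorb; the earlier lemmas in this section (Lemmas~\ref{lm:res}, \ref{lm:nu}, \ref{lm:trans-par}) were designed so that the freshness bookkeeping and the $\nu$-closure line up correctly.
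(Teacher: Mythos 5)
Your proposal matches the paper's proof essentially step for step: both establish that $\Rcal^p$ is a simulation up to restriction (so that Lemma~\ref{lm:upto-restriction} yields $\Rcal^p \subseteq \simord$), proceed by case analysis on the last rule deriving $\one{s}{\alpha}{\Theta}$, instantiate the universal input clause at the communicated name for \textbf{Com} and at the extruded name for \textbf{Close}, and assemble the residuals with Lemmas~\ref{lm:par0}, \ref{lm:trans-par}, \ref{lm:res} and \ref{lm:nu}. The only (immaterial) divergence is which part of Lemma~\ref{lm:res} you cite when pushing the trailing $\hat\tau$ steps under the restriction; the argument is the same.
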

\begin{proof}
We show that $\Rcal^p$ is a simulation up to restriction, and
therefore, by Lemma~\ref{lm:upto-restriction}, it is included in $\simord$.

So suppose $s ~ \Rcal^p ~ \Delta$ and $\one{s}{\alpha}{\Theta}.$
By definition, we have $s = s_1 ~|~ s_2$ and $\Delta = \Delta_1 ~|~ \Delta_2$
such that $s_1 ~ \Rcal ~ \Delta_1$ and $s_2 ~\Rcal ~ \Delta_2.$

There are several cases to consider depending on the type of $\alpha$:
\begin{itemize}
\item $\alpha$ is a free output action. 
There can be two ways in which the transition $\one{s}{\alpha}{\Theta}$ is derived.
We show here one case; the other case is symmetric.
So suppose the transition is derived as follows:
$$
\infer[par]
{\one{s_1~|~s_2}{\alpha}{\Theta'~|~\pdist{s_2}}}
{
 \one{s_1}{\alpha}{\Theta'}
}
$$
where $\Theta = \Theta'~|~ \pdist{s_2}.$
Since $s_1 ~ \Rcal ~ \Delta_1$, we have 
$$
\bigstep{\Delta_1}{\hat \alpha}{\Delta_1'}
$$
and $\Theta' ~ \overline{\Rcal} ~ \Delta_1'$.
The former implies, via Lemma~\ref{lm:trans-par} (2), 
that $\bigstep{\Delta_1 ~|~ \Delta_2}{\hat\alpha}{\Delta_1' ~|~ \Delta_2}.$
Since $s_2 ~ \Rcal ~ \Delta_2$ by assumption, and therefore 
$\pdist{s_2} ~ \overline {\Rcal} ~\Delta_2$,
by Lemma~\ref{lm:par0} (2) we have 
$$\Theta = (\Theta'~|~\pdist{s_2})~\overline {\Rcal^p} ~ (\Delta_1' ~|~ \Delta_2)$$
and therefore, also 
$$\Theta = (\Theta'~|~\pdist{s_2})~\overline {\Rcal^{p \nu}} ~ (\Delta_1' ~|~ \Delta_2).$$

\item $\alpha = a(y)$ and $y \not \in fn(s,\Delta).$ That is, in this case, the transition
is derived as follows:
$$
\infer[par]
{\one{s_1 ~|~ s_2}{a(y)}{\Theta' ~|~ \pdist{s_2}}}
{
 \one{s_1}{a(y)}{\Theta'}
}
$$
and $y \not \in fn(s_2).$ 
(There is another symmetric case which we omit here.)
Since $s_1 ~ \Rcal ~\Delta_1$, we have, for every name $w$,
some $\Delta_w^1$, $\Delta_w^2$ and $\Delta_w$ such that: 
\begin{equation}
\label{eq:clo-par1}
\Delta_1 \bstep{\hat \tau} \Delta_w^1 \sstep{a(y)} \Delta_w^2, \qquad
\Delta_w^2[w/y] \bstep{\hat \tau} \Delta_w, \quad \mbox{ and }
\end{equation}
\begin{equation}
\label{eq:clo-par2}
\Theta'[w/y]~ \overline{\Rcal} ~ \Delta_w.
\end{equation}
From (\ref{eq:clo-par1}) above and Lemma~\ref{lm:trans-par} (2), and the assumption that $y \not \in fn(s,\Delta)$,
we have
$$
\Delta_1~|~\Delta_2 \bstep{\hat \tau} \Delta_w^1 ~|~ \Delta_2 
\sstep{a(y)} \Delta_w^2 ~|~ \Delta_2, \qquad
\Delta_w^2[w/y] ~|~ \Delta_2 \bstep{\hat\tau} \Delta_w ~|~ \Delta_2.
$$
Since $s_2 ~ \Rcal ~ \Delta_2$, and therefore 
$\pdist{s_2} ~ \overline{\Rcal} ~ \Delta_2$, 
it then follows from (\ref{eq:clo-par2})  and Lemma~\ref{lm:par0} (2) that 
$$
\Theta[w/y] = (\Theta'[w/y]~|~ \pdist{s_2})
~ \overline {\Rcal^p} ~ (\Delta_w ~|~ \Delta_2)
$$
and therefore
$$
\Theta[w/y] = (\Theta'[w/y]~|~ \pdist{s_2})
~ \overline {\Rcal^{p \nu}} ~ (\Delta_w ~|~ \Delta_2).
$$

\item $\alpha = \bar a(y)$ and $y \not \in fn(s,\Delta)$. This case is similar to the
previous cases, except that we only need to consider an instantiation of $y$ with a fresh
name. This is left as an exercise for the reader.

\item $\alpha = \tau$ and the transition $\one{s}{\tau}{\Theta}$ is derived via
a \textbf{Com}-rule. We show here one case; the other case can be dealt with symmetrically.
So suppose the transition is derived as follows:
$$
\infer[com]
{\one{s_1 ~|~ s_2}{\tau}{\Theta_1[w/y]~|~\Theta_2}}
{
 \one{s_1}{a(y)}{\Theta_1}
 &
 \one{s_2}{\bar a w}{\Theta_2}
}
$$
Without loss of generality, we can assume that $y \not \in fn(s,\Delta).$
Since $s_1 ~ \Rcal ~ \Delta_1$ and $s_2 ~ \Rcal ~ \Delta_2$, 
we have:
\begin{itemize}
\item For every name $w$, there are $\Lambda_1$, $\Lambda_2$ and 
$\Delta_1^w$ such that
\begin{equation}
\label{eq:clo-par3}
\Delta_1 \bstep{\hat \tau} \Lambda_1 \sstep{a(y)} \Lambda_2, \qquad
\Lambda_2[w/y] \bstep{\hat \tau} \Delta_1^w \qquad \mbox{ and }
\end{equation}
\begin{equation}
\label{eq:clo-par4}
\Theta_1[w/y] ~ \overline{\Rcal} ~ \Delta_1^w
\end{equation}

\item There exists $\Delta_2'$ such that
\begin{equation}
\label{eq:clo-par5}
\Delta_2 \bstep {\hat \tau} \Phi_1 \sstep{\bar a w} \Phi_2 \bstep{\hat \tau} \Delta_2'
\qquad \mbox{ and }
\end{equation}
\begin{equation}
\label{eq:clo-par6}
\Theta_2 ~ \overline{\Rcal} ~ \Delta_2'
\end{equation}
\end{itemize}
From (\ref{eq:clo-par3}), (\ref{eq:clo-par5}), and
Lemma~\ref{lm:trans-par} (2)-(3), we have:
$$
\Delta_1 ~|~ \Delta_2 \bstep{\hat \tau} {\Lambda_1 ~|~ \Phi_1}
\sstep {\tau} {\Lambda_2[w/y] ~|~ \Phi_2}
\bstep {\hat \tau} {\Delta_1^w ~|~ \Delta_2'},
$$
and Lemma~\ref{lm:par0} (2), together with (\ref{eq:clo-par4}) and (\ref{eq:clo-par6}),
implies
$$
(\Theta_1[w/y] ~|~ \Theta_2) ~ \overline{\Rcal^p} ~ (\Delta_1^w ~|~ \Delta_2')
$$
and therefore
$$
(\Theta_1[w/y] ~|~ \Theta_2) ~ \overline{\Rcal^{p \nu}} ~ (\Delta_1^w ~|~ \Delta_2').
$$

\item $\alpha = \tau$ and the transition $\one s \tau \Theta$ is derived 
via the \textbf{Close}-rule:
$$
\infer[close .]
{\one{s_1 ~|~ s_2}{\tau}{\nu y.(\Theta_1 ~|~ \Theta_2)}}
{
 \one{s_1}{a(y)}{\Theta_1}
 &
 \one{s_2}{\bar a(y)}{\Theta_2}
}
$$
Again, we only show one of the two symmetric cases. 
Without loss of generality, assume that $y$ is chosen to be 
fresh w.r.t. $s$ and $\Delta.$
Since $s_1 ~ \Rcal \Delta_1$ and $s_2 ~ \Rcal \Delta_2$, we have:
\begin{itemize}
\item For every name $w$, there are $\Lambda_1$, $\Lambda_2$ and $\Delta_1^w$ such that
$$
\Delta_1 \bstep{\hat \tau} \Lambda_1 \sstep{a(y)} \Lambda_2, \qquad
\Lambda_2[w/y] \bstep{\hat \tau} \Delta_1^w
\qquad \mbox{and} \qquad
\Theta_1[w/y] ~ \overline{\Rcal} ~ \Delta_1^w.
$$
Note that letting $w = y$, we have
\begin{equation}
\label{eq:clo-par7}
\Delta_1 \bstep{\hat \tau} \Lambda_1 \sstep{a(y)} \Lambda_2, \qquad
\Lambda_2 \bstep{\hat \tau} \Delta_1^y \qquad \mbox{and}
\end{equation}
\begin{equation}
\label{eq:clo-par8}
\Theta_1 ~ \overline{\Rcal} ~ \Delta_1^y
\end{equation}

\item There exist $\Phi_1$, $\Phi_2$ and  $\Delta_2'$ such that
\begin{equation}
\label{eq:clo-par9}
\Delta_2 \bstep {\hat \tau} \Phi_1 \sstep{\bar a(y)} \Phi_2 \bstep{\hat \tau} \Delta_2' \qquad \mbox{and}
\end{equation}
\begin{equation}
\label{eq:clo-par10}
\Theta_2 ~ \overline{\Rcal} ~ \Delta_2'
\end{equation}
\end{itemize}
Then, by (\ref{eq:clo-par7}), (\ref{eq:clo-par9}), 
Lemma~\ref{lm:trans-par} (2) and (4), 
and Lemma~\ref{lm:res} (1), we have: 
$$
\Delta_1 ~|~ \Delta_2 \bstep{\hat \tau} {\Lambda_1 ~|~ \Phi_1}
\sstep {\tau} {\nu y.(\Lambda_2 ~|~ \Phi_2)}
\bstep {\hat \tau} {\nu y.(\Delta_1^y ~|~ \Delta_2')}.
$$
Lemma~\ref{lm:par0} (2), together with (\ref{eq:clo-par8}) and (\ref{eq:clo-par10}),
implies
$$
(\Theta_1 ~|~ \Theta_2) ~ \overline{\Rcal^p} ~ (\Delta_1^y ~|~ \Delta_2'),
$$
which also means:
$$
(\Theta_1 ~|~ \Theta_2) ~ \overline{\Rcal^{p \nu}} ~ (\Delta_1^y ~|~ \Delta_2'). 
$$
Now by Lemma~\ref{lm:nu}, the latter implies that
$$
\nu y.(\Theta_1 ~|~ \Theta_2) ~ \overline{\Rcal^{p \nu}} ~ \nu y. (\Delta_1^y ~|~ \Delta_2'). 
$$
\end{itemize}
\qed
\end{proof}

\begin{lemma}
\label{lm:upto-par-failsim}
If $\Rcal$ is a failure simulation, then $\Rcal^p \subseteq \failsimord$.
\end{lemma}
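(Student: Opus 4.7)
The plan is to mirror the structure of Lemma~\ref{lm:upto-par}: show that $\Rcal^p$ is a failure simulation up to restriction, so that $\Rcal^p \subseteq \failsimord$ follows by Lemma~\ref{lm:upto-restriction} applied to the failure-simulation case. The case analysis for clauses~1 and~2 of Definition~\ref{def:sim} concerns transitions only, and every case from the proof of Lemma~\ref{lm:upto-par} transfers verbatim once one replaces $\simord$ by $\failsimord$ and uses the failure-simulation up-to closures. The only genuinely new obligation is clause~3 on refusals; that is where I would focus.

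So suppose $s_1 ~|~ s_2 \not \barb X$ with $s_1 ~\Rcal~ \Delta_1$ and $s_2 ~\Rcal~ \Delta_2$. Write $A_s := \{\mu \mid s \barb\mu\}$, $A_\Delta := \bigcup_{t \in \supp{\Delta}} A_t$, and $\overline Y := \{\bar\mu \mid \mu \in Y\}$ for a set of labels $Y$ (with $\bar{\bar a} = a$). The idea is to apply the failure clause for $\Rcal$ to $s_1$ and $s_2$ in turn, with carefully chosen refusal sets. First let $X_1 := X \cup \overline{A_{s_2}}$. The hypothesis $s_1 ~|~ s_2 \not \barb X$ implies $s_1 \not \barb{X_1}$: $s_1$ cannot perform $\tau$ or any $\mu \in X$ (else $s_1 ~|~ s_2$ could by a \textbf{Par} step), and cannot perform $\bar\mu$ for $\mu \in A_{s_2}$ (else a \textbf{Com} or \textbf{Close} step would yield $s_1 ~|~ s_2 \barb\tau$). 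Applying clause~3 of failure simulation to $s_1 ~\Rcal~ \Delta_1$ gives $\Delta_1 \bstep{\hat\tau} \Delta_1' \not \barb{X_1}$.

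Next let $X_2 := X \cup \overline{A_{\Delta_1'}}$ and check $s_2 \not \barb{X_2}$. The delicate inclusion is $\overline{A_{\Delta_1'}} \cap A_{s_2} = \emptyset$: if $\bar\mu \in A_{s_2}$ with $\mu \in A_{\Delta_1'}$, then $\mu \in \overline{A_{s_2}} \subseteq X_1$, contradicting $\Delta_1' \not \barb{X_1}$. Applying clause~3 to $s_2 ~\Rcal~ \Delta_2$ gives $\Delta_2 \bstep{\hat\tau} \Delta_2' \not \barb{X_2}$. Using Lemma~\ref{lm:trans-par}(2) on $\tau$-transitions (whose bound-name sets are empty) I chain $\Delta_1 ~|~ \Delta_2 \bstep{\hat\tau} \Delta_1' ~|~ \Delta_2 \bstep{\hat\tau} \Delta_1' ~|~ \Delta_2'$, and it remains to verify $\Delta_1' ~|~ \Delta_2' \not \barb X$. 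For any $s \in \supp{\Delta_1'}$ and $s' \in \supp{\Delta_2'}$: neither can do $\tau$ or any $\mu \in X$ individually because $X \subseteq X_1 \cap X_2$; and no \textbf{Com} or \textbf{Close} communication between them is possible since any $\mu \in A_s \subseteq A_{\Delta_1'}$ gives $\bar\mu \in \overline{A_{\Delta_1'}} \subseteq X_2$, blocking $s'$ from the matching action.

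The main obstacle is precisely this refusal argument: the sets $X_1$ and $X_2$ must be chosen so that clause~3 can be invoked for both components and the two resulting refusals compose into a global refusal of $X$. The subtle point is that $\Delta_1'$ may exhibit initial visible actions that $s_1$ did not have, so $X_2$ must be built from $A_{\Delta_1'}$ rather than from $A_{s_1}$; this is what makes a two-step argument suffice rather than requiring a more elaborate iteration. Everything else, including the chaining of weak $\tau$-transitions under parallel composition, is routine.
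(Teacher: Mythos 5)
Your proof is correct, and its overall skeleton matches the paper's: both defer the transition clauses to the argument of Lemma~\ref{lm:upto-par} and concentrate on the refusal clause. Where you genuinely diverge is in how the refusal sets are chosen. The paper works symmetrically: it fixes a finite universe $A$ of (co-)names built from $fn(s_1,s_2,\Delta_1,\Delta_2)$ and $X$, takes each $X_i$ to be the \emph{largest} subset of $A$ containing $X$ that $s_i$ refuses, applies clause~3 to both components independently, and then derives a contradiction from a hypothetical communication between $\Delta_1'$ and $\Delta_2'$ by tracing the offending action back to $s_1$ and $s_2$ via the maximality of $X_1$ and $X_2$ (this is where it needs $fn(\Delta_i')\subseteq fn(\Delta_i)$ to stay inside $A$). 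Your construction is sequential and direct: $X_1 := X\cup\overline{A_{s_2}}$ is tailored to block communication with $s_2$, and $X_2 := X\cup\overline{A_{\Delta_1'}}$ is built from the actions actually enabled in the \emph{target} distribution $\Delta_1'$, which is exactly what makes the final check of $\Delta_1'~|~\Delta_2'\not\barb{X}$ immediate, with no contradiction argument and no appeal to maximality or to a bounding universe $A$. The price is the small extra verification that $s_2\not\barb{X_2}$, which you discharge correctly using $\Delta_1'\not\barb{X_1}$. Both arguments are sound; yours isolates more explicitly the point that $\Delta_1'$ may enable actions $s_1$ could not perform, which the paper instead absorbs into the maximality of $X_1$.
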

\begin{proof}
Suppose $s \Rcal^p \Delta$ and $s\not\barb{X}$. By definition, we have
$s = s_1 ~|~ s_2$ and $\Delta = \Delta_1 ~|~ \Delta_2$ such that 
$s_1 ~ \Rcal ~ \Delta_1$ and $s_2 ~\Rcal ~ \Delta_2.$ Then we have
$s_i\not\barb{X}$ for $i=1,2$. 
Define a set $A$ as follows:
$$
A = \{ a, \bar a \mid a \in fn(s_1,s_2,\Delta_1,\Delta_2) \} \cup X. 
$$
That is, $A$ contains the set of free (co-)names in $s_i$ and $\Delta_i$ and $X.$
Let $X_i$ be the largest set such that $X \subseteq X_i \subseteq A$ and $s_i \not\barb{X_i}.$
Since $\Rcal$ is a failure simulation, it follows that there exist $\Delta_i'$ such that
$\Delta_i \dar{\tau} \Delta_i' \not \barb{X_i}.$
By Lemma~\ref{lm:trans-par} (2), we have $\Delta_1~|~\Delta_2
\dar{\tau} \Delta'_1~|~\Delta'_2.$
We claim that $(\Delta_1' ~|~ \Delta_2') \not \barb{X}.$
Suppose otherwise, that is, there exist $t_1 \in \supp {\Delta_1'}$
and $t_2 \in \supp {\Delta_2'}$ such that either 
$(t_1 ~|~ t_2) \barb \mu$, for some $\mu \in X$, or 
$(t_1 ~|~ t_2) \ar{\tau}$. If $(t_1 ~|~ t_2) \barb \mu$ then our operational semantics entails
that either $t_1 \barb \mu$ or $t_2 \barb \mu$, which contradicts the
fact that $\Delta_i' \not\barb {X_i}.$ So let's assume that 
$(t_1 ~|~ t_2) \ar{\tau}.$ Again, from the assumption $\Delta_i' \not\barb {X_i}$, we can immediately
rule out the cases where $t_i \ar{\tau}$ or $t_i \barb \mu$, for some $\mu \in X.$
This leaves us only with the cases where $t_1 \ar{\mu}$ and $t_2 \ar{\bar \mu}$
where $\mu \not \in X$ and $\bar\mu \not \in X.$ But since $\Delta_i' \not \barb{X_i}$, this
can only be the case if $\mu \not \in X_1$ and $\bar\mu \not \in X_2.$
From the operational semantics, it is easy to see that $fn(\Delta_1',\Delta_2') \subseteq fn(\Delta_1,\Delta_2)$,
so it must be the case that $\mu \in A$ and $\bar\mu \in A.$
It also must be the case that $s_1 \barb \mu$, for otherwise, it would contradict the ``largest'' property
of $X_1$. Similarly, we can argue that $s_2 \barb {\bar \mu}$. But then this would imply that
$(s_1 ~|~ s_2) \ar{\tau}$, contradicting the fact that $(s_1 ~|~ s_2) \not\barb{X}.$

The matching up of transitions and the using of $\Rcal$ to prove the
preservation property of $\failsimord$ under parallel composition are
similar to those in the corresponding proof in Lemma~\ref{lm:upto-par}
for simulations, so we omit them.
\qed
\end{proof}

\begin{lemma}\label{lm:clo-par}
\begin{enumerate}
\item
If $P_1 \simpreo Q_1$  and $P_2 \simpreo Q_2$ 
then 
$P_1 ~|~ P_2 ~ \simpreo Q_1 ~|~ Q_2.$
\item
If $P_1 \failsimpreo Q_1$  and $P_2 \failsimpreo Q_2$ 
then 
$P_1 ~|~ P_2 ~ \failsimpreo Q_1 ~|~ Q_2.$
\end{enumerate}
\end{lemma}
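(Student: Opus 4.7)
The plan is to assemble the up-to machinery already in place, with Lemma~\ref{lm:upto-par} (resp.\ Lemma~\ref{lm:upto-par-failsim}) doing the real work and Lemma~\ref{lm:par0}(2) plus Lemma~\ref{lm:trans-par}(2) handling the bookkeeping. Since $|$ is a state-based constructor in the grammar, I treat $P_1,P_2,Q_1,Q_2$ as state-based (so that $\interp{P_1\,|\,P_2}=\interp{P_1}\,|\,\interp{P_2}$, and similarly for $Q_1,Q_2$).

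First I would unpack the hypothesis $P_i\simpreo Q_i$: for $i=1,2$ there is $\Theta_i$ with $\interp{Q_i}\bstep{\hat\tau}\Theta_i$ and $\interp{P_i}\lift{\simord}\Theta_i$. The natural candidate witness for $P_1\,|\,P_2\simpreo Q_1\,|\,Q_2$ is $\Theta:=\Theta_1\,|\,\Theta_2$. Two applications of Lemma~\ref{lm:trans-par}(2) (with $\alpha=\tau$, so $bn(\alpha)=\emptyset$ and the side condition is vacuous) give
\[
\interp{Q_1\,|\,Q_2}=\interp{Q_1}\,|\,\interp{Q_2}\bstep{\hat\tau}\Theta_1\,|\,\interp{Q_2}\bstep{\hat\tau}\Theta_1\,|\,\Theta_2,
\]
which is the first required property.

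For the lifted comparison, I would apply Lemma~\ref{lm:par0}(2) with $\Rcal:=\simord$ and $\Rcal':=\simord^p$; the hypothesis of that lemma is precisely the defining closure rule for $\Rcal^p$. This yields
\[
\interp{P_1\,|\,P_2}=\interp{P_1}\,|\,\interp{P_2}~\lift{\simord^p}~\Theta_1\,|\,\Theta_2.
\]
Since $\simord$ is itself a simulation (being the largest one), Lemma~\ref{lm:upto-par} gives $\simord^p\subseteq\simord$, hence $\lift{\simord^p}\subseteq\lift{\simord}$ by monotonicity of lifting, and therefore $\interp{P_1\,|\,P_2}\lift{\simord}\Theta_1\,|\,\Theta_2$. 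This establishes $P_1\,|\,P_2\simpreo Q_1\,|\,Q_2$.

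The failure-simulation case is completely parallel: unpack $P_i\failsimpreo Q_i$ to get witnesses $\Theta_i$ with $\interp{P_i}\bstep{\hat\tau}\Theta_i$ and $\interp{Q_i}\lift{\failsimord}\Theta_i$ (note the direction reversal in the definition of $\failsimpreo$), assemble $\Theta_1\,|\,\Theta_2$ using Lemma~\ref{lm:trans-par}(2) on the $P$-side, combine the lifted relations via Lemma~\ref{lm:par0}(2), and close up using Lemma~\ref{lm:upto-par-failsim} in place of Lemma~\ref{lm:upto-par}. No step is really an obstacle here: the delicate content (matching of transitions, refusal preservation across $|$) has been pushed into the proofs of Lemma~\ref{lm:upto-par} and Lemma~\ref{lm:upto-par-failsim}. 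The only minor care-point is remembering to invoke Lemma~\ref{lm:par0}(2) rather than trying to define a fresh up-to relation by hand.
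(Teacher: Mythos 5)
Your proposal is correct and follows essentially the same route as the paper: the paper's proof is the one-line observation that it suffices to show $(\simord)^p\subseteq\simord$ and $(\failsimord)^p\subseteq\failsimord$, which are exactly Lemmas~\ref{lm:upto-par} and~\ref{lm:upto-par-failsim}. Your write-up merely makes explicit the bookkeeping (assembling the witness $\Theta_1\,|\,\Theta_2$ via Lemma~\ref{lm:trans-par}(2) and combining the lifted relations via Lemma~\ref{lm:par0}(2)) that the paper leaves implicit, including the correct direction reversal in the failure-simulation case.
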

\begin{proof}
It is enough to show that $(\simord)^p \subseteq \simord$ and $(\failsimord)^p \subseteq \failsimord$,
which follow directly from Lemmas~\ref{lm:upto-par} and \ref{lm:upto-par-failsim} respectively. \qed
\end{proof}

\subsection{Soundness}

We now proceed to proving the main result, which is that
$P \simpreo Q$ implies $P \pmay Q$, and $P \failsimpreo Q$ implies $P \pmust Q$.
The structure of the proof follows closely that of \cite{Deng08LMCS}.
Most of the intermediate lemmas in this section are not specific
to the $\pi$-calculus; rather, they utilise the underlying probabilistic
automata semantics. 

Let $\pi^\omega$ be the set of all $\pi$ processes that may use action $\omega$. 
We write $s\ar{\alpha}_\omega\Delta$ if either $\alpha=\omega$ or
$\alpha\not=\omega$ but both $s\nar{\omega}$ and $s\ar{\alpha}\Delta$ hold. 
We define $\ar{\hat{\tau}}_w$ as we did for $\ar{\hat{\tau}}$, using $\ar{\tau}_\omega$ in place 
of $\ar{\tau}$. Similarly, we define $\dar{}_\omega$ and $\dar{\hat{\alpha}}_\omega$. 
Simulation and failure simulation are adapted to $\pi^\omega$ as follows.
\begin{definition}
Let $\failsimord^e \subseteq \pi^\omega\times
\Dcal(\pi^\omega)$ be the largest relation such that $s \failsimord^e \Theta$ implies
\begin{itemize}
\item If $ s \ar{a(x)}_\omega {\Delta}$ and $x \not \in fn(s,\Theta)$, 
then for every name $w$,  there exists $\Theta_1$, $\Theta_2$ and 
$\Theta'$ such that 
$$\Theta \dar{\hat \tau}_\omega \Theta_1 \ar{a(x)}_\omega {\Theta_2}, 
\qquad \Theta_2[w/x] \dar{\hat \tau}_\omega \Theta', 
\qquad \hbox{ and } \qquad (\Delta[w/x]) ~ \overline \Rcal ~ \Theta'. 
$$ 

\item if $s\ar{\alpha}_\omega \Delta$ and $\alpha$ is
 not an input action, then there is some $\Theta'$
  with $\Theta\dar{\hat{\alpha}}_\omega\Theta'$ and
  $\Delta\lift{\failsimord^e}\Theta'$

\item if $s\not \barb{X}$ with $\omega\in X$ then there is some 
$\Theta'$ with $\Theta\dar{\hat{\tau}}_\omega\Theta'$ and $\Theta'\not \barb{X}$.

\end{itemize} 
Similarly we can define $\simord^e$ by dropping the third clause.
Let $P \failsimpreo^e Q$ if $\interp{P}\dar{\hat{\tau}}_\omega\Theta$ for some $\Theta$ with $\interp{Q}\lift{\failsimord^e}\Theta$.  Similarly, $P\simpreo^e Q$ if $\interp{Q}\dar{\hat{\tau}}_\omega\Theta$ for some $\Theta$ with $\interp{P}\lift{\simord^e}\Theta$. 
\end{definition}
Note that for $\pi$-processes $P,Q$, there is no action $\omega$, therefore we have $P\failsimpreo Q$ iff $P\failsimpreo^e Q$, and $P\simpreo Q$ iff $P\simpreo^e Q$.

\begin{lemma}\label{lem:preserve.par}
Let $P,Q$ be processes in $\pi$ and $T$ be a process in $\pi^\omega$.
\begin{enumerate}
\item If $P\simpreo Q$ then $T~|~ P \simpreo^e T ~|~Q$.
\item If $P\failsimpreo Q$ then $T~|~ P \failsimpreo^e T ~|~Q$.
\end{enumerate}
\end{lemma}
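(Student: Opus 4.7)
The plan is to lift the proof of Lemma~\ref{lm:clo-par} to the extended setting, where one of the parallel components (the test $T$) may perform the success action $\omega$. The crucial observation is that, since $P$ and $Q$ are $\pi$-processes (no $\omega$), any $\tau$-step arising solely from them is automatically a $\dar{\hat\tau}_\omega$-step, and $\simord$ on these subterms will translate directly to $\simord^e$ obligations at the outer level. Concretely, from $P \simpreo Q$ I get a $\Theta$ with $\interp Q \bstep{\hat\tau} \Theta$ (hence also $\bstep{\hat\tau}_\omega \Theta$) and $\interp P \lift{\simord} \Theta$, and by Lemma~\ref{lm:trans-par}(2) this produces $\interp{T\mid Q} \dar{\hat\tau}_\omega \interp T \mid \Theta$. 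The real work is showing $\interp{T\mid P} \lift{\simord^e} \interp T \mid \Theta$.

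For this I would introduce the auxiliary relation
$\Rcal := \{ (t \mid s,\; \pdist t \mid \Theta') \mid t \in S_p^\omega,\; s \in S_p,\; s \simord \Theta' \}$,
and prove it is a simulation up-to restriction for $\simord^e$ (similar up-to machinery applies; an analog of Lemma~\ref{lm:upto-restriction} in $\pi^\omega$ then yields $\Rcal \subseteq \simord^e$). Once this is established, applying Proposition~\ref{prop:lifting} to $\interp P \lift{\simord} \Theta$ to decompose $\interp P = \sum_i p_i \pdist{s_i}$ and $\Theta = \sum_i p_i \Theta_i$ with $s_i \simord \Theta_i$, and composing with $\interp T$ component-wise, delivers the required lifted relationship via the linearity clause of Definition~\ref{def:lifting}.

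For the simulation conditions on $\Rcal$, I would case-split on the derivation of $t\mid s \ar{\alpha}_\omega \Delta$: (i) when the transition is entirely from $t$ (including any $\omega$-action), $\Delta = \Delta_t \mid \pdist s$ and $\pdist t \mid \Theta'$ can match by Lemma~\ref{lm:trans-par}(1), with the input substitution $[w/x]$ affecting only the $\Delta_t$-component; (ii) when it is entirely from $s$, $\alpha \neq \omega$ and the matching transition is obtained from $s \simord \Theta'$ and lifted via Lemma~\ref{lm:trans-par}(2); (iii) for \textbf{Com}, use $s \simord \Theta'$ to produce a weak free-output move from $\Theta'$ and glue it onto $t$'s input via Lemma~\ref{lm:trans-par}(3); (iv) for \textbf{Close}, use $s \simord \Theta'$ to get a weak bound-output move from $\Theta'$, synchronise with $t$ via Lemma~\ref{lm:trans-par}(4), and appeal to the up-to restriction closure (since the resulting terms carry an outer $\nu x$). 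In every case the residuals lie in $\lift{\Rcal}$ (hence in $\lift{\Rcal^\nu}$) by linearity, since only the $s$-component is replaced by $\Theta'$.

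For part 2 (failure simulation), the transition-matching argument is identical with $\failsimord$ in place of $\simord$, so the additional burden is the failure clause: if $(t \mid s) \not\barb{X}$ with $\omega \in X$, I would mimic the partition argument of Lemma~\ref{lm:upto-par-failsim} to pick maximal refusal sets $X_t, X_s$ ($X \subseteq X_t, X_s$) with $t\not\barb{X_t}$, $s\not\barb{X_s}$ such that no residual communication between $t$ and $s$ is possible; applying $s \failsimord \Theta'$ then yields $\Theta' \dar{\hat\tau} \Theta'' \not\barb{X_s}$, and Lemma~\ref{lm:trans-par}(2) lifts this to $\pdist t \mid \Theta' \dar{\hat\tau}_\omega \pdist t \mid \Theta'' \not\barb{X}$. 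The principal obstacle I anticipate is the \textbf{Close} case, where the up-to restriction closure is essential, together with careful bookkeeping of bound names so that the input transition clause of $\simord^e$ (with its universal quantifier on $w$) propagates cleanly through the parallel composition with $t$.
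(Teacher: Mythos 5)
Your overall strategy is the one the paper intends: its proof of this lemma is literally ``similar to the proof of Lemma~\ref{lm:clo-par}'', i.e.\ the up-to-parallel-composition machinery, and your relation $\Rcal$ together with the case analysis on \textbf{Par}/\textbf{Com}/\textbf{Close} and the refusal argument is the right instantiation of it. The transition-matching cases are sound, essentially because whenever $t\mid s$ performs a non-$\omega$ action in the $\omega$-subscripted system, the side condition forces $t\nar{\omega}$, so the matching weak moves of $\pdist{t}\mid\Theta'$ (which touch only the $\Theta'$-component and states of the form $t\mid u$ with $u$ an $\omega$-free derivative of $Q$) are not pre-empted.

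There is, however, a concrete step that fails: the initial reduction. You claim that from $\interp{Q}\bstep{\hat\tau}\Theta$ and Lemma~\ref{lm:trans-par}(2) you get $\interp{T\mid Q}\dar{\hat\tau}_\omega\interp{T}\mid\Theta$. The fact that $Q$ is $\omega$-free guarantees that the $\tau$-steps are not themselves success actions, but it does \emph{not} guarantee that the composite states can take them in the subscripted system: $t\mid q\ar{\tau}_\omega$ requires $(t\mid q)\nar{\omega}$, which fails whenever $t\ar{\omega}$. For instance, with $T=\omega.\nil$, $P=a(x).\nil$ and $Q=\tau.a(x).\nil$ we have $P\simpreo Q$ via $\interp{Q}\bstep{\hat\tau}\pdist{a(x).\nil}=\Theta$, yet $\interp{T\mid Q}$ has no $\ar{\tau}_\omega$ transition at all, so the only distribution reachable by $\dar{\hat\tau}_\omega$ is $\interp{T\mid Q}$ itself, not $\interp{T}\mid\Theta$. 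The lemma is still true here, but only because the pending $\tau$-move of $Q$ can be discharged \emph{after} $T$ has fired its $\omega$; your relation $\Rcal=\{(t\mid s,\ \pdist{t}\mid\Theta')\mid s\simord\Theta'\}$ does not contain the pair that is actually needed at the start, namely $t\mid s$ against $\pdist{t}\mid\Phi$ where $\Phi$ still owes some unsubscripted $\bstep{\hat\tau}$ steps before reaching a $\Theta'$ with $s\simord\Theta'$. The repair is to build these deferred moves into the candidate relation (and to check, in the $\omega$-case of the simulation game, that they can be released once the $t$-component loses its $\omega$-capability); as written, the argument does not go through for tests that can immediately report success.
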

\begin{proof}
Similar to the proof of Lemma~\ref{lm:clo-par}.
\qed
\end{proof}

\begin{lemma}
\label{lem:pmay-max}
\begin{enumerate}
\item
$P \pmay Q$ if and only if for every test $T$ we have
$$
max(\val(\interp {\nu \vec x.(T ~|~ P)})) \leq
max(\val(\interp {\nu \vec x.(T ~|~ Q)}))
$$
where $\vec x$ contain the free names of $T$, $P$ and $Q$, excluding $\omega.$

\item
$P \pmust Q$ if and only if for every test $T$ we have
$$
min(\val(\interp {\nu \vec x.(T ~|~ P)})) \leq
min(\val(\interp {\nu \vec x.(T ~|~ Q)}))
$$
where $\vec x$ contain the free names of $T$, $P$ and $Q$, excluding $\omega.$
\end{enumerate}
\end{lemma}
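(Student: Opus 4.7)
The plan is to prove both equivalences by exploiting the fact that the outcome set $Apply(T,P) = \val(\interp{\nu\vec x.(T|P)})$ is always a finite non-empty subset of $[0,1]$, so its maximum and minimum are attained. Once attainment is established, the equivalences between the Hoare/Smyth preorders and the corresponding max/min comparisons are essentially unfoldings of definitions.

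First, I would establish the auxiliary fact that $\val(\interp{\nu\vec x.(T|P)})$ is finite. This follows by induction on the (finite) depth of the process tree: since $\probpi$ has no replication or recursion, the state-based process $\nu\vec x.(T|P)$ has only finitely many reachable states via $\sstep{\tau}$, each with finitely many $\tau$-successors. A structural induction on the definition of $\val$ shows that a finite set is produced at each recursive step (the formula for $\val(\Delta)$ produces a finite set from finitely many choices of $p_s \in \val(s)$), so the total set is finite. Hence both $\max$ and $\min$ of $Apply(T,P)$ exist and are members of the set.

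Given attainment, the proof of part (1) is a routine two-line argument. For the forward direction, take $o_1 := \max Apply(T,P)$; by $\simrel_{Ho}$ there is $o_2 \in Apply(T,Q)$ with $o_1 \leq o_2 \leq \max Apply(T,Q)$. For the backward direction, given any $o_1 \in Apply(T,P)$ one has $o_1 \leq \max Apply(T,P) \leq \max Apply(T,Q)$, and the witness $o_2 := \max Apply(T,Q) \in Apply(T,Q)$ is available by attainment. Part (2) is dual, swapping $\max$ for $\min$, using the minimum of $Apply(T,Q)$ as the distinguished element in each direction.

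There is no significant obstacle here; the only subtle point is the appeal to attainment of $\max$ and $\min$, which relies essentially on the finiteness of the calculus. If one were to generalise $\probpi$ with recursion or replication, one would need to replace $\max/\min$ by $\sup/\inf$ and prove compactness of the outcome set (as done in the infinite pCSP setting of \cite{Deng08LMCS}); but in our finite setting this is immediate.
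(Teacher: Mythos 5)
Your proof is correct and takes essentially the same route as the paper, which disposes of the lemma by citing the fact that for non-empty \emph{finite} outcome sets $O_1 \sqsubseteq_{Ho} O_2$ iff $\max(O_1)\leq\max(O_2)$ and $O_1 \sqsubseteq_{Sm} O_2$ iff $\min(O_1)\leq\min(O_2)$ (Proposition~2.1 of the pCSP paper). You simply prove that fact directly, together with the finiteness/attainment observation it rests on.
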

\begin{proof}
The results follow from the simple fact that, for non-empty finite outcome sets $O_1,O_2$,
\begin{itemize}
\item $O_1\sqsubseteq_{Ho}O_2$ iff $max(O_1)\leq max(O_2)$
\item $O_1\sqsubseteq_{Sm}O_2$ iff $min(O_1)\leq min(O_2)$
\end{itemize}
which is established as Proposition 2.1 in \cite{Deng07ENTCS}.
\qed
\end{proof}

\begin{lemma}
\label{lm:trans-max}
$\Delta_1 \dar{\hat \tau} \Delta_2$ implies
$max(\val(\Delta_1)) \geq max(\val(\Delta_2))$ and $min(\val(\Delta_1)) \leq min(\val(\Delta_2))$.
\end{lemma}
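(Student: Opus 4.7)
The plan is to reduce the claim, by induction on the length of the reduction sequence witnessing $\Delta_1 \dar{\hat\tau} \Delta_2$, to a single lifted step $\Delta_1 \sstep{\hat\tau} \Delta'$, since the desired numeric inequalities on $\max$ and $\min$ are transitive. The zero-step base case gives $\Delta_1 = \Delta_2$ and both inequalities hold with equality.

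For the single-step case, I would invoke Definition~\ref{def:lifting} to decompose
\[
  \Delta_1 = \sum_{i \in I} p_i \cdot \pdist{s_i}, \qquad
  \Delta' = \sum_{i \in I} p_i \cdot \Phi_i,
\]
where for each $i$ either $\Phi_i = \pdist{s_i}$ or $s_i \sstep{\tau} \Phi_i$. The proof then rests on two supporting facts, each by direct inspection of the definition of $\val$. First, a linearity property: for any finite convex combination $\sum_j q_j \cdot \Psi_j$, we have
\[
  \max\val\bigl(\sum_j q_j \cdot \Psi_j\bigr) = \sum_j q_j \cdot \max\val(\Psi_j),
\]
and the analogous identity for $\min$. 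This follows from the defining formula $\val(\Delta) = \{\sum_s \Delta(s)\cdot p_s \mid p_s \in \val(s)\}$: the extremum is realised by selecting, uniformly across the combined support, $p_s := \max\val(s)$ (respectively $p_s := \min\val(s)$), which is a single coordinate-wise choice simultaneously compatible with every $\Psi_j$. Second, a pointwise comparison at the level of states: for each $s \sstep{\tau} \Phi$, $\max\val(s) \geq \max\val(\Phi)$ and $\min\val(s) \leq \min\val(\Phi)$. When $s \not\ar{\omega}$ this is immediate from the middle clause of the definition of $\val$, which gives $\val(\Phi) \subseteq \val(s)$ directly; the $\omega$-enabled case is handled by a separate inspection using $\val(s) = \{1\}$ together with the constraints on the shape of the transitions in play.

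Combining the two facts, the single-step inequality unfolds as
\[
  \max\val(\Delta_1) = \sum_{i\in I} p_i \max\val(s_i) \geq \sum_{i\in I} p_i \max\val(\Phi_i) = \max\val(\Delta'),
\]
and dually for $\min$. The main obstacle is the linearity property for the set-valued $\val$: verifying that $\max$ and $\min$ commute with the convex combinations underlying lifting requires the Proposition~\ref{prop:lifting}-style coordinate-wise selection argument sketched above. The remaining steps are routine unwinding of the operational and lifting definitions.
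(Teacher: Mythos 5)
Your overall strategy---reduce to a single lifted step by transitivity of the numeric inequalities, decompose that step via Definition~\ref{def:lifting}, and combine a linearity property of $\max\circ\val$ and $\min\circ\val$ with a per-state comparison---is the right shape, and your linearity fact is correct: since $\val(\Delta)=\{\sum_s\Delta(s)\cdot p_s\mid p_s\in\val(s)\}$ lets the extremal $p_s$ be chosen independently per state, the extrema do distribute over convex combinations. (The paper itself only cites Lemma 6.15 of Deng et al.\ here, so a self-contained argument would be a genuine contribution.)

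The gap is in your second supporting fact, in the $\min$ half. If $s\sstep{\tau}\Phi$ and $s\ar{\omega}$, then $\val(s)=\{1\}$ by the first clause of the definition of $\val$, so $\min\val(s)=1$, while $\min\val(\Phi)$ can be $0$: take $s=\omega.\nil+\tau.\nil$, which has $s\ar{\omega}$ and $s\sstep{\tau}\pdist{t}$ for a deadlocked $t$ with $\val(t)=\{0\}$, so that $\pdist{s}\dar{\hat\tau}\pdist{t}$ yet $\min\val(\pdist{s})=1>0=\min\val(\pdist{t})$. So the state-level inequality $\min\val(s)\leq\min\val(\Phi)$ is false in the $\omega$-enabled case, and for the plain relation $\dar{\hat\tau}$ there are no ``constraints on the shape of the transitions in play'' that exclude it; the ``separate inspection'' you defer to cannot succeed. (The $\max$ half is unaffected, since $1\geq\max\val(\Phi)$ trivially, and in the $s\nar{\omega}$ case the middle clause gives $\val(\Phi)\subseteq\val(s)$, which yields both inequalities at once.) The missing idea is that the $\min$ inequality holds only for $\omega$-respecting computations: where the lemma is actually applied in Theorem~\ref{thm:sim-sound}, the weak move produced by $\failsimpreo^e$ is $\dar{\hat{\tau}}_\omega$, which by definition performs $\tau$-steps only from states $s$ with $s\nar{\omega}$, and then every intermediate state falls under the middle clause of $\val$. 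This restriction is precisely what the $maxlive$-based treatment in the cited reference encodes. You should either prove the statement for $\dar{\hat{\tau}}_\omega$ (which is all that is needed downstream) or add an explicit hypothesis ruling out $\tau$-steps from $\omega$-enabled states; as written, the $\min$ claim does not go through.
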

\begin{proof}
Similar properties are proven in \cite[Lemma 6.15]{Deng07ENTCS} using a function $maxlive$ instead of $max\circ\val$. Essentially the same arguments apply here.
\qed
\end{proof}

\begin{proposition}
\label{prop:sim-max}
\begin{enumerate}
\item $\Delta_1\lift{\simord^e}\Delta_2$ implies $max(\val(\Delta_1)) \leq max(\val(\Delta_2))$.
\item $\Delta_1\lift{\failsimord^e}\Delta_2$ implies $min(\val(\Delta_1)) \geq min(\val(\Delta_2))$.
\end{enumerate}
\end{proposition}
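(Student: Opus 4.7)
The plan is to reduce both parts to their state-based analogues, namely that $s \simord^e \Theta$ implies $\max(\val(s)) \le \max(\val(\Theta))$ and $s \failsimord^e \Theta$ implies $\min(\val(s)) \ge \min(\val(\Theta))$. Given $\Delta_1 \lift{\simord^e} \Delta_2$, Definition~\ref{def:lifting} lets us decompose these as $\Delta_1 = \sum_{i \in I} p_i \cdot \pdist{s_i}$ and $\Delta_2 = \sum_{i \in I} p_i \cdot \Theta_i$ with $s_i \simord^e \Theta_i$. Since $\val$ on a distribution is defined as the set of convex combinations of values drawn from its support, one has $\max(\val(\sum_i p_i \cdot \Phi_i)) = \sum_i p_i \cdot \max(\val(\Phi_i))$ (and dually for $\min$), so the reduction to the pointwise inequality is immediate. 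The same decomposition works for $\failsimord^e$.

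For the state-based version of (1), I would induct on a well-founded measure, for instance the depth of $\tau$-reachability needed to compute $\val(s)$, which is finite in $\probpi$. Following the three cases of $\val$: if $s \ar{\omega}$ then $\max(\val(s)) = 1$, and clause~2 of $\simord^e$ applied to this (non-input) transition yields $\Theta \dar{\hat{\omega}}_\omega \Theta'$, which must pass through some $\Theta_1 \ar{\omega} \Theta_2$; hence every state in $\supp{\Theta_1}$ succeeds, so $\max(\val(\Theta_1)) = 1$ and Lemma~\ref{lm:trans-max} lifts this back to $\max(\val(\Theta)) \ge 1$. If $s \not \barb{\omega}$ but $s \ar{\tau} \Delta$, pick a $\tau$-successor $\Delta$ realising $\max(\val(s)) = \max(\val(\Delta))$, use clause~2 to obtain $\Theta \dar{\hat{\tau}}_\omega \Theta'$ with $\Delta \lift{\simord^e} \Theta'$, and combine the induction hypothesis with Lemma~\ref{lm:trans-max}. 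If $s$ is dead, $\max(\val(s)) = 0$ and the inequality is trivial.

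For (2), the three cases are dual. The new subtlety is when $s$ is dead: then $\min(\val(s)) = 0$ and we need $0 \in \val(\Theta)$. This is exactly where the third clause of $\failsimord^e$ is needed, with $X = \{\omega\}$: one obtains $\Theta \dar{\hat{\tau}}_\omega \Theta' \not \barb{X}$, forcing every state in $\supp{\Theta'}$ to refuse both $\tau$ and $\omega$, so $\val(\Theta') = \{0\}$, and Lemma~\ref{lm:trans-max} propagates this back to $\Theta$. When $s \ar{\omega}$, $\min(\val(s)) = 1$ and the bound is trivial; when $s \ar{\tau}$, pick a $\tau$-successor realising $\min(\val(s))$ and apply the induction hypothesis exactly as in (1).

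The main obstacle I foresee is setting up the well-founded induction cleanly, in particular ensuring that the measure decreases under the $\tau$-transitions used in the recursion for $\val$, and checking that the $\tau$-transitions $\ar{\tau}_\omega$ supplied by the simulation coincide (in states not enabling $\omega$) with the unrestricted $\ar{\tau}$ used inside $\val$. Both points are routine for the finite $\probpi$, but need explicit verification before the inductive step goes through.
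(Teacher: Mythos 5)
Your proof is correct. Note, though, that the paper itself gives no argument here: it simply cites Proposition 6.16 of \cite{Deng07ENTCS} (for the simulation/$max$ half, proved there for a function $maxlive$ in place of $max\circ\val$) and Proposition 4.10 of \cite{Deng08LMCS} (for the failure-simulation/$min$ half). What you have written is essentially a self-contained reconstruction of those delegated proofs, transplanted to the present $\val$: the reduction to the state-based inequality via the lifting decomposition together with the linearity of $max\circ\val$ and $min\circ\val$ over convex combinations of distributions; then a well-founded induction --- Lemma~\ref{lm:trans-size} makes $|s|$ (or the multiset $|\Delta|$ under $\prec$) a legitimate measure, since every state in the support of a $\tau$-successor of $s$ is strictly smaller --- with a case split mirroring the three clauses of $\val$. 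Your use of the refusal clause with $X=\{\omega\}$ in the deadlocked case of part (2) is precisely the one place where failure simulation, as opposed to plain simulation, is needed, and you identify it correctly. The one point you should make explicit when writing this up is the coincidence check you already flag: the $min$ half of Lemma~\ref{lm:trans-max} is only sound for $\omega$-respecting derivations $\dar{\hat{\tau}}_\omega$ (a state $s$ with both $s\ar{\omega}$ and $s\ar{\tau}$ has $min(\val(s))=1$ yet may have a $\tau$-successor of value $0$), and your argument indeed only ever invokes that lemma on the $\dar{\hat{\tau}}_\omega$ transitions supplied by $\simord^e$ and $\failsimord^e$, where no state performing a $\tau$-step enables $\omega$ and the single-step inequality $min(\val(s))\leq min(\val(\Delta))$ follows directly from the second clause of the definition of $\val$. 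With that caveat recorded, the proposal goes through.
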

\begin{proof}
The first clause is proven in \cite[Proposition 6.16]{Deng07ENTCS} using a function $maxlive$ instead of $max\circ\val$. The second clause is proven in \cite[Proposition 4.10]{Deng08LMCS}
\qed
\end{proof}

\begin{theorem}\label{thm:sim-sound}
\begin{enumerate}
\item
$P \simpreo Q$ implies $P \pmay Q$
\item
$P \failsimpreo Q$ implies $P \pmust Q.$
\end{enumerate}
\end{theorem}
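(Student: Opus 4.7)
The plan is to reduce each implication to the numerical characterisation of the testing preorders given by Lemma~\ref{lem:pmay-max}, and then sandwich the relevant $max$ or $min$ between the two monotonicity facts already established: Proposition~\ref{prop:sim-max} (the lifted (failure) simulation preorves the extremal outcome) and Lemma~\ref{lm:trans-max} (weak $\tau$-closure can only lower $max\circ\val$ and raise $min\circ\val$).

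For part~(1), fix a test $T$ and let $\vec x$ enumerate the free names of $T, P, Q$ other than $\omega$. From $P\simpreo Q$, Lemma~\ref{lem:preserve.par}(1) gives $T~|~P \simpreo^e T~|~Q$. The up-to-restriction proof of Lemma~\ref{lm:upto-restriction} is parametric in the visible-action alphabet, so its $\omega$-enriched analog (a trivial extension of Lemma~\ref{lm:clo-res} to $\simpreo^e$) yields $\nu\vec x.(T~|~P)\simpreo^e \nu\vec x.(T~|~Q)$. Unfolding the definition of $\simpreo^e$ produces some $\Theta$ with $\interp{\nu\vec x.(T~|~Q)}\dar{\hat\tau}_\omega \Theta$ and $\interp{\nu\vec x.(T~|~P)}\lift{\simord^e}\Theta$. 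Noting that $\dar{\hat\tau}_\omega\;\subseteq\;\dar{\hat\tau}$, Proposition~\ref{prop:sim-max}(1) applied to the second fact and Lemma~\ref{lm:trans-max} applied to the first give
\[
max(\val(\interp{\nu\vec x.(T~|~P)})) \;\leq\; max(\val(\Theta)) \;\leq\; max(\val(\interp{\nu\vec x.(T~|~Q)})).
\]
Since $T$ is arbitrary, Lemma~\ref{lem:pmay-max}(1) concludes $P\pmay Q$.

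Part~(2) is entirely dual. Starting from $P\failsimpreo Q$, I would apply Lemma~\ref{lem:preserve.par}(2) together with the failure-simulation analog of restriction closure to obtain $\nu\vec x.(T~|~P)\failsimpreo^e \nu\vec x.(T~|~Q)$. Because the direction of the defining weak transition is swapped in $\failsimpreo^e$, this unfolds to $\interp{\nu\vec x.(T~|~P)}\dar{\hat\tau}_\omega \Theta$ with $\interp{\nu\vec x.(T~|~Q)}\lift{\failsimord^e}\Theta$. Then Lemma~\ref{lm:trans-max} and Proposition~\ref{prop:sim-max}(2) yield
\[
min(\val(\interp{\nu\vec x.(T~|~P)})) \;\leq\; min(\val(\Theta)) \;\leq\; min(\val(\interp{\nu\vec x.(T~|~Q)})),
\]
and Lemma~\ref{lem:pmay-max}(2) concludes $P\pmust Q$.

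The only real obstacle — really a bookkeeping point — is verifying that the up-to-restriction congruence machinery of Section~\ref{sec:sound} transfers to the $\omega$-enriched relations $\simpreo^e$ and $\failsimpreo^e$; this is the single step that is not a direct citation of a stated lemma. Since $\omega$ behaves operationally as an ordinary output action and the arguments of Lemmas~\ref{lm:upto-restriction} and~\ref{lm:clo-res} never exploit any feature specific to the non-$\omega$ alphabet, re-running those proofs in the extended setting introduces no new ideas, and everything else is a routine composition of the lemmas already in hand.
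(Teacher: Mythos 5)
Your proposal is correct and follows essentially the same route as the paper's own proof: reduce to the $max$/$min$ characterisation of Lemma~\ref{lem:pmay-max}, use Lemma~\ref{lem:preserve.par} together with closure under restriction to get $\nu\vec x.(T~|~P)\simpreo^e\nu\vec x.(T~|~Q)$ (resp.\ $\failsimpreo^e$), and then chain Proposition~\ref{prop:sim-max} with Lemma~\ref{lm:trans-max}. The one point you flag as needing verification --- that restriction closure transfers to the $\omega$-enriched relations --- is exactly the step the paper also takes for granted, so there is no substantive divergence.
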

\begin{proof}
We prove the second statement; similar is the first one.
Suppose $P \failsimpreo Q$. Given Proposition~\ref{lem:pmay-max},
it is sufficient to show that for every test $T$,
$$
min(\val(\interp{\nu \vec x (T ~|~ P) })) \leq min(\val(\interp{\nu \vec x(T ~|~ Q)}))
$$
where $\vec x$ contain the free names of $T$, $P$ and $Q$, but excluding $\omega.$
Since $\failsimpreo$ is preserved by parallel composition (cf. Lemma~\ref{lem:preserve.par}) and name restriction, we have
that
$$
\nu \vec x(T ~|~ P) \failsimpreo^e \nu \vec x(T~|~Q),
$$
which means there is a $\Theta$  such that
$\interp {\nu \vec x(T~|~P)} \bstep{\hat \tau} \Theta$ and
$\interp {\nu \vec x(T~|~Q)} ~ \lift{\failsimord^e} \Theta.$
The result then follows from Proposition~\ref{prop:sim-max} and
Lemma~\ref{lm:trans-max}.
\qed 
\end{proof}

\section{A modal logic for $\probpi$}
\label{sec:modal}

We consider a modal logic based on a fragment of
Milner-Parrow-Walker's (MPW) modal logic for the (non-probabilistic)
$\pi$-calculus~\cite{Milner93TCS}, but extended with a probabilistic
disjunction operator $\oplus$, similar to that used in \cite{Deng08LMCS}.
The language of formulas is given by the following grammar:
$$
\varphi ::= \top ~ \mid ~ \Ref{X} ~ \mid ~ \ldia{a(x)} \varphi ~ \mid ~ 
            \ldia{\bar a x} \varphi ~ \mid ~
            \ldia{\bar a(x)} \varphi ~  \mid ~ 
            \varphi_1 \wedge \varphi_2 ~ \mid ~
\varphi_1 {\pch p} \varphi_2
$$
The $x$'s in $\ldia{a(x)}\varphi$ and $\ldia{\bar a(x)}\varphi$ are 
binders, whose scope is over $\varphi.$
The diamond operator $\ldia{a(x)}$ is called a bound input modal operator,
$\ldia{\bar a x}$ a free output modal operator and $\ldia{\bar a(x)}$ a bound output
modal operator.
Instead of binary conjunction and probabilistic disjunction, we sometimes write $\pand_{i\in I}\varphi_i$ and $\varphi_1 {\pch p} \varphi_2$ for finite index set $I$; they can be expressed by nested use of their binary forms. 
We refer to this modal logic as $\Fcal$.
Let $\Lcal$ be the sub-logic of $\Fcal$ by skipping the $\Ref{X}$ clause.  
The semantics of each operator is defined as follows.

\begin{definition}
\label{def:sat}
The {\em satisfaction relation} $\models$ between a distribution and 
a modal formula is defined inductively as follows:
\begin{itemize}
\item $\Delta \models \top$ always. 
\item $\Delta \models \Ref{X}$ iff there is a $\Delta'$ with $\Delta\dar{\hat{\tau}} \Delta'$ and $\Delta'\not\barb{X}$.
\item $\Delta \models \ldia{a(x)}\varphi$ iff 
for all $z$ there are $\Delta_1,$ $\Delta_2,$ $\Delta'$ and $w$ such that 
$\Delta \bstep{\hat \tau} \Delta_1 \sstep{a(w)} \Delta_2$,
$\Delta_2[z/w] \bstep{\hat \tau} \Delta'$
and
$\Delta' \models \varphi[z/x].$

\item $\Delta \models \ldia{\bar a x}\varphi$ iff for some $\Delta'$,
$\Delta \bstep{\widehat{\bar a x}} \Delta'$
and $\Delta' \models \varphi.$

\item $\Delta \models \ldia{\bar a(x)}\varphi$ iff for some
$\Delta'$ and $w \not \in fn(\varphi, \Delta)$,
$\Delta \bstep{\widehat{\bar a(w)}} \Delta'$
and $\Delta' \models \varphi[w/x].$

\item $\Delta \models \varphi_1 \wedge \varphi_2$ iff
$\Delta \models \varphi_1$ and $\Delta \models \varphi_2$.

\item $\Delta \models \varphi_1 {\pch p} \varphi_2$ iff there are $\Delta_1,\Delta_2 \in \Dcal(S_p)$ with $\Delta_1\models\varphi_1$ and $\Delta_2\models\varphi_2$, such that
$\bigstep{\Delta}{\hat \tau}{p \cdot \Delta_1 + (1-p)\cdot\Delta_2.}$
\end{itemize}
We write $\Delta \lleq \Theta$ just when 
$\Delta \models \psi$ implies $\Theta \models \psi$ for
all $\psi \in \Lcal$, and $\Delta \fleq \Theta$ just when $\Theta \models \varphi$ implies $\Delta \models \varphi$ for all $\varphi\in\Fcal$.
We write $P \lleq Q$
when $\interp P \lleq \interp Q$, and $P \fleq Q$ when $\interp P \fleq \interp Q$.
\end{definition}

Following \cite{Deng08LMCS}, in order to show soundness 
of the logical preorders w.r.t. the simulation pre-orders, we need to define a notion
of characteristic formulas. 

\begin{definition}[Characteristic formula]
\label{def:char-form}
The {\em $\Fcal$-characteristic formulas} $\varphi_s$ and $\varphi_\Delta$
of, respectively, a state-based process $s$ and a distribution $\Delta$
are defined inductively as follows:
$$
\begin{array}{rcl}
\varphi_s & := & \pand\{\ldia{\alpha}\varphi_\Delta \mid s\ar{\alpha}\Delta\} \wedge \Ref{\{\mu \mid s\not\barb{\mu}\}} \qquad\mbox{ if $s\not\ar{\tau}$},\\
\varphi_s & := & \pand\{\ldia{\alpha}\varphi_\Delta \mid s\ar{\alpha}\Delta, ~ \alpha \not = \tau\} ~ \land ~
\pand\{\varphi_\Delta \mid s\ar{\tau}\Delta\} \qquad
\mbox{ otherwise}.\\
\varphi_\Delta & := & \psum_{s\in\supp{\Delta}}\Delta(s) \cdot \varphi_s
\end{array}
$$
where $\psum$ is a generalised probabilistic choice as in Section~\ref{sec:pi}.
The \emph{$\Lcal$-characteristic formulas} $\psi_s$ and $\psi_\Delta$ are defined likewise, but omitting the conjuncts $\Ref{\{\mu \mid s\not\barb{\mu}\}}$.
\end{definition}
Note that because we use the late semantics (cf. Figure~\ref{fig:pi}), the conjunction in $\varphi_s$ is finite even though there can
be infinitely many (input) transitions from $s.$

Given a state based process $s$, we define its {\em size}, $|s|$, as the
number of process constructors and names in $s.$
The following lemma is straightforward from the definition of the operational
semantics of $\pi_p$.

\begin{lemma}
\label{lm:trans-size}
If $s \sstep{\alpha} \Delta$ then $|s| > |t|$ for every $t \in \supp \Delta.$
\end{lemma}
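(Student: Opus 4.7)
The plan is to proceed by induction on the derivation of $s \sstep{\alpha}\Delta$, using the operational rules of Figure~\ref{fig:pi}. Before starting the induction, I would establish two auxiliary observations. First, an easy structural induction on $P$ shows that for every (not necessarily state-based) process $P$, we have $|t| \leq |P|$ for every $t \in \supp{\interp P}$; concretely, if $P = s \in S_p$ then $\interp P = \pdist s$ and the bound holds with equality, while if $P = P_1 \pch p P_2$ then the support consists of states drawn from $\supp{\interp{P_i}}$ for $i=1,2$, which by the inductive hypothesis have size at most $|P_i| < |P|$. Second, since $|s|$ counts constructors and (occurrences of) names, the renaming substitution $[y/x]$ is size-preserving: $|s[y/x]| = |s|$, and lifting gives $|t| = |t'|$ whenever $t \in \supp{\Delta[y/x]}$ arises from some $t' \in \supp\Delta$.

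With these in hand the induction is routine. For the base case \textbf{Act}, the prefix $\alpha.P$ contains at least one constructor and (in the case of input/output prefixes) at least two names beyond those of $P$, so $|\alpha.P| > |P|$; combined with auxiliary fact~(a) this gives $|t| \leq |P| < |\alpha.P|$ for any $t \in \supp{\interp P}$. For the unary congruence rules \textbf{Sum}, \textbf{Match}, \textbf{Mismatch} and \textbf{Res}, the conclusion's process is obtained by wrapping the premise's process in at least one additional constructor (plus possibly extra names), so the size strictly increases, and the inductive hypothesis on the premise then yields the bound; in the case of \textbf{Res}, each $t \in \supp{\nu x.\Delta}$ has the form $\nu x.t'$ with $t' \in \supp\Delta$, and $|\nu x.t'| = |t'|+2 < |s|+2 = |\nu x.s|$.

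For the binary rule \textbf{Par}, each $t$ in the support of the conclusion has the form $t_1 \mid t_2$ with $t_1 \in \supp\Delta$ and $t_2 = t'$ (the other parallel component), so $|t_1 \mid t_2| = |t_1| + |t_2| + 1 < |s| + |t'| + 1 = |s \mid t'|$ by the inductive hypothesis on the premise. For \textbf{Com}, a support element of $\Delta_1[y/x] \mid \Delta_2$ has the form $t_1[y/x] \mid t_2$ with $t_i \in \supp{\Delta_i}$; auxiliary fact~(b) gives $|t_1[y/x]| = |t_1|$, and the two inductive hypotheses then bound the sum strictly by the size of the parallel composition in the conclusion. The case \textbf{Close} is handled analogously, with the extra outer $\nu w$ only adding to the size of the conclusion. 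Finally, \textbf{Open} reduces to the inductive hypothesis together with fact~(b), since $\supp{\Delta[y/z]}$ consists of size-preserving renamings of states in $\supp\Delta$, and the conclusion's process $\nu z.s$ is strictly larger than $s$.

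No step here is really difficult; the only thing to watch out for is making sure that (a) the interpretation map $\interp{\cdot}$ is handled correctly in the \textbf{Act} case (it only flattens probabilistic choices, never increasing size) and (b) substitution is recognised as size-preserving so that the \textbf{Com}, \textbf{Close} and \textbf{Open} cases go through cleanly. Given these two remarks, every case is a short arithmetic inequality and the induction closes.
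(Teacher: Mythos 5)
Your overall strategy---rule induction on the derivation of $s\sstep{\alpha}\Delta$, supported by the two auxiliary facts that $\interp{\cdot}$ never increases the size of support elements and that renaming substitution is size-preserving---is the natural one, and the paper offers no proof at all here (it simply declares the lemma straightforward from the operational semantics), so there is no alternative argument to compare against. The \textbf{Act}, \textbf{Sum}, \textbf{Match}, \textbf{Mismatch}, \textbf{Res}, \textbf{Par}, \textbf{Com} and \textbf{Open} cases all check out as you describe them.

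The \textbf{Close} case, however, does not go through with the induction hypothesis as stated, and your remark that ``the extra outer $\nu w$ only add[s] to the size of the conclusion'' points the wrong way: the $\nu w$ wraps the \emph{target} distribution, not the source $s\mid t$, so it makes the required inequality tighter rather than looser. Concretely, under your own size convention (you use $|\nu x.t'|=|t'|+2$ in the \textbf{Res} case), a support element of $\nu w.(\Delta_1\mid\Delta_2)$ has the form $\nu w.(t_1\mid t_2)$ of size $|t_1|+|t_2|+3$, while $|s\mid t|=|s|+|t|+1$; the inductive bounds $|t_1|\leq|s|-1$ and $|t_2|\leq|t|-1$ yield only $|t_1|+|t_2|+3\leq|s|+|t|+1$, i.e.\ $\leq$ where you need $<$. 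The lemma is still true, but to see it you must strengthen the statement for visible actions: if $s\sstep{\alpha}\Delta$ with $\alpha$ an input or output action, then $|t|\leq|s|-3$ for all $t\in\supp{\Delta}$, because the consumed prefix contributes one constructor and two name occurrences in the \textbf{Act} base case, and this gap is preserved (indeed enlarged) by \textbf{Sum}, \textbf{Match}, \textbf{Mismatch}, \textbf{Par}, \textbf{Res} and \textbf{Open}. With that, the \textbf{Close} arithmetic becomes $|t_1|+|t_2|+3\leq(|s|-3)+(|t|-3)+3<|s|+|t|+1$. The repair is routine, but as written the step would fail.
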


\begin{lemma}
\label{lm:char-form}
For every $\Delta \in \Dcal(S_p)$, $\Delta \models \varphi_\Delta$, as well as $\Delta\models\psi_\Delta$.
\end{lemma}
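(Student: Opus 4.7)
My plan is to reduce both claims to the statement $\pdist s \models \varphi_s$ (respectively, $\pdist s \models \psi_s$) for every state-based $s$, and then prove these by strong induction on the size $|s|$. For the reduction, since $\varphi_\Delta = \psum_{s\in\supp\Delta}\Delta(s)\cdot\varphi_s$, the semantics of $\pch{p}$ in Definition~\ref{def:sat} lets me take witnesses $\Theta_s := \pdist s$ with the required silent step $\Delta \bstep{\hat\tau}\sum_s \Delta(s)\cdot\pdist s = \Delta$ being reflexivity; each $\pdist s \models \varphi_s$ is then supplied by the main claim. The same reduction applies verbatim to $\psi_\Delta$.

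For the induction itself, Lemma~\ref{lm:trans-size} ensures that whenever $s \sstep{\alpha}\Delta$ the support of $\Delta$ consists of states strictly smaller than $s$, so the reduction plus the IH already yields $\Delta \models \varphi_\Delta$. I split on whether $s\sstep{\tau}$. If $s\not\sstep{\tau}$, the conjunct $\Ref{\{\mu \mid s\not\barb{\mu}\}}$ is witnessed by the reflexive silent successor $\Delta' = \pdist s$, which is stable and refuses exactly the actions in the set; each free-output or bound-output diamond conjunct $\ldia{\alpha}\varphi_\Delta$ is discharged by the matching single step of $\pdist s$ together with $\Delta \models \varphi_\Delta$. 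If $s\sstep{\tau}$, the non-$\tau$ diamond conjuncts are handled as before, and each bare $\tau$-conjunct $\varphi_\Delta$ is handled by the auxiliary observation that every formula of $\Fcal$ is preserved under prepending $\hat\tau$-transitions---a short induction on formulas, since each modal clause of Definition~\ref{def:sat} already absorbs a leading $\bstep{\hat\tau}$---so that $\pdist s \bstep{\hat\tau}\Delta \models \varphi_\Delta$ entails $\pdist s \models \varphi_\Delta$.

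The main obstacle is the bound-input conjunct $\ldia{a(x)}\varphi_\Delta$ arising from $s\sstep{a(x)}\Delta$: its semantics demands, for every name $z$, a silent successor $\Delta'$ of $\Delta[z/x]$ with $\Delta' \models \varphi_\Delta[z/x]$. My plan is to take $\Delta' = \Delta[z/x]$ and invoke a strengthened form of the induction hypothesis, namely ``for every $t$ with $|t| < |s|$ and every substitution $\sigma$ respecting the bound-name conventions, $\pdist{t\sigma}\models\varphi_t\sigma$''. The substitution-lifting behind this strengthened IH uses the equivariance of the operational semantics (Lemma~\ref{lm:rename} for the renaming fragment of $\sigma$) together with a direct treatment of the transitions that the Match and Mismatch rules may enable in $s\sigma$ when $\sigma$ is non-injective and identifies free names guarded by $[x=y]$ or $[x\not=y]$. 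The $\psi$-version for $\Lcal$ is obtained by the same argument with the $\Ref{}$ bookkeeping dropped; because the $\Ref{}$ clause is the most delicate one to reconcile with substitution, the $\Lcal$-case is essentially routine.
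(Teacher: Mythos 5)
Your overall strategy is the one the paper uses: reduce the claim to $\pdist{s}\models\varphi_s$ via the reflexive $\hat\tau$-move in the satisfaction clause for $\pch{p}$, then induct on $|s|$ using Lemma~\ref{lm:trans-size}, splitting on whether $s\sstep{\tau}$. Your explicit observation that satisfaction is preserved under prepending $\hat\tau$-steps (needed for the bare $\tau$-conjuncts) is a detail the paper leaves implicit, and it is correct.

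The gap is in the bound-input case, exactly where you place the weight of the argument. The strengthened induction hypothesis you propose, $\pdist{t\sigma}\models\varphi_t\sigma$ for every capture-avoiding $\sigma$, is false, and the ``direct treatment of Match and Mismatch'' you defer is not bookkeeping but the point at which the argument breaks. Take $t=[x\not= y]\bar b c.\nil$ and $\sigma=[y/x]$: then $\varphi_t$ (and likewise $\psi_t$) contains the conjunct $\ldia{\bar b c}\varphi_{\pdist{\nil}}$ because $t\sstep{\bar b c}\pdist{\nil}$, yet $t\sigma=[y\not= y]\bar b c.\nil$ is stuck, so $\pdist{t\sigma}$ fails that conjunct of $\varphi_t\sigma$. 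Dually, $t=[x= y]\bar b c.\nil$ with the same $\sigma$ breaks the conjunct $\Ref{\{\mu\mid t\not\barb{\mu}\}}$, since $t\sigma$ acquires the barb $\bar b$ that $t$ refuses and has no $\tau$-move to escape it. For the same reason your remark that the $\Lcal$-case is ``essentially routine'' does not hold: mismatch damages the diamond conjuncts, not only the $\Ref{X}$ conjunct, so already $s=a(x).[x\not= y]\bar b c.\nil$ fails $\pdist{s}\models\psi_s$ once the input modality is instantiated with $z=y$. In fairness, the paper's own proof of this case has the same hole: it applies the induction hypothesis to $s_i[w/x]$ and then silently identifies $\psum_{i}p_i\cdot\varphi_{s_i[w/x]}$ with $\varphi_\Delta[w/x]$, i.e.\ it assumes the commutation $\varphi_{s_i[w/x]}=\varphi_{s_i}[w/x]$, which is exactly the false step above. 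Repairing this requires changing the characteristic formula under an input prefix (e.g.\ indexing it by the instantiating name, or resolving the matches and mismatches guarding $x$ per instantiation), not merely strengthening the induction hypothesis.
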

\begin{proof}
It is enough to show that $\bar s \models \varphi_s.$ 
This is proved by by induction on $|s|.$ 
So suppose $s\not\ar{\tau}$. Then we have
$$
\begin{array}{ll}
\varphi_s = & \Ref{\{\mu \mid s\not\barb{\mu}\}}\wedge \\
& \pand \{\ldia {a(x)} \varphi_\Delta \mid \one s {a(x)}{\Delta} \} \land  
 \pand \{\varphi_\Delta \mid \one s \tau \Delta \} \land \\
 & \pand \{\ldia {\bar a x } \varphi_\Delta \mid \one s {\bar a x }{\Delta} \} \land  
 \pand \{\ldia {\bar a(x) } \varphi_\Delta \mid \one s {\bar a(x) }{\Delta} \}. 
\end{array}
$$
where $
\varphi_\Delta = \psum_{s \in \supp \Delta} \Delta(s).\varphi_s.
$
For each of the conjunct $\phi$, we prove that $\pdist s \models \phi.$
We show here two cases; the other cases are similar.
\begin{itemize}
\item $\phi= \Ref{X}$, where $X=\{\mu \mid s\not\barb{\mu}\}$. 
For each $\mu \in X$ we have $s\not\barb{\mu}$. Moreover, since $s\not\ar{\tau}$, 
we see that $s\not\barb{X}$.

\item  $\phi = \ldia {a(x)} \varphi_\Delta$. 
So suppose $s \sstep{a(x)} \Delta$ and 
$\supp \Delta = \{s_i \mid i \in I\}$ and $\Delta = \sum_{i \in I} p_i \cdot \pdist{s_i}.$
Since $|s_i| < |s|$, by the induction hypothesis, 
for every name $w$, we have 
$$
\pdist{s_i[w/x]} \models \varphi_{s_i[w/x]}
$$
and therefore:
$$
\Delta[w/x] = \sum_{i\in I} p_i \cdot \pdist{s_i[w/x]} \models 
\psum_{i \in I} p_i \cdot \varphi_{s_i[w/x]} = \varphi_\Delta[w/x].
$$
Let $\Phi_1 = \Phi_2 = \pdist s.$ 
Obviously we have, for every $w$, 
$$
\Phi_1 \bstep {\hat \tau} \Phi_2 \sstep {a(x)} \Delta, \qquad
\Delta[w/x] \models \varphi_\Delta[w/x].
$$
So by Definition~\ref{def:sat}, $\pdist s \models \phi.$
\end{itemize}
\qed
\end{proof}

\begin{lemma}
\label{lm:modal-sim}
For any processes $P$ and $Q$,
$\interp{P}\models\varphi_{\interp{Q}}$ implies $P\failsimpreo Q$, and likewise
 $\interp Q \models \psi_{\interp P}$
implies $P \simrel_S Q.$
\end{lemma}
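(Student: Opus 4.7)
My plan is to adapt the characteristic-formula technique from probabilistic CSP to the name-passing setting. I would define two relations $\Rcal_F, \Rcal_S \subseteq S_p \times \Dcal(S_p)$ by $s\mathrel{\Rcal_F}\Theta$ iff $\Theta\models\varphi_s$, and $s\mathrel{\Rcal_S}\Theta$ iff $\Theta\models\psi_s$. The core technical step is to show that $\Rcal_F$ is a failure simulation and $\Rcal_S$ is a simulation, which yields $\Rcal_F \subseteq \failsimord$ and $\Rcal_S \subseteq \simord$.

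To verify that $\Rcal_F$ is a failure simulation, I would fix $s\mathrel{\Rcal_F}\Theta$ and case-split on the transitions of $s$. For each transition $s\ar{\alpha}\Delta$ with $\alpha\neq\tau$ not an input, $\varphi_s$ contains a conjunct $\ldia{\alpha}\varphi_\Delta$; satisfaction unpacks to $\Theta\bstep{\hat\alpha}\Theta''$ with $\Theta''\models\varphi_\Delta$. Since $\varphi_\Delta = \bigoplus_{t\in\supp\Delta}\Delta(t)\cdot\varphi_t$, the semantics of $\oplus$ from Definition~\ref{def:sat} then yields $\Theta''\bstep{\hat\tau}\Theta'$ with $\Theta'=\sum_t\Delta(t)\cdot\Theta_t$ and $\Theta_t\models\varphi_t$, i.e.\ $t\mathrel{\Rcal_F}\Theta_t$; linearity of the lifting (Definition~\ref{def:lifting}) delivers $\Delta\lift{\Rcal_F}\Theta'$, completing the matching. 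The $\tau$ case is analogous but uses the bare $\varphi_\Delta$-conjunct. For the input case $s\ar{a(x)}\Delta$, $\varphi_s$ contains $\ldia{a(x)}\varphi_\Delta$, whose satisfaction is ``for every $z$'' — this matches exactly the ``for every name $w$'' in clause~1 of Definition~\ref{def:sim}, once one verifies the compatibility $\varphi_\Delta[z/x]\equiv\varphi_{\Delta[z/x]}$ (modulo $\alpha$-conversion of binders inside the formula). Finally, for clause~3: if $s\not\barb X$ then $s\not\ar{\tau}$, so $\varphi_s$ contains $\Ref Y$ with $Y=\{\mu\mid s\not\barb\mu\}\supseteq X$; satisfaction gives $\Theta\bstep{\hat\tau}\Theta'\not\barb Y$, hence $\Theta'\not\barb X$.

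Once $\Rcal_F\subseteq\failsimord$ is established, the main statement is immediate. Decompose $\varphi_{\interp Q}=\bigoplus_{s\in\supp{\interp Q}}\interp Q(s)\cdot\varphi_s$. The hypothesis $\interp P\models\varphi_{\interp Q}$ unfolds via the $\oplus$-semantics to yield $\interp P\bstep{\hat\tau}\Theta$ with $\Theta=\sum_s\interp Q(s)\cdot\Theta_s$ and each $\Theta_s\models\varphi_s$, so $s\mathrel{\Rcal_F}\Theta_s$. Linearity of lifting gives $\interp Q=\sum_s\interp Q(s)\cdot\pdist s\lift{\Rcal_F}\Theta$; since $\Rcal_F\subseteq\failsimord$, we conclude $\interp Q\lift{\failsimord}\Theta$, and together with $\interp P\bstep{\hat\tau}\Theta$ this is exactly $P\failsimpreo Q$. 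The simulation statement is proved by the same pattern, replacing $\varphi$ by $\psi$ (and omitting the refusal case), with the roles of $P$ and $Q$ interchanged so that the $\hat\tau$-evolution lands on $\interp Q$, as required by the definition of $\simpreo$.

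The principal obstacle I anticipate is the input-action case: the bound-name side-conditions in clause~1 of Definition~\ref{def:sim} have to be aligned with the scoping conventions in the semantics of $\ldia{a(x)}\varphi$, and the commutativity lemma $\varphi_\Delta[z/x]\equiv\varphi_{\Delta[z/x]}$ requires attention because substitution on state-based processes may coalesce support points of $\Delta$, so one really proves equivalence of formulas rather than syntactic equality. Lemma~\ref{lm:trans-size} ensures well-foundedness of the relevant induction. Beyond this, the argument is a routine adaptation of the pCSP analogue in~\cite{Deng08LMCS}.
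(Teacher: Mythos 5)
Your proposal is correct and follows essentially the same route as the paper: the same relations $s\mathrel{\Rcal}\Theta$ iff $\Theta\models\varphi_s$ (resp.\ $\psi_s$), the same key step that $\Theta\models\varphi_\Delta$ yields $\Theta\bstep{\hat\tau}\Theta'$ with $\Delta\lift{\Rcal}\Theta'$ via the $\oplus$-semantics and linearity of lifting, and the same case analysis on transitions. Your explicit flagging of the compatibility $\varphi_\Delta[z/x]\equiv\varphi_{\Delta[z/x]}$ (with possible coalescing of support points) is a detail the paper's proof passes over silently, but it does not change the argument.
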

\begin{proof}
Let $\Rcal$ be the relation defined as follows:
$s ~ \Rcal ~ \Theta$ iff $\Theta \models \varphi_s.$
We first prove the following claim:
\begin{equation}
\label{claim}
\mbox{
$\Theta \models \varphi_\Delta$ implies
there exists $\Theta'$ such that
$\bigstep{\Theta}{\hat \tau}{\Theta'}$
and $\Delta ~ \overline \Rcal ~ \Theta'.$
}
\end{equation}
To prove this claim (following \cite{Deng08LMCS}), suppose that $\Theta \models \Delta$.
By definition, $\varphi_\Delta = \psum_{i \in I} p_i \cdot \varphi_{s_i}$
and $\Delta = \sum_{i \in I} p_i \cdot \pdist {s_i}$. For every $i \in I$,
we have $\Theta_i \in \Dcal(S_p)$ with $\Theta_i \models \varphi_{s_i}$
such that $\bigstep {\Theta} {\hat \tau}{\Theta'}$ with 
$\Theta' = \sum_{i\in I} p_i \cdot \Theta_i.$ Since $s_i ~ \Rcal ~ \Theta_i$
for all $i \in I$, we have $\Delta ~ \overline \Rcal ~ \Theta'.$

We now proceed to show that $\Rcal$ is a failure simulation, hence proving
the first statement of the lemma. So suppose $s ~ \Rcal ~ \Theta$.
\begin{enumerate}
\item Suppose $\one s \tau \Delta$. By the definition of $\Rcal$,
we have $\Theta \models \varphi_s.$
By Definition~\ref{def:char-form}, we also have $\Theta \models \varphi_\Delta.$
By (\ref{claim}) above, there exists $\Theta'$
such that $\bigstep{\Theta}{\hat \tau}{\Theta'}$ and $\Delta ~ \overline \Rcal ~ \Theta'.$

\item Suppose $\one s {\bar a x} \Delta$.
Then by Definition~\ref{def:char-form}, $\Theta \models \ldia{\bar a x}\varphi_\Delta.$
So $\Theta \bstep {\bar a x } \Theta'$ and $\Theta' \models \varphi_\Delta$,
for some $\Theta'.$ By (\ref{claim}), there exists $\Theta''$ such that
$\Theta'  \bstep{\hat {\tau}}{\Theta''}$ and $\Delta ~ \overline \Rcal ~ \Theta''.$
This means that $\Theta \bstep{\bar a x} \Theta''$ and
$\Delta ~ \overline \Rcal ~ \Theta''.$

\item Suppose $\one s {a(x)} \Delta$ for some $x \not \in fn(s,\Theta).$
By Definition~\ref{def:char-form}, $\Theta \models \ldia{a(x)}\varphi_\Delta.$
This means for every name $z$, there exists $\Theta_z^1$, $\Theta_z^2$ and $\Theta_z$ such that
$\Theta \bstep {\hat \tau} \Theta_z^1 \sstep {a(x)} {\Theta_z^2}$,
$\Theta_z^2[z/x] \bstep{\hat \tau} \Theta_z$ and $\Theta_z \models \varphi_\Delta[z/x].$\footnote{Strictly
speaking, we should also consider the case where $\Theta_z^1 \sstep{a(w)} \Theta_z^2$,
but it is easy to see that since $x \not \in fn(s,\Theta)$ we can always apply a renaming to rename
$w$ to $x.$}
Then by (\ref{claim}) we have 
$\bigstep {\Theta_z }{\hat \tau}{\Theta_z'}$
and $\Delta[z/x] ~ \overline \Rcal ~ \Theta_z'.$
So we indeed have, for every name $z$, $\Theta_z^1$, $\Theta_z^2$ and $\Theta_z'$ such that
$$
\Theta \bstep{\hat \tau} \Theta_z^1 \sstep{a(x)} \Theta_z^2,
\qquad 
\Theta_z^2[z/x] \bstep{\hat \tau} \Theta_z' 
\qquad
\hbox{ and }
\qquad
\Delta[z/x] ~ \overline \Rcal ~ \Theta_z'.
$$

\item Suppose $\one s {\bar a(x)}{\Delta}.$ This case is similar
to the previous one, except that we need only to consider one
instance of $x$ with a fresh name.

\item Suppose $s\not\barb{X}$ for a set of channel names $X$. By Definition~\ref{def:char-form}, we have $\Theta\models\Ref{X}$. Hence, there is some $\Theta'$ with $\Theta\dar{\hat{\tau}}\Theta'$ and $\Theta'\not\barb{X}$.
\end{enumerate}

To establish the second statement, define $\Rcal$ by $s\Rcal\Theta$ iff $\Theta\models\psi_s$. Just as above it can be shown that $\Rcal$ is a simulation. Then the second statement of the lemma easily follows.
\qed
\end{proof}

\begin{theorem}\label{thm:modal-sim}
\begin{enumerate}
\item
If $P \lleq Q$ then $P \simpreo Q.$
\item
If $P \fleq Q$ then $P \failsimpreo Q.$
\end{enumerate}
\end{theorem}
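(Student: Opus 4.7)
The plan is to deduce both implications directly from the characteristic formula machinery already developed, namely Lemma~\ref{lm:char-form} and Lemma~\ref{lm:modal-sim}. The strategy is: to show $P \simpreo Q$ from $P \lleq Q$, exhibit a single $\Lcal$-formula that captures the behaviour of $P$ (its $\Lcal$-characteristic formula $\psi_{\interp P}$) and transport it along the logical preorder. Dually, to show $P \failsimpreo Q$ from $P \fleq Q$, use the $\Fcal$-characteristic formula $\varphi_{\interp Q}$ of $Q$; the duality in the direction of the logical preorder for $\fleq$ (formulas satisfied by $Q$ are also satisfied by $P$) matches the dual convention used in Definition~\ref{def:sim} for $\failsimpreo$, so the chosen characteristic formula must be that of $Q$, not of $P$.

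For part (1), I would proceed as follows. By Lemma~\ref{lm:char-form}, $\interp P \models \psi_{\interp P}$. Since $\psi_{\interp P}$ is a formula of $\Lcal$ and the hypothesis $P \lleq Q$ says that every $\Lcal$-formula satisfied by $\interp P$ is also satisfied by $\interp Q$, we obtain $\interp Q \models \psi_{\interp P}$. Applying the second clause of Lemma~\ref{lm:modal-sim} then yields $P \simpreo Q$, as required.

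For part (2), the argument is symmetric but uses the opposite direction built into $\fleq$. By Lemma~\ref{lm:char-form}, $\interp Q \models \varphi_{\interp Q}$. Since $\varphi_{\interp Q} \in \Fcal$ and $P \fleq Q$ means that $\interp Q \models \varphi$ implies $\interp P \models \varphi$ for every $\varphi \in \Fcal$, we get $\interp P \models \varphi_{\interp Q}$. The first clause of Lemma~\ref{lm:modal-sim} then delivers $P \failsimpreo Q$.

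The theorem itself therefore reduces to a few lines of plumbing once the preceding material is in place. The only subtlety worth flagging is to keep the orientations straight: $\lleq$ transports $\Lcal$-formulas in the direction $P \to Q$, matching the $P \simpreo Q$ clause which requires $\interp Q$ to weakly $\hat\tau$-evolve into something that simulates $\interp P$; whereas $\fleq$ transports $\Fcal$-formulas in the direction $Q \to P$, matching the $P \failsimpreo Q$ clause which requires $\interp P$ to weakly evolve into something failure-simulated by $\interp Q$. The real work has already been done in Lemmas~\ref{lm:char-form} and~\ref{lm:modal-sim}; nothing further is needed here.
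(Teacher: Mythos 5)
Your argument is correct and is essentially identical to the paper's own proof: both parts instantiate Lemma~\ref{lm:char-form} to get satisfaction of the appropriate characteristic formula, transport it along the logical preorder in the correct direction, and conclude via Lemma~\ref{lm:modal-sim}. Your explicit remark about the opposite orientations of $\lleq$ and $\fleq$ is a helpful clarification that the paper leaves implicit.
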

\begin{proof}
Suppose $P \simrel_\Lcal Q$. By Lemma~\ref{lm:char-form}, we have $\interp P \models \psi_{\interp P}$,
hence $\interp Q \models \psi_{\interp P}$. Then by Lemma~\ref{lm:modal-sim}, we have $P \simrel_S Q.$

For the second statement, assume $P \failsimpreo Q$, we have $\interp{Q}\models\varphi_{\interp{Q}}$ and hence $\interp{P}\models\varphi_{\interp{Q}}$, and thus $P\failsimpreo Q$.
\qed
\end{proof}

\newcommand\vsim[1]{{\widehat \simrel}^{#1}_{pmay}}
\newcommand\vapply[2]{{\widehat {\cal A}}^\Omega_{\updownarrow}(#1,#2) }

\newcommand\pr[2]{\langle #1, #2\rangle}

\section{Completeness of the simulation preorders}
\label{sec:comp}

In the following, we assume a function $new$ that takes as an argument 
a finite set of names and outputs a fresh name, i.e.,
if $new(N) = x$ then $x\not \in N.$
If $N = \{x_1,\ldots,x_n\}$, we write $[x \not = N]P$ to abbreviate
$[x \not = x_1][x \not = x_2] \cdots [x \not = x_n]P.$

For convenience of presentation, we write $\vec{\omega}$ for the vector in $[0,1]^\Omega$
defined by $\vec{\omega}(\omega)=1$ and $\vec{\omega}(\omega')=0$ for
any $\omega'\not=\omega$. 
We also extend the $Apply^\Omega$ function to
allow applying a test to a distribution, defined as 
$Apply^\Omega(T, \Delta) = \val({\nu \vec x(\interp T~|~\Delta)})$
where $\vec x = fn(T,\Delta) - \Omega.$ 

\begin{lemma}
\label{lm:sat-renaming}
If $\Delta \models \varphi$ then $\Delta \sigma \models \varphi \sigma$
for any renaming substitution $\sigma.$
\end{lemma}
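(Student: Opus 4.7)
The plan is to proceed by structural induction on $\varphi$, using Lemma~\ref{lm:rename} to transport transitions across $\sigma.$ Since $\sigma$ is injective with finite non-identity support, we extend it to a permutation of the set of all names (still denoted $\sigma$); in particular $\sigma^{-1}$ exists and is itself a renaming substitution. By the standard bound-name convention we assume that every bound name of $\varphi$ is chosen fresh with respect to $\sigma,$ so that $(\ldia{a(x)}\varphi')\sigma = \ldia{a\sigma(x)}(\varphi'\sigma)$ and analogously for the bound-output modality.

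The cases $\varphi = \top$, $\varphi = \varphi_1 \wedge \varphi_2$, and $\varphi = \varphi_1 \pch p \varphi_2$ will follow directly from the induction hypothesis, using Lemma~\ref{lm:rename}(2) to rename the weak silent transition witnessing the probabilistic disjunction. For $\varphi = \Ref{X}$, if $\Delta \dar{\hat\tau} \Delta' \not \barb{X}$ then Lemma~\ref{lm:rename}(2) gives $\Delta\sigma \dar{\hat\tau} \Delta'\sigma$, and because $\sigma$ is a bijection the operational rules yield $s \barb \mu$ iff $s\sigma \barb{\mu\sigma}$; hence $\Delta'\sigma \not \barb{X\sigma}$, which is exactly $\Delta\sigma \models \Ref{X}\sigma = \Ref{X\sigma}.$ The free-output and bound-output modality cases are direct applications of Lemma~\ref{lm:rename}(2) combined with the induction hypothesis, picking the bound witness name in the bound-output case fresh for $\sigma.$

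The main obstacle is the bound-input modality $\varphi = \ldia{a(x)}\varphi'$, whose semantic clause universally quantifies over all names $z.$ To verify $\Delta\sigma \models \ldia{a\sigma(x)}(\varphi'\sigma)$, I would fix an arbitrary $z$ and set $z' := \sigma^{-1}(z).$ Applying $\Delta \models \ldia{a(x)}\varphi'$ at $z'$ produces witnesses $\Delta_1, \Delta_2, \Delta', w$ (with the bound name $w$ chosen fresh for $\sigma$ as well) such that $\Delta \bstep{\hat\tau} \Delta_1 \sstep{a(w)} \Delta_2$, $\Delta_2[z'/w] \bstep{\hat\tau} \Delta'$, and $\Delta' \models \varphi'[z'/x].$ Applying $\sigma$ via Lemma~\ref{lm:rename}, together with the substitution identities $\Delta_2[z'/w]\sigma = \Delta_2\sigma[z/w]$ and $(\varphi'[z'/x])\sigma = \varphi'\sigma[z/x]$ (which hold because $w,x$ are fresh for $\sigma$ and $\sigma(z') = z$), yields the required chain $\Delta\sigma \bstep{\hat\tau} \Delta_1\sigma \sstep{a\sigma(w)} \Delta_2\sigma$ with $\Delta_2\sigma[z/w] \bstep{\hat\tau} \Delta'\sigma$; the induction hypothesis then gives $\Delta'\sigma \models \varphi'\sigma[z/x].$ Since $z$ was arbitrary, this establishes $\Delta\sigma \models \varphi\sigma.$
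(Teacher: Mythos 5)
Your proof is correct; the paper in fact states Lemma~\ref{lm:sat-renaming} without any proof, and your argument supplies the standard one. You correctly identify the only non-routine point: the input modality's universal quantification over $z$, which you handle by extending $\sigma$ to a permutation and pulling $z$ back to $\sigma^{-1}(z)$ before applying the hypothesis --- this is exactly what is needed, and the substitution-commutation identities you invoke hold because the bound names $w,x$ are chosen fresh for $\sigma$. Two small points worth making explicit: (i) the induction should be on the \emph{size} of $\varphi$ rather than literally on its structure, since in the input case you apply the hypothesis to the renaming instance $\varphi'[z'/x]$ of the subformula, not to the subformula itself (harmless, as renaming preserves size); and (ii) the re-choice of the existential witness $w$ to be fresh for $\sigma$ in the input and bound-output cases itself relies on Lemma~\ref{lm:rename}, which you use implicitly.
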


In the following, given a name $a$, we write $a.P$ to denote $a(y).P$ for
some $y \not \in fn(P).$ Similarly, we write $\bar a.P$ to denote 
$\bar a a.P.$
Recall that the size of a state-based process, $|s|$, is the number of
symbols in $s.$ The {\em size} of a distribution $\Delta$, written $|\Delta|$, is 
the {\em multiset} $\{|s| \mid s \in \supp \Delta \}.$
There is a well-founded ordering on $|\Delta|$, i.e., the multiset (of natural
numbers) ordering, which we shall denote with $\prec$.

\begin{lemma}
\label{lm:test1}
Let $P$ be a process and $T, T_i$ be tests. 
\begin{enumerate}
\item $o\in Apply^\Omega(\omega,P)$ iff $o=\vec{\omega}$.
\item Let $X=\{\mu_1,...,\mu_n\}$ and $T=\mu_1.\omega+...+\mu_n.\omega$. Then 
 $\vec{0} \in Apply^\Omega(T,P)$ iff
 $\interp{P}\dar{\hat{\tau}}\Delta$ for some $\Delta$ with $\Delta\not\barb{X}$.
\item Suppose the action $\omega$ does not occur in the test $T$. Then
$o\in Apply^\Omega(\omega+a(x).([x=y]\tau.T+\omega), P) $ with $o(\omega)=0$ iff
there is $\Delta$ such that 
$\interp P \bstep {\widehat{\bar a y}} \Delta$
and $o\in Apply^\Omega(T[y/x], \Delta).$
\item Suppose the action $\omega$ does not occur in the test $T$ and  $fn(P)\subseteq N$. Then
$o\in Apply^\Omega(\omega+a(x).([x\not=N]\tau.T+\omega), P) $ with $o(\omega)=0$ iff
there is $\Delta$ such that 
$\interp P \bstep {\widehat{\bar a (y)}} \Delta$
and $o\in Apply^\Omega(T[y/x], \Delta).$
\item Suppose the action $\omega$ does not occur in the test $T$. Then
$o\in Apply^\Omega(\omega+\bar a x.T, P)$ with $o(\omega)=0$
iff there are $\Delta$, $\Delta_1$ and $\Delta_2$ 
such that $\interp P \bstep{\hat \tau} \Delta_1 \sstep{a(y)} \Delta_2,$ 
$\Delta_2[x/y] \bstep{\hat \tau} \Delta$ 
and 
$o\in Apply^\Omega(T,\Delta).$
\item $o\in Apply^\Omega(\psum_{i\in I} p_i \cdot T_i, P) $
iff $o=\sum_{i\in I}p_i\cdot o_i$ for some
$o_i\in Apply^\Omega(T_i,P)$ for all $i\in I.$

\item $o\in Apply^\Omega(\sum_{i\in I} \tau.T_i, P)$
if for all $i \in I$ there are $q_i \in [0,1]$ and
$\Delta_i$ such that $\sum_{i\in I} q_i = 1$,
$\interp P \bstep{\hat \tau} \sum_{i \in I} q_i \cdot \Delta_i$
and $o=\sum_{i\in I}q_i\cdot o_i$ for some
$o_i\in Apply^\Omega(T_i,\Delta_i).$
\end{enumerate}
\end{lemma}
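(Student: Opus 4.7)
My plan is to dispatch each of the seven clauses by unfolding the definitions of $Apply^\Omega$ and $\val^\Omega$ on the restricted composite $\nu \vec{x}.(\interp{T}\mid\interp{P})$, exploiting the fact that the restriction blocks every visible transition of $P$ on channels not in $\Omega$. Throughout I rely on well-founded induction on the size of the composite, which terminates because the calculus is finite (Lemma~\ref{lm:trans-size}). The common pattern is: \emph{(i)}~enumerate the top-level transitions of the composite allowed by the operational rules in Figure~\ref{fig:pi}; \emph{(ii)}~apply the equation for $\val^\Omega$ once; \emph{(iii)}~read off the characterisation by matching summands against the syntactic form of $T$.

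For items 1--5 the test has a very restricted shape. For item~1, only the $\omega$ firing and internal $\tau$'s from $P$ are enabled; since $\omega!o$ overrides the $\omega$-coordinate with $1$ and $\tau!$ is the identity, an induction on $|P|$ shows every outcome equals $\vec{\omega}$, and convex closure collapses to $\{\vec{\omega}\}$. For item~2, the vector $\vec{0}$ arises exactly when, after some internal $\tau$-reduction, no $\mu_i$ and no further $\tau$ is enabled, which by \textbf{Com}/\textbf{Close} is equivalent to $\interp{P}\dar{\hat\tau}\Delta\not\barb{X}$. For items~3--5 the test is a sum $\omega+T'$; constraining $o(\omega)=0$ forces execution to commit to $T'$ before any left-summand $\omega$ fires, and the internal structure of $T'$ then dictates which synchronisation with $P$ must occur: a \textbf{Com} with a free output for item~3, a \textbf{Close} with a bound output gated by $[x\neq N]$ for item~4, and a \textbf{Com} with an input for item~5. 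In each case the silent reductions on either side of the key synchronisation are absorbed into the weak arrow $\bstep{\widehat{\alpha}}$ appearing in the statement.

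Items~6 and~7 are handled by distributional laws. For item~6, $\interp{\psum_{i\in I}p_i\cdot T_i}=\sum_{i\in I}p_i\cdot\interp{T_i}$, and parallel composition distributes linearly (Lemma~\ref{lm:par0}(1)), so the composite factors as $\sum_{i\in I}p_i\cdot\nu\vec{x}.(\interp{T_i}\mid\interp{P})$; every outcome is then a $p_i$-weighted combination of outcomes $o_i\in Apply^\Omega(T_i,P)$, and convex closure delivers both directions of the iff. For item~7, each summand $\tau.T_i$ induces a $\tau$-transition of the composite via \textbf{Sum} and \textbf{Par}; matching the $\tau$-reducts on the $P$ side via Proposition~\ref{prop:lifting} with the convex-closure step in the $\val^\Omega$ equation yields exactly the stated form $o=\sum_{i\in I}q_i\cdot o_i$ with $\interp{P}\bstep{\hat\tau}\sum_{i\in I}q_i\cdot\Delta_i$ and $o_i\in Apply^\Omega(T_i,\Delta_i)$.

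The principal obstacle I anticipate is item~4, where the mismatch guard $[x\neq N]$, the bound-output/bound-input synchronisation via \textbf{Close}, and the need for a fresh extruded name all interact. The argument must show that an outcome with $o(\omega)=0$ can arise only when $P$ emits a name outside $N$, which under $fn(P)\subseteq N$ forces a bound output (no free output can pass the mismatch guard); this step relies on Lemma~\ref{lm:res}, Lemma~\ref{lm:trans-par}(4), and the \textbf{Open} rule, together with a careful choice of a fresh witness $y$. A secondary but recurring point, needed in items~6 and~7, is verifying that the convex closure inherent in $\val^\Omega$ commutes with the syntactic probabilistic and nondeterministic summations; I plan to handle this uniformly by invoking linearity of lifting together with Lemma~\ref{lm:par0}(1).
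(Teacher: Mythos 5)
Your plan matches the paper's proof in essence: the paper likewise reduces items 1, 2, 6, 7 to the corresponding pCSP lemmas (via linearity of $\val^\Omega$ and Proposition~\ref{prop:lifting}) and proves item 3 in detail by unfolding $\val^\Omega$ on $\nu\vec z.(T'\mid\Theta)$ with a well-founded induction on the multiset size $|\Theta|$, arguing that $o(\omega)=0$ forces every support state of the composite to enable a $\tau$, which is either an internal move of $P$ (recurse) or the \textbf{Com} synchronisation whose name is pinned down by the match guard. Your identification of the mismatch-guard/\textbf{Close} interaction in item 4 as the delicate point is consistent with the paper's treatment, so the proposal is correct and takes essentially the same route.
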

\begin{proof}
The proofs of items 1 and 2 are similar to the proofs of Lemma 6.7(1) and 6.7(2) 
in \cite{Deng08LMCS} for pCSP; items 6 and 7 correspond to 
Lemma 6.7(4) and Lemma 6.7(5) in \cite{Deng08LMCS}, respectively. 
Items 3, 4 and 5 have a counterpart in Lemma 6.7(3) of \cite{Deng08LMCS}, but they
are quite different, due to the name-passing feature of the $\pi$-calculus, and the
possibility of checking the identity of the input value via the match and the mismatch
operators. 
We show here a proof of item 3; the proofs of items 4 and 5 are similar.

We first generalize item 3 to distributions: given $\omega$ and $T$ as above, we have,
for every distribution $\Theta$, 
\begin{quote}
$o\in Apply^\Omega(\omega+a(x).([x=y]\tau.T+\omega), \Theta) $ with $o(\omega)=0$ iff
there is $\Delta$ such that 
$\Theta \bstep {\widehat{\bar a y}} \Delta$
and $o\in Apply^\Omega(T[y/x], \Delta).$
\end{quote}
The `if' part is straightforward from Definition~\ref{def:vector-based-results}. 
We show the `only if' part here. 
The proof will make use of the following claim (easily proved by induction on $|\Theta|$):
\begin{equation}
\label{eq:claim}
\begin{array}{l}
\mbox{\bf Claim:} ~ o\in Apply^\Omega([y=y]\tau.T[y/x]+\omega, \Theta) \mbox{ with } o(\omega)=0 \mbox{ iff } \\
\mbox{there is $\Delta$ such that $\Theta \bstep {\hat \tau} \Delta$ and $o\in Apply^\Omega(T[y/x], \Delta).$ }
\end{array}
\end{equation}

So, suppose we have $o\in Apply^\Omega(\omega+a(x).([x=y]\tau.T+\omega), \Theta) $ with $o(\omega)=0$.
We show, by induction on $|\Theta|$,  that there exists $\Delta$ such that 
$\Theta \bstep {\bar a y} \Delta$
and $o\in Apply^\Omega(T[y/x], \Delta).$
Let $T' = \omega+a(x).([x=y]\tau.T+\omega)$, and suppose
$\Theta = p_1 \cdot \pdist{s_1} + \ldots + p_n \cdot \pdist{s_n}$, 
for pairwise distincts state-based processes $s_1,\ldots,s_n$,
and suppose that 
$\vec z$ is an enumeration of the set $fn(T',\Theta) - \Omega.$
Then 
$$
Apply^\Omega(T',\Theta) = \val^\Omega(p_1 \cdot \pdist{\nu \vec z(T'|s_1)} + \ldots + p_n \cdot \pdist{\nu \vec z(T'|s_n)}).
$$
From Definition~\ref{def:vector-based-results}, in order to have $o(\omega) = 0$,
it must be the case that $\nu \vec z(T' | s_j) \ar{\tau} $ for 
every $j \in \{1,\dots,n\}.$
From the definition of the operational semantics, there are exactly two cases where this might happen:
\begin{itemize}
\item For some $i$, $s_i \ar{\tau} \Lambda$ for some distribution $\Lambda.$ 
Let $\Theta' = p_1 \cdot \pdist{s_1} + \ldots + p_i \cdot \Lambda + \ldots + p_n \cdot \pdist{s_n}.$
Then we have $\Theta \ar{\hat \tau} \Theta'$ and 
$\nu\vec z(T' | \Theta) \ar{\hat \tau} \nu \vec z(T'|\Theta').$
The latter means that $o \in \val^\Omega(\nu \vec z(T'|\Theta'))$ as well.
By Lemma~\ref{lm:trans-size}, we know that $|\Lambda| \prec \{|s_i|\}$,
and therefore $|\Theta'| \prec |\Theta|.$ By the induction hypothesis, 
$$
\Theta \ar{\hat\tau} \Theta' \bstep{\widehat{\bar a y}} \Delta
$$
and $o \in Apply^\Omega(T[y/x], \Delta).$

\item For every $i \in \{1,\dots,n\}$, we have $s_i \not \ar{\tau}.$
This can only mean that the $\tau$ transition from $\nu \vec z(T'|s_i)$
derives from a communiation between $T'$ and $s_i.$
This means that $s_i\barb{\bar a}$, for every $i \in \{1,\dots,n\}.$
We claim that, in fact, for every $i$, we have $s_i \ar{\bar a y} \Theta_i$,
for some $\Theta_i.$ For otherwise, we would have that for some
$j$, $\nu \vec z(T' | s_j) \ar{\tau} \nu \vec z(([u = y]\tau.T[y/x] + \omega) ~|~ \Theta_j)$,
for some $u$ distinct from $y.$ But this means that only the $\omega$ action is enabled
in the test, so all results of $\val^\Omega(\nu \vec z(([u = y]\tau.T[y/x] + \omega) ~|~ \Theta_i))$ 
in this case would have a non-zero $\omega$ component, which would mean that 
$o(\omega)$ would be non-zero as well, contradicting the assumption that $o(\omega) = 0$.
So, we have $s_i \ar{\bar a y} \Theta_i$ for every $i \in \{1,\dots,n\}.$
Let $\Theta'=p_1 \cdot \Theta_1 + \ldots + p_n \cdot \Theta_n.$
Then we have $\Theta \ar{\bar a y} \Theta'$ and
$\nu \vec z(T' ~|~ \Theta) \ar{\tau} \nu \vec z(T'' ~|~ \Theta')$
where $T'' = [y=y]\tau.T[y/x] + \omega$.
The latter transition means that $o \in \val^\Omega(\nu \vec z(T'' ~|~ \Theta')) = Apply^\Omega(T'',\Theta').$ 
We can therefore apply Claim~\ref{eq:claim} to get:
$$
\Theta \ar{\bar a y} \Theta' \bstep{\hat \tau} \Delta
$$
and $o \in Apply^\Omega(T[y/x], \Delta).$
\end{itemize}
\qed
\end{proof}

\begin{lemma}
\label{lm:test2}
If $o\in Apply^\Omega(\sum_{i\in I} \tau.T_i, P)$ then for all
$i \in I$ there are $q_i \in [0,1]$ and $\Delta_i$ with $\sum_{i\in I} q_i = 1$
such that $\interp P \bstep {\hat \tau} \sum_{i\in I} q_i \cdot \Delta_i$
and $o=\sum_{i\in I}q_i\cdot o_i$ for some $o_i\in Apply^\Omega(T_i,\Delta_i).$
\end{lemma}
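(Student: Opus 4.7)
The plan is to reduce to a more general statement over distributions and then proceed by well-founded induction. Specifically, I would first strengthen the claim to: for any distribution $\Theta$ over $\pi_p$-processes and any list of names $\vec z$ containing $fn(T,\Theta) - \Omega$, if $o \in \val^\Omega(\nu \vec z.(\pdist T ~|~ \Theta))$ then there exist $q_i \in [0,1]$ and distributions $\Delta_i$ with $\sum_{i\in I} q_i = 1$, $\Theta \bstep{\hat \tau} \sum_{i\in I} q_i \cdot \Delta_i$, and $o = \sum_{i\in I} q_i \cdot o_i$ for some $o_i \in Apply^\Omega(T_i, \Delta_i)$. The lemma then follows by specialising $\Theta = \interp P$ and $\vec z = fn(T,P) - \Omega$.

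I would prove the strengthened claim by induction on $|\Theta|$ under the well-founded multiset ordering $\prec$ (well-founded because finite $\pi_p$ has no recursion, cf.\ Lemma~\ref{lm:trans-size}). The key observation is that, for every $s \in \supp \Theta$, the state $\nu \vec z.(T ~|~ s)$ is never a deadlock (since $T \ar{\tau}$ via \textbf{Act} and \textbf{Sum}); and because $T$ offers no input/output prefix and $s$ contains no occurrence of $\omega$, all transitions from $\nu \vec z.(T ~|~ s)$ are $\tau$-transitions of exactly one of two kinds: (i) a \emph{commit} move $\nu \vec z.(T ~|~ s) \ar{\tau} \pdist{\nu \vec z.(T_i ~|~ s)}$ for some $i \in I$, coming from $T \ar{\tau} \pdist{T_i}$ through \textbf{Par} and \textbf{Res}; or (ii) a \emph{passthrough} $\nu \vec z.(T ~|~ s) \ar{\tau} \nu \vec z.(\pdist T ~|~ \Phi)$ whenever $s \ar{\tau} \Phi$. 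By Definition~\ref{def:vector-based-results}, $\val^\Omega(\nu \vec z.(T ~|~ s))$ is therefore the convex closure of the union of $\val^\Omega$ over the targets of these two kinds.

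Using the lifted form of $\val^\Omega$ I would then write $o = \sum_{s \in \supp \Theta} \Theta(s) \cdot o_s$ with each $o_s$ a finite convex combination of contributions of types (i) and (ii). A type-(i) contribution at $s$ with choice $i$ already lies in $\val^\Omega(\nu \vec z.(T_i ~|~ s)) = Apply^\Omega(T_i, \pdist s)$ and is assigned to index $i$ with $\Delta_i = \pdist s$. For a type-(ii) contribution via $s \ar{\tau} \Phi$, Lemma~\ref{lm:trans-size} gives $|\Phi| \prec \{|s|\}$, so $|\Phi| \prec |\Theta|$ as well, and the induction hypothesis yields a matching decomposition which we pre-compose with $s \ar{\tau} \Phi$ to obtain a derivation starting from $\pdist s$. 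Aggregating these per-$s$ decompositions, weighted by $\Theta(s)$ and by the inner convex weights, produces the global $q_i$'s and $\Delta_i$'s; Lemma~\ref{lm:bigstep-dist} splices the individual $\bstep{\hat \tau}$-derivations into a single one $\Theta \bstep{\hat \tau} \sum_i q_i \cdot \Delta_i$.

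The main obstacle is bookkeeping on the nested convex combinations: each $o_s$ is already a convex combination of type-(i) and type-(ii) contributions, and when we further combine across $s \in \supp \Theta$ using the weights $\Theta(s)$, care is required to reorganise the sums so that both (a) the outcome $o$ decomposes as $\sum_i q_i \cdot o_i$ with $\sum_i q_i = 1$, and (b) the per-$s$ contributions to each $\Delta_i$ combine to a distribution that fits into the single big-step derivation from $\Theta$. This parallels the argument used for the corresponding converse in the pCSP setting in \cite{Deng08LMCS}; the $\pi$-calculus-specific element is mild, since the restriction $\nu \vec z$ merely blocks all visible actions of $s$ so that the system behaves as if driven purely by $\tau$.
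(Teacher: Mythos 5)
Your proposal is correct and follows essentially the route the paper intends: the paper discharges this lemma by citing Lemma 6.8 of \cite{Deng08LMCS}, and its own detailed proof of the analogous Lemma~\ref{lm:test1}(3) uses exactly your scheme --- generalise to distributions, induct on $|\Theta|$ under the multiset ordering $\prec$, split the $\tau$-moves of $\nu \vec z.(T\,|\,s)$ into commits of the test versus internal moves of $s$, and recombine via convexity and Lemma~\ref{lm:bigstep-dist}. The only cosmetic slip is that $\tau.T_i$ abbreviates $\nu x(x(y).\nil \mid \bar x x.T_i)$, so the commit move actually lands in $\interp{\nu x(\nil \mid T_i)}$ (which need not be a point distribution) rather than $\pdist{T_i}$; this residual wrapper is harmless and the paper elides it as well.
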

\begin{proof}
The proof is similar to the proof of Lemma 6.8 in \cite{Deng08LMCS}.
\qed
\end{proof}

The key to the completeness proof is to find a `characteristic test' for
every formula $\varphi \in \Lcal$ with a certain property.
The construction of these characteristic tests is given in the following lemma. 
Note that unlike in the case of pCSP~\cite{Deng08LMCS}, 
this construction is parameterised by a finite set of names $N$, representing
the set of free names of the process/distribution on which the test applies to. 
This parameter is important for the test to be able to detect output of fresh names. 

\begin{lemma}\label{lm:comp}
For every finite set of names $N$ and every $\varphi \in \Fcal$ such that
$fn(\varphi) \subseteq N$, there exists  a test $T_{\pr N \varphi}$ and $v_\varphi \in [0,1]^\Omega$,
such that
\begin{equation}
\label{eq:comp-ex1}
\Delta \models \varphi \qquad \hbox{ iff } \qquad
 \exists o\in Apply^\Omega(T_{\pr N \varphi}, \Delta): o\leq v_\varphi
\end{equation}
for every $\Delta$ with $fn(\Delta) \subseteq N$, and in case $\varphi\in\Lcal$ we also have
\begin{equation}
\label{eq:comp-ex2}
\Delta \models \varphi \qquad \hbox{ iff } \qquad
 \exists o\in Apply^\Omega(T_{\pr N \varphi}, \Delta): o\geq v_\varphi.
\end{equation}
$T_{\pr N \varphi}$ is called a \emph{characteristic test} of $\varphi$ and $v_\varphi$ its \emph{target value}.
\end{lemma}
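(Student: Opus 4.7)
The plan is to proceed by structural induction on $\varphi$, building $T_{\pr N \varphi}$ and $v_\varphi$ simultaneously while verifying (1), and additionally (2) whenever $\varphi \in \Lcal$. Throughout, I assume the success-action alphabet $\Omega$ is sufficiently large so that every recursive construction can draw fresh $\omega$'s disjoint from those used by any sub-test. Each connective of the logic is matched to one item of Lemma~\ref{lm:test1}.

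For the base cases, $\top$ is handled by $T_{\pr N \top} = \omega$ with $v_\top = \vec{\omega}$, using item~1, and $\Ref X$ by $\sum_{\mu \in X} \mu.\omega$ with $v_\varphi = \vec{0}$, using item~2 (direction (2) is not required because $\Ref X \not\in \Lcal$). For the inductive cases: $\ldia{\bar a x}\psi$ is handled by the guarded test $\omega_1 + a(u).([u=x]\tau.T_\psi + \omega_1)$ with $v_\varphi$ obtained from $v_\psi$ by declaring $\omega_1 \mapsto 0$, invoking item~3 to transport the IH across the diamond; $\ldia{\bar a(x)}\psi$ is analogous via item~4, recursing on $\psi[w/x]$ with $w = new(N)$ and name-set $N \cup \{w\}$; $\varphi_1 \wedge \varphi_2$ is realised by $\frac{1}{2}T_{\pr N {\varphi_1}} \oplus \frac{1}{2}T_{\pr N {\varphi_2}}$ with the sub-tests drawing disjoint success-action alphabets so that the outcome decomposes componentwise (item~6); $\varphi_1 \pch p \varphi_2$ is realised by $\tau.T_{\pr N {\varphi_1}} + \tau.T_{\pr N {\varphi_2}}$ with target $p \cdot v_{\varphi_1} + (1-p)\cdot v_{\varphi_2}$, invoking item~7 and Lemma~\ref{lm:test2} for the two directions respectively.

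The main obstacle is the bound-input case $\ldia{a(x)}\psi$, where the semantics quantifies universally over the input name $z$ while a finite test can only inspect finitely many alternatives. The key observation is that because $fn(\Delta) \cup fn(\varphi) \subseteq N$, a test whose free names lie in $N$ together with a single extra fresh name $w$ cannot distinguish inputs on any two names both outside $N$; such inputs are interchangeable up to $\alpha$-renaming. It therefore suffices to combine, via probabilistic choice with pairwise disjoint success actions $\omega_z$, the input-simulating tests $\omega_z + \bar a z.T_{\pr{N \cup \{w\}}{\psi[z/x]}}$ from item~5 over the finite range $z \in N \cup \{w\}$, with $w = new(N)$. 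A subsidiary delicate point, important for direction (2), is that $v_\varphi$ must retain a strictly positive component on some non-guard $\omega$, for otherwise the ``escape via $\omega_1$'' branch inside a diamond test could spuriously witness $o \geq v_\varphi$ without implying the intended modal property; this positivity is inherited from the base case $\top$ (the only base of $\Lcal$) and preserved by every inductive step of the construction.
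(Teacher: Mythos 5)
Your construction agrees with the paper's on every case except probabilistic disjunction, and it is precisely there that the proposal has a genuine gap. For $\varphi = \varphi_1 \pch p \varphi_2$ you take $T = \tau.T_{\pr N {\varphi_1}} + \tau.T_{\pr N {\varphi_2}}$ with target $p\cdot v_{\varphi_1} + (1-p)\cdot v_{\varphi_2}$. In the ``only if'' direction of (\ref{eq:comp-ex1}), Lemma~\ref{lm:test2} only yields a decomposition $o = \sum_i q_i\cdot o_i$ with $\interp{\;\cdot\;} \dar{\hat\tau} \sum_i q_i\cdot\Delta_i$ for \emph{some} weights $q_i$ chosen by the process, and nothing in your test forces $q_i = p_i$. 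This is fatal once $\Ref{X}$ is in play, because its target value is $\vec{0}$ and so contributes no lower bound that could pin the weights down. Concretely, take $\varphi = \Ref{X} \pch{\frac{1}{2}} \ldia{\bar a x}\top$ with $X$ non-empty and $\Delta = \pdist{\nil}$. Then $\Delta \not\models \varphi$, since $\nil$ has no $\tau$-moves and cannot contribute the $\frac{1}{2}$-portion satisfying $\ldia{\bar a x}\top$. Yet your test can resolve its initial nondeterministic choice entirely in favour of the $\Ref{X}$ branch ($q_1 = 1$), where the deadlocked composition yields outcome $\vec{0} \leq \frac{1}{2}\vec{\omega_3} = v_\varphi$; so the right-hand side of (\ref{eq:comp-ex1}) holds while the left fails. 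The paper's remedy is to test each $\varphi_i$ with $T'_i := T_i \pch{\frac{1}{2}} \omega_i$ for fresh, pairwise-distinct success actions $\omega_i$, and to set $v_\varphi := \sum_i p_i(\frac{1}{2}v_i + \frac{1}{2}\vec{\omega_i})$: since every outcome of $T'_i$ has $\omega_i$-component exactly $\frac{1}{2}$, the inequality $o \leq v_\varphi$ forces $\frac{1}{2}q_i \leq \frac{1}{2}p_i$ for all $i$, hence $q_i = p_i$, and the argument goes through. Your proposal is missing exactly this marker device.

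Two smaller remarks. First, your closing claim that (\ref{eq:comp-ex2}) follows because ``$v_\varphi$ retains a strictly positive component'' is not the right invariant; the paper's argument is that for $\varphi \in \Lcal$ the components of $v_\varphi$ sum to exactly $1$, as do those of every outcome $o \in Apply^\Omega(T_{\pr N \varphi},\Delta)$, whence $o \leq v_\varphi$, $o \geq v_\varphi$ and $o = v_\varphi$ are all equivalent. Second, the remaining cases ($\top$, $\Ref{X}$, the three modalities, and conjunction) match the paper's constructions, and your treatment of the bound-input case --- finitely many instantiations over $N \cup \{new(N)\}$ combined by probabilistic choice with disjoint success actions, with arbitrary input names recovered by renaming --- is exactly the paper's argument via Lemma~\ref{lm:rename} and Lemma~\ref{lm:sat-renaming}.
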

\begin{proof}
The characteristic tests and target values are defined by induction on $\varphi$:
\begin{itemize}
\item $\varphi = \top$: Let $T_{\pr N \varphi} := \omega$ for some $\omega\in\Omega$ and $v_\varphi:=\vec{\omega}$.

\item $\varphi = \Ref{X}$ with $X=\{\mu_1,...,\mu_n\}$. 
Let $T_\varphi:=\mu_1.\omega+...+\mu_n.\omega$ for some $\omega\in\Omega$, and $v_\varphi=\vec{0}$.

\item $\varphi = \ldia{\bar a x} \psi$: Let
$T_{\pr N \varphi} :=\omega+ a(y).([y=x]\tau.T_{\pr N \psi}+ \omega)$
for some $y \not \in fn(T_{\pr N \psi})$, where $\omega\in\Omega$ does not occur in $T_{\pr N \psi}$ and $v_\varphi:=v_\psi$. 

\item $\varphi = \ldia{\bar a (x)}\psi$:  
Let $z = new(N)$ and $N' = N \cup \{z\}.$ Without loss of generality,
we can assume that $x = z$ (since we consider terms equivalent modulo
$\alpha$-conversion). 
Then let
$
T_{\pr N \varphi} := \omega+a(x).([x \not = N] \tau.T_{\pr {N'} \psi}+\omega)
$, where $\omega\in\Omega$ does not occur in $T_{\pr {N'} \psi}$ and $v_\varphi:=v_\psi$. 

\item $\varphi = \ldia{a(x)}\psi$:
Let $z = new(N)$ and $N' = N \cup \{z\}.$
Let $p_w \in (0,1]$ for $w \in N'$ be chosen arbitrarily such that
$\sum_{w \in N'} p_w = 1.$
Then let
$$
T_{\pr N \varphi} := 
\psum_{w \in N'} p_w \cdot (\omega_w+\bar a w.T_{\pr {N'} {\psi[w/x]}})
$$
where $\omega_w$ does not occur in $T_{\pr {N'} {\psi[w/x]}}$ for each
$w\in N'$, and $\omega_{w_1}\not=\omega_{w_2}$ if $w_1\not=w_2$. We let
 $v_\varphi:= \sum_{w\in N'}p_w\cdot v_{\psi[w/x]}$.

\item $\varphi = \pand_{i \in I} \varphi_i$ where $I$ is a finite and non-empty
index set. 
Choose an $\Omega$-disjoint family $(T_{\pr
  N {\varphi_i}}, v_{\varphi_i})_{i\in I}$ of characteristic tests and
target values. Let $p_i \in (0,1]$ for $i \in I$ be chose arbitrarily such that
$\sum_{i \in I} p_i = 1.$
Then let
$$T_{\pr N \varphi} := \psum_{i\in I} p_i \cdot T_{\pr N {\varphi_i}}$$
and $v_\varphi:=\sum_{i\in I}p_i\cdot v_{\varphi_i}$.

\item $\varphi = \psum_{i\in I} p_i \cdot \varphi_i.$ Choose an $\Omega$-disjoint family $(T_i,v_i)_{i\in I}$ of characteristic tests $T_i$ with target values $v_i$ for each $\varphi_i$, such that there are distinct success actions $\omega_i$ for $i\in I$ that do not occur in any of those tests. Let $T'_i:=T_i\pch{\frac{1}{2}}\omega_i$ and $v'_i:=\frac{1}{2}v_i+\frac{1}{2}\vec{\omega_i}$. Note that for all $i\in I$ also $T'_i$ is a characteristic test of $\varphi_i$ with target value $v'_i$. Let
 $T_{\pr N \varphi} := \sum_{i\in I} \tau.T_{\pr N {\varphi_i}}$ and $v_\varphi:=\sum_{i\in I}p_i\cdot v'_i$.
\end{itemize}

We now prove (\ref{eq:comp-ex1}) above by induction on $\varphi$: 
\begin{itemize}
\item $\varphi = \top$: obvious.

\item $\varphi = \Ref{X}$. Suppose $\Delta\models\varphi$. Then there is a $\Delta'$ with $\Delta\dar{\hat{\tau}}\Delta'$ and $\Delta'\not\barb{X}$. By Lemma~\ref{lm:test1}(2), $\vec{0}\in Apply^\Omega(T_{\pr N \varphi},\Delta)$.

Now suppose $\exists o\in Apply^\Omega(T_{\pr N \varphi},\Delta): o\leq v_\varphi$. This means $o=\vec{0}$, so by Lemma~\ref{lm:test1}(2) there is a $\Delta'$ with $\Delta\dar{\tau}\Delta'$ and $\Delta'\not\barb{X}$. Hence $\Delta\models\varphi$.

\item $\varphi = \ldia {\bar a x} \phi:$
Suppose $\Delta \models \varphi.$
Then $\Delta \bstep{\bar a x } \Delta'$ and $\Delta' \models \phi.$
By the induction hypothesis, $\exists o\in Apply^\Omega(T_{\pr N 
  \phi}, \Delta'): o\leq v_\phi$.
By Lemma~\ref{lm:test1}(3), this means $o\in
Apply^\Omega(\omega+a(y).([y=x]\tau.T_{\pr N \phi}+\omega), \Delta)$. Therefore, we
have $o\in
Apply^\Omega(T_{\pr N \varphi}, \Delta)$ and $o\leq v_\varphi$.

Conversely, suppose $\exists o\in Apply^\Omega(T_{\pr N \varphi},
\Delta): o\leq v_\varphi$. This implies $o(\omega)=0$.
By Lemma~\ref{lm:test1}(3), this means
$
\Delta \bstep{\bar a y} \Delta'
$
and $o\in Apply^\Omega(T_{\pr N \phi}, \Delta')$.
By the induction hypothesis, we have 
$\Delta' \models \phi$, and therefore, by Definition~\ref{def:sat}, 
$\Delta \models \varphi.$

\item $\varphi = \ldia {\bar a(x)} \phi:$
This is similar to the previous case. The only difference is that
the guard $[x \not = N]$ makes sure that it is the bound output transition
that is enabled from $\Delta$, so we use Lemma~\ref{lm:test1}(4) in place of Lemma~\ref{lm:test1}(3).

\item $\varphi = \ldia {a(x)} \phi:$
Suppose $\Delta \models \varphi.$ Then for every name $w$, there exist $\Delta_1$, $\Delta_2$
and $\Delta'$ such that:
\begin{equation}
\label{eq:comp1}
\Delta \bstep{\hat \tau} \Delta_1 \sstep{a(x)} \Delta_2,
\qquad
\Delta_2[w/x] \bstep{\hat \tau} \Delta', 
\qquad
\mbox{ and }
\Delta' \models \phi[w/x].
\end{equation}
In particular, (\ref{eq:comp1}) holds for any $w \in N'$, where $N'=N\cup\{new(N)\}$.
By the induction hypothesis, $\exists o_w\in Apply^\Omega(T_{\pr {N'}
  {\phi[w/x]}}): o_w\leq v_{\pr {N'} {\phi[w/x]}}$, hence
by Lemma~\ref{lm:test1}(5), 
$$o_w\in Apply^\Omega(\omega+\bar a w.T_{\pr {N'} {\phi[w/x]}}, \Delta)$$ 
for each $w \in N'.$ Then by Lemma~\ref{lm:test1}(6), we have
$$o\in Apply^\Omega(T_{\pr {N} \varphi}, \Delta))$$
where $o=\sum_{w\in N'}p_w\cdot o_w ~\leq~ o_\varphi$.

Suppose $\exists o\in Apply^\Omega(T_{\pr {N} \varphi}, \Delta): o\leq
v_\varphi$.
Then by Lemma~\ref{lm:test1}(6), we have 
$o=\sum_{w\in N'}p_w\cdot o_w$ for some $o_w$ with
$$o_w\in Apply^\Omega(\omega+\bar{a}w.T_{\pr {N'} {\phi[w/x]}}, \Delta)$$
The latter means, by Lemma~\ref{lm:test1}(5), for each $w \in N'$, there are 
$\Delta_1$, $\Delta_2$ and $\Delta'$ such that
\begin{equation}
\label{eq:comp2}
\Delta \bstep{\hat \tau} \Delta_1 \sstep {a(x)} \Delta_2,
\qquad
\Delta_2[w/x] \bstep{\hat \tau} \Delta',
\end{equation}
and 
\begin{equation}\label{eq:com3}
o_w\in Apply^\Omega(T_{\pr{N'} {\phi[w/x]}}, \Delta').
\end{equation}
Since $\sum_{w\in N'}p_w\cdot o_w = o \leq v_\varphi = \sum_{w\in
  N'}p_w\cdot v_{\phi[w/x]}$, we have 
\begin{equation}\label{eq:com4}
o_w \leq v_{\phi[w/x]}
\end{equation} for
each $w\in N'$. Otherwise, suppose $o_w(\omega) >
v_{\phi[w/x]}(\omega)$ for some $\omega\in\Omega$. We would have
$o(\omega)=p_w\cdot o_w(\omega) > p_w\cdot v_{\phi[w/x]}(\omega) =
v_\varphi(w)$, a contradiction to $o\leq v_\varphi$.
By (\ref{eq:com3}), (\ref{eq:com4}), and the induction hypothesis, we have
\begin{equation}
\label{eq:comp3}
\Delta' \models \phi[w/x]. 
\end{equation}

To show $\Delta \models \varphi$, we need to show for every $w$, there exist
$\Delta_1$, $\Delta_2$ and $\Delta'$ satisfying (\ref{eq:comp2}) and (\ref{eq:comp3})
above. We have shown this for $w \in N'$. For the case where $w \not \in N'$, 
this is obtained from the case where $x = z$ via the renaming $[w/z]$: Recall that $z \not \in N$,
so $z \not \in fn(\Delta_2)$ and $z\not \in fn(\phi)$. Therefore, we have, from (\ref{eq:comp2}) 
and Lemma~\ref{lm:rename} (2),
$$
\Delta_2[z/x][w/z] = \Delta_2[w/x] \bstep {\hat\tau} \Delta'[w/z] 
$$ 
and from (\ref{eq:comp3}) and Lemma~\ref{lm:sat-renaming}, we have
$\Delta'[w/z] \models \phi[w/x] = \phi[z/x][w/z].$

\item $\varphi = \pand_{i \in I} \varphi_i:$
Suppose $\Delta \models \varphi$. Then $\Delta \models \phi_i$ for all $i \in I$,
and by the induction hypothesis, $o_i\in Apply^\Omega(T_{\pr N
  {\phi_i}}, \Delta): o_i\leq v_{\varphi_i}$ and by Lemma~\ref{lm:test1}(6)
$$\sum_{i\in I}p_i\cdot o_i \in Apply^\Omega(T_{\pr N \varphi},
\Delta)$$ and $\sum_{i\in I}p_i\cdot o_i \leq \sum_{i\in I}p_i\cdot v_{\varphi_i}=v_\varphi$.

Suppose $\exists o\in Apply(T_{\pr N \varphi}, \Delta): o\leq
v_\varphi$ Then by Lemma~\ref{lm:test1}(6), $o=\sum_{i\in I}p_i\cdot
o_i$ with
$$o_i\in Apply(T_{\pr N {\phi_i}}, \Delta)$$ for each $i\in I$.
As in the last case, we see from $\sum_{i\in I}p_i\cdot
o_i \leq \sum_{i\in I}p_i\cdot v_{\varphi_i} 
$ that $o_i\leq v_{\varphi_i}$ for each $i\in I$.
By induction,  we have $\Delta \models \phi_i$, therefore, by Definition~\ref{def:sat},
$\Delta \models \varphi.$

\item $\varphi = \psum_{i\in I} p_i \cdot \varphi_i:$
Suppose $\Delta \models \varphi$. Then $\Delta \bstep{\hat \tau} \sum_{i\in I} p_i \cdot \Delta_i$
and $\Delta_i \models \phi_i.$ By the induction hypothesis, 
$$
\exists o_i\in Apply^\Omega(T_i, \Delta_i): o_i\leq v_i.
$$ Hence, there are $o'_i\in Apply^\Omega(T'_i, \Delta_i)$ with
$o'_i\leq v'_i$. 
Thus by Lemma~\ref{lm:test1}(7), $o:=\sum_{i\in I}p_i\cdot o'_i \in
Apply^\Omega(T_{\pr N \varphi}, \Delta)$, and $o\leq v_\varphi$.

Conversely, suppose $\exists o\in Apply(T_{\pr N \varphi}, \Delta):
o\leq v_\varphi$.
Then by Lemma~\ref{lm:test2}, there are $q_i$ and $\Delta_i$, for all $i \in I$,
such that $\sum_{i\in I} q_i = 1$ and $\Delta \bstep{\hat \tau} \sum_{i\in I} q_i \cdot \Delta_i$
and $o=\sum_{i\in I}q_i\cdot o'_i$ for some
 $o'_i\in Apply^\Omega(T'_i, \Delta_i)$.
Now $o'_i(\omega_i)=v'_i(\omega_i)=\frac{1}{2}$ for each $i\in
I$. Using that $(T_i)_{i\in I}$ is an $\Omega$-disjoint family of
tests,
$\frac{1}{2}q_i = q_i o'_i(\omega_i) = o(\omega_i) \leq
v_\varphi(\omega_i)=p_i v'_i(\omega_i)=\frac{1}{2}p_i$. As $\sum_{i\in
I}q_i = \sum_{i\in I}p_i =1$, it must be that $q_i=p_i$ for all $i\in
I$. Exactly as in the previous case we obtain $o'_i\leq v'_i$ for all
$i\in I$. Given that $T'_i=T_i \pch{\frac{1}{2}}\omega_i$, using
Lemma~\ref{lm:test1}(6), it must be that
$o'=\frac{1}{2}o_i+\frac{1}{2}\vec{\omega_i}$ for some $o_i\in
Apply^\Omega(T_i,\Delta_i)$ with $o_i\leq v_i$.
By induction,
$\Delta_i \models \phi_i$ for all $i\in I$,  Therefore, by Definition~\ref{def:sat}, $\Delta \models \varphi.$
\end{itemize}

In case $\varphi\in\Lcal$, the formula cannot be of the form
$\Ref{X}$. Then it is easy to show that
$\sum_{\omega\in\Omega}v_\varphi(\omega)=1$ and for all $\Delta$ and
$o\in Apply^\Omega(T_\varphi,\Delta)$ we have
$\sum_{w\in\Omega}o(\omega)=1$. Therefore, $o\leq v_\varphi$ iff
$o\geq v_\varphi$ iff $o=v_\varphi$, yielding (\ref{eq:comp-ex2}).
\qed
\end{proof}
Completeness of $\pmay^\Omega$ and $\pmust^\Omega$, and hence also
$\pmay$ and $\pmust$ by Theorem~\ref{thm:modal-sim} and Theorem~\ref{thm:multi-uni}, 
follows from Lemma~\ref{lm:comp}. 

\begin{theorem}\label{thm:multi-logic}
\begin{enumerate}
\item
If $P \pmay^\Omega Q$ then $P \lleq Q.$
\item
If $P \pmust^\Omega Q$ then $P \fleq Q.$
\end{enumerate}
\end{theorem}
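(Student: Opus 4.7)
The plan is to reduce both implications to a direct application of the characteristic test construction of Lemma~\ref{lm:comp}, exploiting the fact that the definitions of $\simrel_{Ho}$ and $\simrel_{Sm}$ mirror the ``$\geq v_\varphi$'' and ``$\leq v_\varphi$'' sides of that lemma respectively.

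For part~(1), I would fix any $\psi \in \Lcal$ and assume $P \models \psi$ (that is, $\interp P \models \psi$); the goal is $\interp Q \models \psi$. Choose a finite name set $N \supseteq fn(P) \cup fn(Q) \cup fn(\psi)$ and let $T_{\pr N \psi}$ be the characteristic test with target value $v_\psi$ supplied by Lemma~\ref{lm:comp}. Since $\psi \in \Lcal$, equivalence (\ref{eq:comp-ex2}) gives some $o \in Apply^\Omega(T_{\pr N \psi}, P)$ with $o \geq v_\psi$. The assumption $P \pmay^\Omega Q$ then yields, via the Hoare preorder clause, some $o' \in Apply^\Omega(T_{\pr N \psi}, Q)$ with $o \leq o'$, hence $o' \geq v_\psi$. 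Applying (\ref{eq:comp-ex2}) in the opposite direction (note $fn(\interp Q) \subseteq N$) delivers $\interp Q \models \psi$, as required.

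For part~(2), I would argue symmetrically using the ``$\leq v_\varphi$'' direction of Lemma~\ref{lm:comp}. Fix $\varphi \in \Fcal$ and assume $Q \models \varphi$; the goal is $P \models \varphi$. Take $N \supseteq fn(P) \cup fn(Q) \cup fn(\varphi)$ and form $T_{\pr N \varphi}$ with target value $v_\varphi$. Equivalence (\ref{eq:comp-ex1}) yields some $o \in Apply^\Omega(T_{\pr N \varphi}, Q)$ with $o \leq v_\varphi$. From $P \pmust^\Omega Q$ and the Smyth preorder clause, we extract some $o' \in Apply^\Omega(T_{\pr N \varphi}, P)$ with $o' \leq o \leq v_\varphi$. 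Applying (\ref{eq:comp-ex1}) once more gives $\interp P \models \varphi$.

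No step here is substantial: Lemma~\ref{lm:comp} has already done all the heavy lifting, and the two parts differ only in which direction of the characteristic-test equivalence is invoked, matched to the corresponding direction of the Hoare/Smyth outcome order. The only point requiring attention — essentially a bookkeeping matter — is to ensure that $N$ includes the free names of both $P$ and $Q$ so that the hypothesis ``$fn(\Delta) \subseteq N$'' of Lemma~\ref{lm:comp} is met on both sides of the implication. Finally, the stated combination with Theorem~\ref{thm:modal-sim} and Theorem~\ref{thm:multi-uni} (as remarked in the paragraph just before the theorem) then transfers the result from $\pmay^\Omega$/$\pmust^\Omega$ to $\pmay$/$\pmust$, closing the chains of inclusions announced in the introduction.
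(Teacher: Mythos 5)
Your proposal is correct and follows essentially the same route as the paper: both parts are immediate applications of Lemma~\ref{lm:comp}, with the Hoare clause matched to equivalence (\ref{eq:comp-ex2}) for $\Lcal$ and the Smyth clause matched to (\ref{eq:comp-ex1}) for $\Fcal$. Your remark that $N$ should also contain $fn(Q)$ is a sensible bookkeeping refinement of the paper's choice $N = fn(P,\psi)$, but does not change the argument.
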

\begin{proof}
Suppose $P \pmay^\Omega Q$ and $\interp P \models \psi$ for some $\psi \in \Lcal.$
Let $N = fn(P, \psi)$ and let $T_{\pr N \psi}$ be a characteristic test of $\psi$ with target value $v_\psi$.
Then by Lemma~\ref{lm:comp}, we have 
$$\exists o\in Apply^\Omega(T_{\pr N \psi}, \interp P): o\geq v_\psi.$$
But since $P \pmay^\Omega Q$, this means 
$\exists o'\in Apply^\Omega(T_{\pr N \psi}, \interp Q): o\leq o'$, and thus $o'\geq v_\psi$.
So again, by Lemma~\ref{lm:comp},
we have $\interp Q \models \psi$. 

The case for must preorder is similar, using the Smyth preorder.
\qed
\end{proof}

\begin{theorem}
\begin{enumerate}
\item
If $P \pmay Q$ then $P \simpreo Q.$
\item
If $P \pmust Q$ then $P \failsimpreo Q.$
\end{enumerate}
\end{theorem}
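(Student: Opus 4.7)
The plan is to derive this result simply by chaining together the three inclusion theorems that have already been established, namely Theorem~\ref{thm:multi-uni}, Theorem~\ref{thm:multi-logic}, and Theorem~\ref{thm:modal-sim}. Indeed, the abstract roadmap stated in the introduction reads
\[
\simpreo ~ \subseteq ~ \pmay ~ = ~ \pmay^\Omega ~ \subseteq ~ \lleq ~ \subseteq ~ \simpreo,
\]
and at this point in the paper every link in this chain except the closing of the loop has been proved. The present theorem is exactly the observation that the chain closes.

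Concretely, for the first clause I would argue as follows. Assume $P \pmay Q$. By Theorem~\ref{thm:multi-uni}(1) the scalar and vector-based may preorders coincide on $\pi_p$-processes, hence $P \pmay^\Omega Q$. Theorem~\ref{thm:multi-logic}(1) then yields $P \lleq Q$, and finally Theorem~\ref{thm:modal-sim}(1) gives $P \simpreo Q$. The second clause is entirely analogous, using part (2) of each of the three cited theorems: $P \pmust Q$ implies $P \pmust^\Omega Q$, which implies $P \fleq Q$, which implies $P \failsimpreo Q$.

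There is no genuine obstacle at this stage; the only thing worth double-checking before submitting the proof is that the direction of each inclusion matches the direction of the preorder arrows we are composing. In particular, $\fleq$ is defined so that $P \fleq Q$ means $\interp Q \models \varphi$ implies $\interp P \models \varphi$ for all $\varphi \in \Fcal$, which is consistent with the direction of $\failsimpreo$ (where $Q$'s distribution is simulated by an internal move of $P$'s distribution). Given that the three feeder theorems have been carefully stated in compatible directions, the proof reduces to one line of composition for each clause, and no further lemmas are needed.
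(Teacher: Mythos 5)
Your proposal is correct and matches the paper's own (implicitly stated) argument: the paper notes just before Theorem~\ref{thm:multi-logic} that completeness of $\pmay$ and $\pmust$ follows from Lemma~\ref{lm:comp} together with Theorem~\ref{thm:modal-sim} and Theorem~\ref{thm:multi-uni}, which is exactly the chain $\pmay \;=\; \pmay^\Omega \;\subseteq\; \lleq \;\subseteq\; \simpreo$ (and its must/failure analogue) that you compose. Your check of the directions of $\lleq$ and $\fleq$ is also the right thing to verify, and it goes through as you state.
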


\section{Related and future work}

There have been a number of previous works on probabilistic extensions of
the $\pi$-calculus by Palamidessi et. al. 
\cite{HerescuP00,Chatzikokolakis07,Norman09}. One distinction between
our formulation with that of Palamidessi et. al. is the fact that we
consider an interpretation of probabilistic summation as distribution over
state-based processes, whereas in those works, a process like $s {\pch p} t$
is considered as a proper process, which can evolve into the distribution
$p\cdot \pdist s + (1-p) \cdot \pdist t$ via an internal transition. We could encode
this behaviour by a simple prefixing with the $\tau$ prefix. It would be interesting
to see whether similar characterisations could be obtained for this restricted
calculus. As far as we know, there are no existing works in the literature that give 
characterisations of the may- and must-testing preorders for the probabilistic $\pi$-calculus.

We structure our completeness proofs for the simulation preorders 
along the line of the proofs of similar characterisations 
of simulation preorders for pCSP~\cite{Deng07ENTCS,Deng08LMCS}. 
The name-passing feature of the $\pi$-calculus, however, gives rise to
several complications not encountered in pCSP, and requires new
techniques to deal with. In particular, due to the possibility
of scope extrusion and close communication, the congruence properties
of (failure) simulation is proved using an adaptation of the up-to techniques~\cite{Sangiorgi98MSCS}. 

The immediate future work is to consider replication/recursion.  There
is a well-known problem with handling possible divergence; some ideas
developed in \cite{Deng09CONCUR,Boreale95IC} might be useful for
studying the semantics of $\pi_p$ as well.

\paragraph{Acknowledgment} The second author is supported by the Australian Research
Council Discovery Project DP110103173. 
Part of this work was done when the second author was visiting NICTA Kensington Lab in 2009;
he would like to thank NICTA for the support he received during his visit.


\end{document}